\documentclass[a4paper,UKenglish,cleveref,autoref,thm-restate]{lipics-v2021}

\pdfoutput=1
\nolinenumbers
\hideLIPIcs

\usepackage[most]{tcolorbox}

\usepackage[T1]{fontenc}
\usepackage{float}
\newenvironment{problem}[1]{%
  \refstepcounter{theorem}%
  \begin{tcolorbox}[
    enhanced,
    colback=white,
    colframe=black,
    boxrule=0.5pt,
    arc=4pt,
    left=6pt, right=6pt,
    top=6pt, bottom=6pt,
    before upper=\textbf{Problem \thetheorem. #1}\par\medskip
  ]
}{%
  \end{tcolorbox}
}

\DeclareMathOperator{\pw}{pw}
\DeclareMathOperator{\Ori}{Ori}

\usepackage{listings}
\usepackage{xcolor}

\title{Edge Geography is XNLP-hard for Pathwidth and in XP for Tree-Partition Width}
\titlerunning{Edge Geography Parameterized by Width}

\author{Thobias Kvalvik H{\o}ivik}
{Department of Computer Science, Electrical Engineering and Mathematical Sciences, Western Norway University of Applied Sciences, Førde, Norway}
{thobiashoivik@gmail.com}
{https://orcid.org/0009-0008-7639-5666} 
{}

\author{Erlend Raa V{\aa}gset}
{Department of Computer Science, Electrical Engineering and Mathematical Sciences, Western Norway University of Applied Sciences, Førde, Norway}
{Erlend.Raa.Vagset@hvl.no}
{https://orcid.org/0000-0003-2289-2268}
{}

\authorrunning{T. K. H{\o}ivik and E. R. V{\aa}gset}
\Copyright{Thobias Kvalvik H{\o}ivik and Erlend Raa V{\aa}gset}

\ccsdesc{Theory of computation~Parameterized complexity and exact algorithms}
\keywords{Geography games, graph games, 
pathwidth, parameterized complexity, XNLP-hardness, 
PSPACE-complete games, tree-partition width}

\acknowledgements{We thank the anonymous reviewers for their careful reading and for their thoughtful and constructive feedback, which helped us improve the exposition and clarify several aspects of the paper.}

\begin{document}
\maketitle

\begin{abstract}
\textsc{Directed Edge Geography} and \textsc{Undirected Edge Geography} are classical
\textsc{PSPACE}-complete two-player graph games in which players alternately
make moves along edges, deleting each one after use; the first player unable to
move loses. We prove that both problems are \textsc{XNLP}-hard when
parameterized by pathwidth, addressing a question raised by Bodlaender over 30
years ago. On the positive side, we observe that \textsc{Directed Edge
Geography} is fixed-parameter tractable when parameterized by treewidth and
maximum degree. We also prove that both problems are in \textsc{XP} on simple graphs
when parameterized by tree-partition width. These results develop modern
lower-bound and decomposition-based algorithmic methods for width-based
questions in \textsc{PSPACE}-complete graph games.
\end{abstract}

\paragraph*{Conference version.}
A shortened version of this paper has been accepted for presentation at ESA 2026.

\section{Introduction}
Many natural computational problems lie beyond NP, yet their parameterized
complexity is still relatively poorly understood~\cite{deHaanSzeider2017BeyondParaNP,BackstromJonsson2011EqualPlanning}.
Among them, \textsc{PSPACE}-hard problems are especially challenging, since
their complexity is often driven by long sequences of interacting choices and
evolving states. 
Such phenomena arise in puzzles such as \textsc{Sokoban} and
\textsc{Rush Hour}~\cite{culberson1998sokoban,flake2002rushhour}, in
generalized games such as \textsc{Chess} and \textsc{Go}~\cite{fraenkel1981chess,lichtenstein1980go},
in motion planning~\cite{hopcroft1984warehouseman,solovey2015unlabeled}, in
task and classical planning~\cite{vegabrown2020tamp,bylander1994strips}, and in
temporal reasoning~\cite{sistla1985ltl}. In this paper, we study
\textsc{Directed Edge Geography} and \textsc{Undirected Edge Geography},
classical \textsc{PSPACE}-complete graph
games~\cite{fraenkel1993geography,FraenkelScheinermanUllman1993} that offer a
natural setting in which to ask what can and cannot be captured by structural
width parameters.

We are not the first to study \textsc{PSPACE}-hard problems from a
parameterized perspective. Bäckström and Jonsson~\cite{BackstromJonsson2011EqualPlanning}
develop finer complexity analyses for planning; Mouawad et
al.~\cite{MouawadNishimuraRamanSimjourSuzuki2017Reconfiguration}
study natural process parameters for reconfiguration; and Bonnet et
al.~\cite{BonnetGaspersLambilliotteRuemmeleSaffidine2017PositionalGames}
analyze short winning-strategy problems in games. 
A particularly relevant line of study for us is quantified reasoning.
Chen~\cite{Chen2004QCSPTreewidth} gives an early tractability result for the
\emph{quantified constraint satisfaction problem} (QCSP) under bounded
treewidth, and Pan and Vardi~\cite{PanVardi2006PSPACE} show that for
\emph{quantified Boolean formulas} (QBF), width remains useful when paired with
additional control over alternation. Atserias and
Oliva~\cite{AtseriasOliva2014BoundedWidthQBF}, however, show that ordinary graph
width alone can still be too weak, proving \textsc{PSPACE}-completeness even at
constant pathwidth. This in turn motivates the prefix-aware viewpoint of Eiben,
Ganian, and Ordyniak~\cite{EibenGanianOrdyniak2020MindThePrefix}, who introduce
decomposition parameters that reflect the dependency structure induced by the
quantifier prefix.

In contrast to quantified reasoning, where graph structure and dependency
structure need not align, \textsc{Edge Geography} is played directly on a
graph. The issue is therefore not representational mismatch, but whether graph
width alone can control an evolving game state. Bodlaender showed that
stronger restrictions can make the problem easier. In particular,
\textsc{Edge Generalized Geography} is solvable in linear time on graphs of
bounded treewidth and bounded maximum degree, and on directed graphs whose
underlying undirected graph is a cactus~\cite{bodlaender1993}. However, these
results do not explain what width bounds alone imply.

We address this question from both a hardness and an algorithmic perspective.
For hardness, we study pathwidth and related linear-width parameters, where
\textsc{XNLP} is the relevant complexity
framework~\cite{elberfeld2012space,elberfeld2015spacecircuit,bodlaender2024xnlp,bodlaender2025linearxnlp}.
For algorithms, we consider rooted tree partitions as a stronger decomposition
model~\cite{wood2009treepartition}. Our main results are as follows:

\begin{enumerate}
    \item As our main result, we prove that \textsc{Directed Edge Geography} is
    \textsc{XNLP}-hard when parameterized by pathwidth. By a
    parameter-preserving reduction to the undirected setting, the same also
    holds for \textsc{Undirected Edge Geography}. Using this transfer together with
    Bodlaender's bounded-treewidth-and-degree algorithm for
    \textsc{Edge Generalized Geography}~\cite{bodlaender1993}, we also obtain
    that \textsc{Directed Edge Geography} is fixed-parameter tractable when
    parameterized by treewidth together with maximum degree.

    \item On the positive side, we prove that \textsc{Directed Edge Geography}
    and \textsc{Undirected Edge Geography} on simple graphs, given together
    with a rooted tree partition of bounded width, are solvable in
    \textsc{XP}. Combined with the known XP algorithm for computing such
    decompositions, this yields XP algorithms parameterized by
    tree-partition width. We also implemented a prototype of the algorithm for
    correctness validation against a minimax solver; details are summarized in
    Appendix~\ref{sec:code}.
\end{enumerate}

\begin{remark*}
    Due to space constraints, Sections~3 and~5 are presented here in abbreviated form, with their full technical developments deferred to Appendices~A and~B, respectively. The shorter transfer argument of Section~4 is included in full.
\end{remark*}

\section{Preliminaries}

We briefly introduce key definitions, problems and notation. Throughout, we assume all graphs to be finite and simple and we may use the shorthand $[k] := \{1,\ldots,k\}.$

\subsection{Edge Geography}

\begin{definition}[Edge Geography]
An instance of \textsc{Edge Geography} consists of a graph $G=(V,E)$ together
with a designated start vertex $s\in V$, where $G$ is either directed or
undirected. Two players alternately move a token, initially placed at $s$. If the token is
currently at $v$, a move consists of selecting an unused edge from $v$ to some
vertex $u$, moving the token to $u$, and deleting that edge. In the directed
variant, the chosen edge must be an outgoing edge $(v,u)\in E$; in the
undirected variant, it must be an incident edge $\{v,u\}\in E$. The first
player unable to move loses.
\end{definition}

\begin{figure}[ht]
    \centering
    \includegraphics[width=\textwidth]{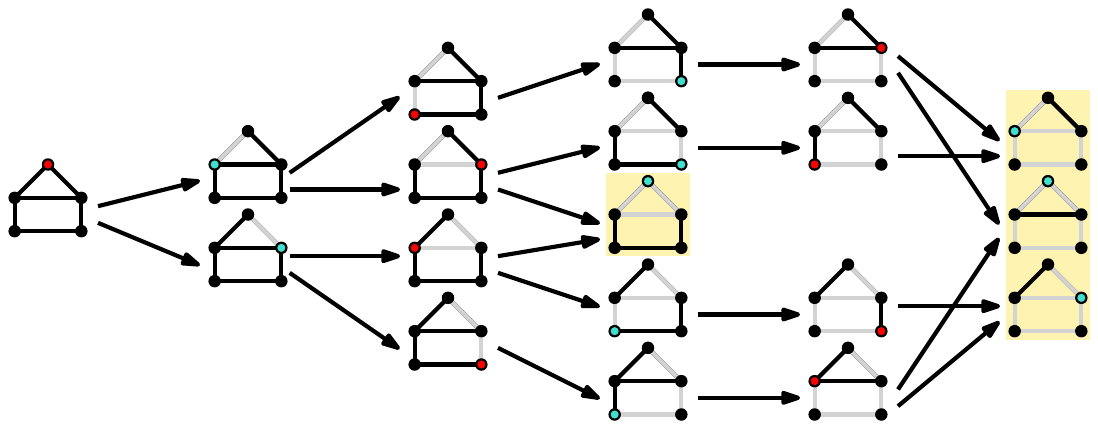}
    \caption{All possible states for a given geography game. This game DAG shows that it is the turquoise player that always becomes ``stuck'' and hence will always lose the game for this particular graph and initial position.}
    \label{fig:geographygame}
\end{figure}

\newpage
\begin{problem}{Directed/Undirected Edge Geography} 
\textbf{Input:} A directed or undirected graph $G$ and a start vertex $s\in V(G)$.\\
\textbf{Question:} Does Player~1 have a winning strategy?
\end{problem}

\noindent
\textbf{Move/edge notation. } 

\noindent 
If the token is at some vertex $v \in V(G)$, then a move consists 
of choosing an outgoing edge with some endpoint, e.g. $w \in V(G)$. 
When we write $v \to w$, we mean the move which corresponds to 
moving the token from vertex $v$ to $w$, deleting the edge 
$(v,w)$ ($\{v,w\}$ when referring to the undirected variant), resulting 
in the opposing player being the one to make a move at $w$. We also use 
this notation to refer to the edge $(v,w)$ itself. 
Often there will only be one legal move from a given vertex. 
Suppose $v,w,u \in V(G)$ such that there is only one outgoing edge 
from $w$, going to $u$, and that there is an edge from $v$ to $w$. 
Then we will often write $v \to w \to u$ to denote the forced 
sequence that arises from the player to move at $v$ choosing 
to play $v \to w$, since the opposing player is forced to move the token 
from $w$ to $u$. Similarly, when we have edges connected in this 
way where the intermediate vertex has outdegree exactly $1$ we will also 
often use this notation.

\subsection{Graphs and pathwidth}

\begin{definition}
A \emph{path decomposition} of a graph $G=(V,E)$ is a sequence of vertex sets
$
(X_1,\dots,X_r)
$
such that:
\begin{enumerate}
    \item for every vertex $v\in V$, there exists $i\in[r]$ with $v\in X_i$;
    \item for every edge $\{u,v\}\in E$, there exists $i\in[r]$ with
    $\{u,v\}\subseteq X_i$;
    \item for every vertex $v\in V$, the set
    $
    \{\, i \mid v\in X_i \,\}
    $
    forms a contiguous interval of $[r]$.
\end{enumerate}
\end{definition}

We will also use the notion of a \emph{nice path decomposition}:
a path decomposition $(B_1,\dots,B_r)$ is nice if $B_1=B_r=\emptyset$ and,
for every $i\in [r-1]$, the bags $B_i$ and $B_{i+1}$ differ in exactly one
vertex. It is well-known that every path decomposition of width $k$ can be
transformed in polynomial time into a nice path decomposition of the same
width.

\begin{definition}
The \emph{width} of a path decomposition $(X_1,\dots,X_r)$ is
$
\max_i |X_i|-1.
$
The \emph{pathwidth} of a graph $G$, denoted $\pw(G)$, is the minimum width over all
path decompositions of $G$. The pathwidth of a directed graph $G=(V,E)$, is the pathwidth
of its underlying undirected graph.
\end{definition}

\subsection{Parameterized complexity}
A parameterized problem is a language $L \subseteq \Sigma^* \times \mathbb{N}$.
An instance is a pair $(x,k)$, where $k$ is the parameter. 
A parameterized problem is in \emph{FPT} if it can be decided in time
$
f(k)\cdot |x|^{O(1)}
$
for some computable function $f$.
A parameterized problem is in \emph{XP} if it can be decided in time
$
|x|^{f(k)}
$
for some computable function $f$.

We will work with the class \textsc{XNLP}, consisting of parameterized problems
solvable by a nondeterministic Turing machine in time
$
f(k)\cdot |x|^{O(1)}
$
and space
$
O(k\log |x|).
$
This class was introduced under the notation $\mathrm{N}[f\mathrm{poly},f\log]$
by Elberfeld, Stockhusen, and Tantau~\cite{elberfeld2012space}, and has
subsequently been studied under the name \textsc{XNLP}, e.g. by
Bodlaender et al.~\cite{bodlaender2024xnlp}.
We will also use the following two results of Bodlaender~\cite{bodlaender1993}.

\begin{theorem}
\label{thm:bodlaender-tw-degree}
For every fixed $k,d\ge 1$, \textsc{Edge Generalized Geography} can be solved
in time $O(n)$ on graphs of treewidth at most $k$ and maximum degree at most
$d$, provided together with a tree decomposition of width at most $k$.
\end{theorem}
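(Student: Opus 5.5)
This statement is Bodlaender's theorem, which the paper imports, so the plan is to reconstruct his dynamic-programming argument over a tree decomposition rather than prove anything new. The central observation is that with maximum degree at most $d$, every vertex is incident to at most $d$ edges. A bag of size at most $k+1$ therefore meets at most $(k+1)d$ edges, a constant. Moreover, a play that leaves and re-enters the part of the graph below a bag must cross one of these boundary edges each time. Each crossing deletes that edge, so every play crosses the boundary at most $O(kd)$ times.

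\textbf{Step 1.} Convert the given decomposition to a nice (rooted, binary) tree decomposition of width $k$ with $O(n)$ nodes. For each node $t$, let $G_t$ be the subgraph formed by the edges introduced in the subtree of $t$, and let $\partial_t$ be the set of edges incident to the bag $X_t$. Then $|\partial_t|\le (k+1)d$.

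\textbf{Step 2 (interface).} View the game inside $G_t$ from the outside. A visit into $G_t$ consists of the token entering at a bag vertex $v$ with a given player to move, wandering through edges of $G_t$, and either getting stuck inside or exiting at a bag vertex $w$. The exit is forced to happen at a bag vertex because all edges from $G_t$ to the rest of the graph have both endpoints in the bag. Each visit uses at least one edge incident to the bag (on entry or exit, or inside if it starts there). Hence a whole play consists of at most $c=O(kd)$ visits. The relevant information is the \emph{type} of $G_t$: the finite game tree describing how the sequence of visits can proceed. Formally, I would define it as the set of winning conditions for an abstract game. At each visit the outside player chooses the entry vertex and the player to move. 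The inside then evolves, with strategic choices alternating between the players, until it ends at an exit vertex or the current player is stuck. Which bag-incident edges remain, and the history of entries and exits, are recorded at bounded precision. Because the visit count, the number of entry and exit choices, and the number of bag-edge states are all bounded by functions of $k$ and $d$, the set of possible types is finite, with size some $f(k,d)$. Two subgraphs of the same type are interchangeable in any context.

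\textbf{Step 3 (composition).} Show how to compute the type at each node from the types of its children. At introduce and forget nodes the update is local. At join nodes, the two children interact only through bag vertices, so the game on the union is an interleaving of visits into each side. Its type is determined by a finite computation over the two child types. This computation runs in time $g(k,d)$ per node, so the whole procedure runs in $O(n)$. At the root, evaluate the type with start vertex $s$ and Player~1 to move.

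I expect the main obstacle to be the precise definition of the type in Step 2. It must capture alternation correctly across repeated visits, so it is really a bounded-depth game (a finite quantified structure) and not just a set of entry/exit pairs. It must also be closed under the join operation. My first attempt would be to represent types as bounded-depth alternating trees of (entry, exit) events, reduced modulo game equivalence so that the set of types stays finite.
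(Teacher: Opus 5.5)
The paper gives no proof of this statement; it imports it from Bodlaender~\cite{bodlaender1993}, so there is no in-paper argument to compare against directly. Your reconstruction is sound in outline. In structure it is the interface-type dynamic program of Appendix~\ref{app:sec:algo}, moved from tree partitions to tree decompositions. The edges of $G_t$ incident with $X_t$ (at most $(k+1)d$ of them) play the role of ports. The number of unused ones is a rank that drops with every visit. This is exactly where the degree bound enters: without it the rank, and hence the type universe, is unbounded, in line with the paper's XNLP-hardness.

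The finite quotient you are after in Step~2 is the paper's pair $(A_a,\Phi_a)$. For each entry, $A_a$ is the finite set of exit outcomes: exit vertex, parity relative to the entrant, and type of the residual subgraph, which has strictly smaller rank. $\Phi_a$ is a Boolean function giving the entrant's result as a function of the values of those outcomes. Finiteness then follows by induction on rank as in Lemma~\ref{lem:finitely-many-types}. Interchangeability of equal types is the analogue of Lemma~\ref{lem:joint}. No history of earlier visits has to be stored, since it is absorbed into the residual type.

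Some minor fixes:
\begin{itemize}
\item An edge leaving $G_t$ has one endpoint, not both, in $X_t$.
\item Put $s$ into the root bag (for instance by adding it to every bag), so that the final evaluation is defined.
\item Forget nodes are not purely local bookkeeping. Arrivals at the forgotten vertex must be resolved by re-entering the residual type there, again by recursion on rank.
\end{itemize}

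A shorter route goes through the directed line graph, plus a new source $s^\ast$ joined to the out-edges of $s$. Edge geography on $G$ becomes vertex geography from $s^\ast$. Replacing each bag vertex by its at most $d$ incident edges, and adding $s^\ast$ to every bag, gives a tree decomposition of width at most $(k+1)d$. That route uses the degree bound once. The remaining dynamic program is for vertex geography, where a play passes through each bag vertex at most once, so the rank is just the number of unvisited bag vertices. Your direct route avoids the transformation and stays close to the paper's own machinery, at the cost of tracking repeated passes through bag vertices.
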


\begin{theorem}
\label{thm:bodlaender-cactus}
\textsc{Edge Generalized Geography} can be solved in time $O(n)$ on directed
graphs whose underlying undirected graph is a cactus.
\end{theorem}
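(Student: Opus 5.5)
This is Bodlaender's result~\cite{bodlaender1993}. I would reprove it by dynamic programming over the block--cut tree of the underlying cactus, rooted at the block containing $s$.

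\textbf{Setup and the bridge case.} Every block of a cactus is either a single edge (a bridge) or a cycle. The key structural fact concerns a bridge $\{u,w\}$ with $w$ the child. Once the token crosses it, that arc is deleted. Since the bridge is the only connection between the two sides, the token never returns. So the outcome after the move $u\to w$ is exactly the outcome of the game played on the subcactus $T_w$ hanging below $w$, with the opponent to move. I would therefore compute, bottom-up, a value $\mathrm{val}(w)\in\{\mathrm{win},\mathrm{lose}\}$: the outcome for the player to move at $w$ when only $T_w$ is available.

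\textbf{Cycle blocks.} These are where the token can come back to a vertex. Let $C$ be a child cycle of $w$.

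\begin{itemize}
\item The token can only return to $w$ via $C$ if the cycle is traversed consistently along its arcs. Only consistently oriented segments matter, because the token follows outgoing arcs.
\item Fix an entry direction and walk the cycle backwards from $w$. At each internal vertex $x$, the mover may either continue along $C$ or descend into a child block of $x$. Child blocks of $x$ other than $C$ are already summarised by the DP below.
\item A backward pass along $C$ then tells, for each entry direction, whether the player who starts the loop forces a win, forces a loss, or the play inevitably completes the loop. In the last case the token is back at $w$ with $C$ deleted and a known parity shift equal to $|C| \bmod 2$.
\item A subtlety: returning to an internal vertex $x$ from one of $x$'s own child cycles changes the options at $x$. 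So the value of $x$ must be computed in the same way as $w$, treating ``continue along $C$'' as just another exit.
\end{itemize}

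\textbf{Combining at a vertex (the main obstacle).} At $w$ the mover faces three kinds of options:
\begin{itemize}
\item terminal exits, which are bridges or loops whose outcome is decided;
\item returning loops, grouped by parity;
\item possibly the remaining arc of the parent cycle.
\end{itemize}
The returning loops can be used in any order, so naively the state space is exponential. I would show that the position depends only on three things: whether a winning terminal exit exists for the current mover, and the numbers $a$ and $b$ of remaining even and odd returning loops.
\begin{itemize}
\item An even loop returns the token to the same player.
\item An odd loop passes the move to the opponent.
\end{itemize}
Inspecting the recursion on $(a,b)$ should give a closed form: essentially, the mover wins iff some terminal exit wins, or the parity structure of $(a,b)$ together with the terminal options favours them. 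That closed form lets each vertex be evaluated in time linear in its number of child blocks.

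Getting this reduction right is the step I expect to be hardest. In particular, a ``returning'' loop is not fully forced: the other player may deviate mid-loop into a subtree, so the loop classification has to be done relative to the mover's options at $w$. I would first try treating each loop as a short Nim-like component and proving monotonicity, namely that having more options never hurts the player who controls them.

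\textbf{Finishing.} Summing over all vertices and blocks gives total work $O(n)$. The answer is read off at $s$, where the root block is treated as a child block of $s$.
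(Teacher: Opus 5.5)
The paper gives no proof of this statement; it is imported from Bodlaender~\cite{bodlaender1993}. Your reconstruction therefore has to stand on its own, and as written its central step is missing.

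The block--cut-tree setup and the bridge case are fine. The problem is your trichotomy for a directed child cycle $C$: the entrant forces a win, forces a loss, or play completes the loop with parity $|C| \bmod 2$. This is asserted rather than derived, and the criterion ``play inevitably completes the loop'' is the wrong one. Take the directed triangle $w\to x\to y\to w$ and attach the directed triangle $x\to p\to q\to x$ at $x$. Every play entering at $w$ returns to $w$. Yet the opponent of the entrant, moving first at $x$, decides whether to detour through $p,q$, and hence who is to move back at $w$. Entering this loop therefore loses for the entrant in every context, while your rule files it as an odd loop.

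In general the parity of the return is controlled by whoever profits from it. Who profits depends on the position at $w$ after the return, i.e.\ on which other loops at $w$ remain. You notice this yourself (the classification ``has to be done relative to the mover's options at $w$''). That makes the loop summaries and the combining rule at $w$ mutually dependent, and the combining rule is then only conjectured (``should give a closed form''). So neither correctness nor the linear bound is established.

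The idea that breaks the circularity is this. Once the token is back at $w$ out of $C$, the cycle and everything hanging from its internal vertices is unreachable. So the position at $w$ is the same however the excursion went; only the player to move differs. Let $V$ be the bit saying whether that post-return position is a win for the player to move. The entrant's result is then a Boolean function of $V$ alone, i.e.\ one of $0$, $1$, $V$, $\neg V$, independent of what else is attached at $w$. Hence every directed child cycle acts as a losing exit, a winning exit, an even loop or an odd loop.

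Now suppose no exit at $w$ is winning. Then $W(a,b)=\max\{W(a-1,b),\,1-W(a,b-1)\}$ with $W(0,0)=0$ gives $W(a,b)=b \bmod 2$. So the mover at $w$ wins iff some exit is winning or $b$ is odd.

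For the cycle itself, let $x$ be an internal vertex of $C$ with successor $x^+$. Let $W_x=1$ iff the child blocks of $x$ other than $C$ contain a winning exit or an odd number of odd loops. Treating ``continue along $C$'' as one more exit gives $\mathrm{val}(x)=W_x\vee\neg\mathrm{val}(x^+)$, with $\neg\mathrm{val}(x^+)$ replaced by $\neg V$ when $x^+=w$. Run this backward pass once for $V=0$ and once for $V=1$. The type of $C$ is $\neg\mathrm{val}$ of the first internal vertex, computed in time linear in $C$ and its attachments.

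The same pass shows the type is non-constant exactly when all $W_x=0$. In that case it is $V$ or $\neg V$ according to $|C| \bmod 2$. That is what your parity claim needs: it holds in this outcome sense, not because play is forced.
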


In particular, since pathwidth bounds treewidth from above, Bodlaender's first
theorem immediately yields linear-time solvability of \textsc{Undirected Edge
Geography} on graphs of bounded pathwidth and bounded maximum degree, provided
with a path decomposition of bounded width.

\section{XNLP-hardness of Directed Edge Geography}

We prove \textsc{XNLP}-hardness of \textsc{Directed Edge Geography} by a
parameterized logspace reduction from \textsc{Chosen Maximum Outdegree}, which is
known to be \textsc{XNLP}-complete when parameterized by
pathwidth~\cite{bodlaender2022hardtw}. 
Following~\cite[Section~2.3]{bodlaender2022hardtw}, we assume that all integers in the source instance are encoded in unary.

\noindent
\textbf{Source problem.}

\noindent 
Let \(G\) be an undirected graph. An \emph{orientation} of \(G\) assigns a
direction to each edge. For an orientation \(\omega\) and a vertex \(x\), let
\(\theta_\omega(x)\) denote the set of edges directed out of \(x\), and let
\(\Ori(G)\) denote the set of all orientations of \(G\).

\begin{problem}{Chosen Maximum Outdegree}
\textbf{Input:} An undirected graph \(G=(V,E)\), a weight function
\(w:E\to \mathbb Z_{>0}\), and a vertex bound
\(t:V\to \mathbb Z_{>0}\).\\
\textbf{Question:} Is there an orientation \(\omega\in \Ori(G)\) such that
\[
\sum_{e\in \theta_\omega(v)} w(e)\le t(v)
\]
for every \(v\in V(G)\)?
\end{problem}
We assume that the input is given together with a path decomposition of width at
most \(k\). 

\begin{figure}[H]
    \centering
    \includegraphics[width=\textwidth]{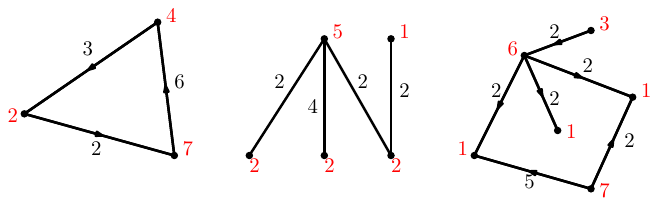}
    \caption{Three instances of \textsc{Chosen Maximum Outdegree}. Vertex bounds
    \(t(x)\) are shown in red. A feasible orientation is shown whenever one exists.
    The middle instance admits no feasible orientation.}
    \label{short:fig:orientations}
\end{figure}

\begin{figure}[H]
    \centering
    \includegraphics[width=\textwidth]{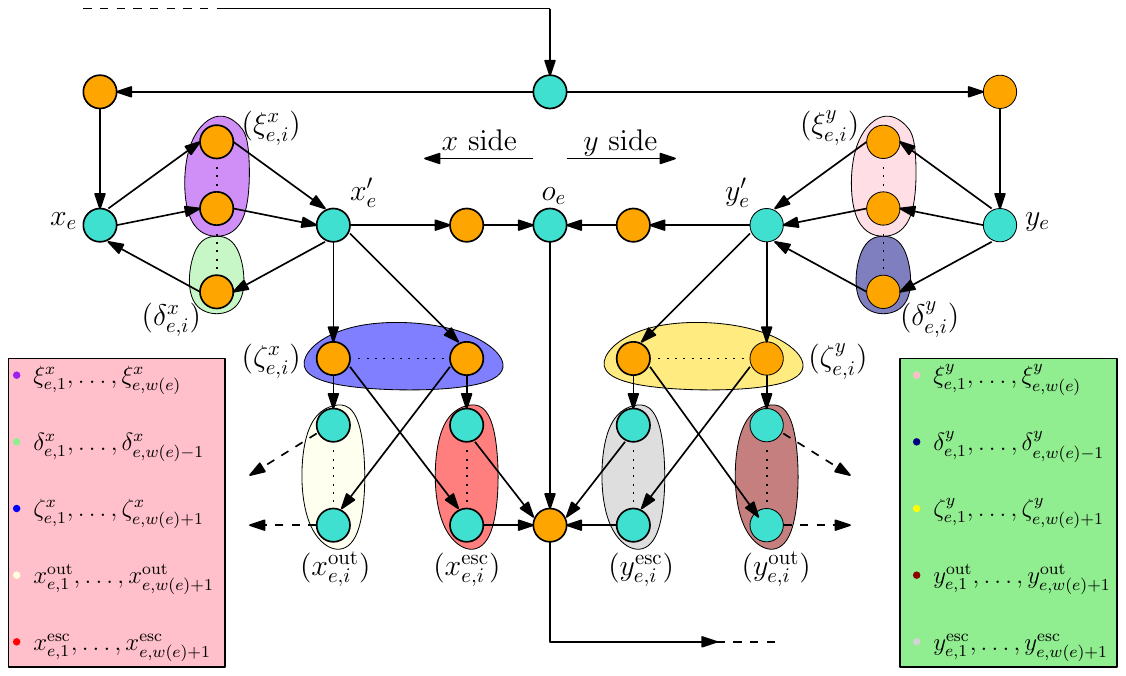}
    \caption{Edge gadget used in the reduction. 
    The \(x\)- and \(y\)-sides are symmetric; the \(\xi\)- and \(\delta\)-vertices implement the weighted forward/backward traversal, 
    while the \(\zeta\)-vertices lead either to return vertices \(x^{\mathrm{out}}_{e,i},y^{\mathrm{out}}_{e,i}\) or to escape vertices \(x^{\mathrm{esc}}_{e,i},y^{\mathrm{esc}}_{e,i}\).}
    \label{short:fig:edge-gadget}
\end{figure}

\noindent 
\textbf{Proof intuition.}

\noindent 
The reduction encodes an orientation of the input graph by letting Player~1
choose, for each edge gadget, one of two symmetric sides during an initial
traversal phase. For an edge \(e=\{x,y\}\), traversing the \(y\)-side encodes
the orientation \((x,y)\), and traversing the \(x\)-side encodes \((y,x)\).
After all gadgets have been traversed, Player~2 chooses a vertex \(x\) in a
challenge phase and repeatedly enters the gadget sides corresponding to edges
oriented out of \(x\). If the total encoded outgoing weight at \(x\) exceeds \(t(x)\),
then Player~2 can force one more successful challenge excursion than there are
available return paths. If the encoded orientation is feasible, then every
non-losing challenge excursion consumes one challenge edge and one return path,
so Player~2 eventually runs out of non-losing moves.

\newpage
\noindent 
\textbf{The edge gadget.}

\noindent
For each input edge \(e=\{x,y\}\), we construct a directed gadget \(G_e\) with
two symmetric sides, one corresponding to \(x\) and one to \(y\); see
Figure~\ref{short:fig:edge-gadget}. Entering one side during the initial phase encodes the
orientation of \(e\). Each side contains:
\begin{itemize}
    \item a forward/backward traversal structure that simulates the weight
    \(w(e)\),
    \item a merge-and-return path through which the initial traversal leaves the
    gadget, and
    \item an escape structure that is used only in the later challenge phase.
\end{itemize}
The full vertex-and-edge definition appears in
Appendix~\ref{app:sec:hardness}.

\noindent 
\textbf{The full reduction.}

\noindent
Given an instance \((H,w,t)\) of \textsc{Chosen Maximum Outdegree}, we build
one copy of the edge gadget \(G_e\) for each \(e\in E(H)\), and chain these
gadgets in the order induced by the given path decomposition. Thus the first
phase of the game traverses the gadgets one by one and encodes an orientation of
every edge of \(H\). After the last gadget, play enters a challenge gadget in
which Player~2 chooses a vertex \(x\in V(H)\) and enters the sides corresponding
to edges oriented out of \(x\). For each vertex \(x\), the challenge gadget
provides exactly \(t(x)\) return paths. The complete formal construction is
given in Appendix~\ref{app:sec:hardness}, specifically 
Definition~\ref{def:reduction-graph}.

\begin{figure}[H]
    \centering
    \includegraphics[width=\textwidth]{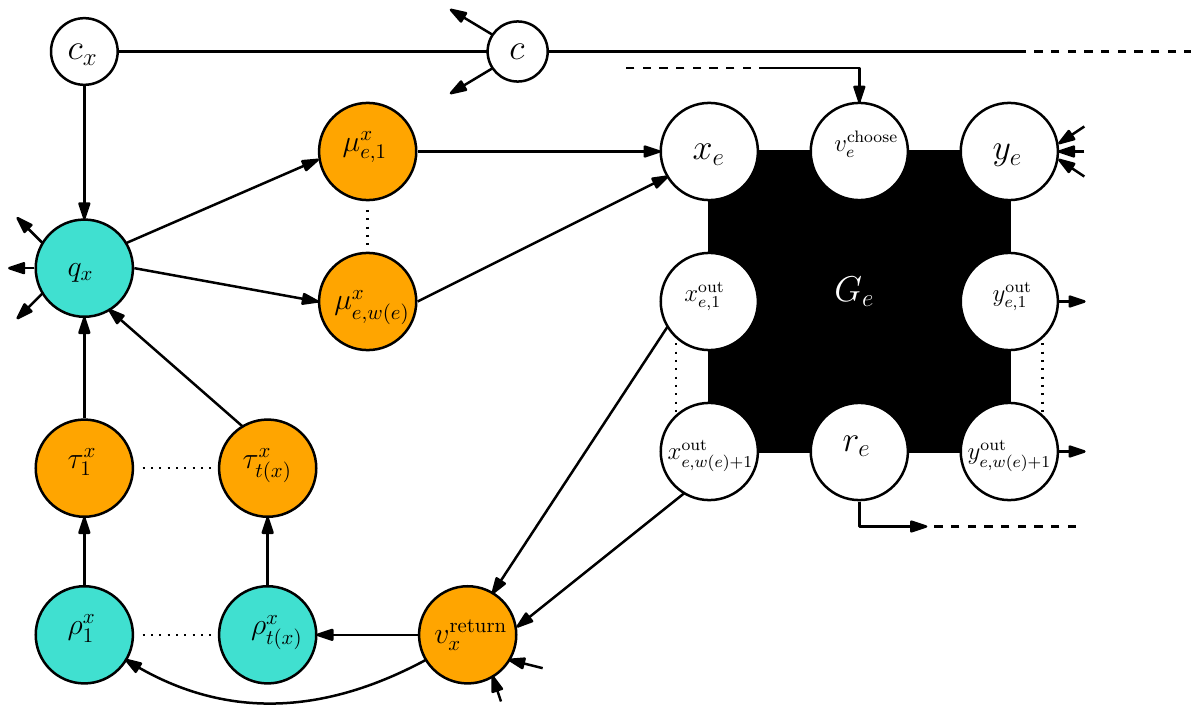}
    \caption{Figure showing how the challenge gadget for a vertex $x$ is connected 
    to an edge gadget $G_e$, where the behavior of $G_e$ is abstracted away. 
    Notice how, due to parity swap after the last edge gadget, Player~2 is now the 
    turquoise player. This is the case inside of $G_e$ also.}
    \label{short:fig:challenge-gadget}
\end{figure}

The following lemma summarizes the two key properties of the reduction; the full
proof is deferred to Appendix~\ref{app:sec:hardness}.

\begin{lemma}[Gadget behavior]
\label{lem:gadget-behavior-short}
The reduction has the following properties.
\begin{enumerate}
    \item During the initial traversal phase, once Player~1 chooses a side of an
    edge gadget, all moves of Player~2 inside that gadget are forced, and play
    leaves the gadget at its return vertex with Player~2 to move.
    \item In the challenge phase, challenging a side that was already used in the
    initial phase is immediately losing for Player~2. Challenging a side that
    was left untouched yields a non-losing excursion if and only if one
    challenge edge and one return path are consumed.
\end{enumerate}
\end{lemma}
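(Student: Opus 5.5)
The plan is a local case analysis of the gadget $G_e$ of Definition~\ref{def:reduction-graph}. We track which player is to move at each vertex, and we use the $v\to w\to u$ notation for forced sequences, i.e.\ vertices of outdegree exactly one. Parity is fixed by counting path lengths. The $\xi$/$\delta$ forward/backward chains have lengths chosen so that traversing one side takes an even number of Player~2-forced steps. After the parity swap following the last gadget, Player~2 is the turquoise player inside $G_e$ during the challenge phase (cf.\ Figure~\ref{short:fig:challenge-gadget}). We first list, for every gadget vertex, its out-neighbours and say which player moves there in each phase. Both items then follow by inspecting the few vertices of outdegree at least two.

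For item~1, we follow the path Player~1 picks into, say, the $y$-side. We check that every vertex where Player~2 moves in this phase has a unique unused out-edge. The branching vertices are the $\zeta$-vertices, which offer a return vertex $y^{\mathrm{out}}_{e,i}$ or an escape vertex $y^{\mathrm{esc}}_{e,i}$. We must show that either Player~1 is the one to move at them in this phase, or the escape branch loses immediately for the mover, since escape vertices are dead ends or lead to a sink of the wrong parity. In that case the branch is effectively forced, and we treat a move into an immediately losing position as a forced move. Following the merge-and-return path then shows play exits at the gadget's return vertex. The step count along the chosen side has odd parity, so Player~2 is to move there.

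For item~2, suppose Player~2 at challenge vertex $x$ enters a side of $G_e$. If that side was traversed in phase one, its first $\xi$-edges are already deleted. The token reaches a vertex with no unused out-edge on Player~2's move after one forced step, or Player~1 can finish Player~2 off by taking a branch whose continuation was consumed. Either way Player~2 loses immediately. If the side is untouched, the $\xi$/$\delta$ structure forces play through to the $\zeta$-vertices. There Player~1 chooses between a return path and an escape. We show that escaping wins for Player~1 exactly when the challenge for $x$ has no remaining return paths. So the excursion is non-losing for Player~2 precisely when it returns through one of the $t(x)$ return paths, consuming that path and the challenge edge used to enter. The weight $w(e)$ is encoded by the number of such excursions the side supports.

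The main obstacle is this second part. We must rule out deviations: a player cutting through another gadget's vertices via shared return structure, or re-entering a partially used side so as to gain extra excursions beyond $w(e)$. Weights must be counted correctly against $t(x)$. I would handle this with an invariant on which gadget edges are deleted after each excursion. The invariant states that the $i$-th unit of the side is consumed only after the $(i-1)$-th. I would verify it by induction on the number of completed excursions, checking at each $\zeta$-vertex which moves remain available.
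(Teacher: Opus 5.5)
There is a genuine gap in item~2: your key step is false, and it reverses the role of the escape structure. Escaping never helps Player~1 in the challenge phase. When play left $G_e$ in Phase~1, the only arc out of $r_e$ (to the next choice vertex, or $r_{e_m}\to m_c$) was traversed. So after $\zeta^x_{e,\ell}\to x^{\mathrm{esc}}_{e,\ell}\to r_e$, Player~1 is stuck at $r_e$ regardless of how many return paths remain. Hence Player~1's only non-losing reply at $\zeta^x_{e,\ell}$ is $x^{\mathrm{out}}_{e,\ell}$. Once all $t(x)$ paths through $\rho^x_j,\tau^x_j$ are used, Player~1 loses at $v_x^{\mathrm{return}}$.

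That is exactly what the no-instance direction needs: the $(t(x)+1)$-st entry into an untouched side wins for Player~2. If your claim held, that extra entry would lose for Player~2 and the reduction would fail. So this step cannot be proved and must be replaced by the correct fact.

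You also skip Player~2's merge option at $x'_e$. Moving to $m^x_e$ loses for Player~2 because $o_e\to r_e$ was spent in Phase~1. This is needed to conclude that every non-losing continuation uses backward arcs and then a $\zeta$-arc. These two spent-arc observations settle item~2 by a direct local case analysis, as in the paper. No invariant on the order in which units are consumed is needed. Nor can play cut into other gadgets: from the $x$-side, the only continuing route is $x^{\mathrm{out}}_{e,\ell}\to v_x^{\mathrm{return}}\to\rho^x_j\to\tau^x_j\to q_x$.

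Your handling of the $\zeta$-vertices in item~1 also fails. In Phase~1 a vertex $\zeta^y_{e,i}$ is reached only if Player~1 moves there from $y'_e$, so Player~2 is the mover at it. The escape is not losing for Player~2: $y^{\mathrm{esc}}_{e,i}\to r_e$ leaves the gadget with Player~2 to move, and the arc out of $r_e$ is still unused at that point. So Player~2 has a genuine choice, and forcedness cannot be shown unconditionally. The paper instead uses that Player~1 is the mover at $y_e$, $y'_e$ and $o_e$ and never enters a $\zeta$-vertex; this is an explicit hypothesis of its Phase~1 lemma. Then Player~2 only moves at $v^{\mathrm{choose},y}_e$, the $\xi$- and $\delta$-vertices, and $m^y_e$, all of outdegree one. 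Deviations into $\zeta$ are handled separately in the no-instance argument, where Player~2 answers with the escape.

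Minor: Player~2 makes $2w(e)+1$ forced moves, not an even number. It is the total of $4w(e)+3$ moves that leaves Player~2 to move at $r_e$.
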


\begin{figure}[H]
    \centering
    \includegraphics[width=\textwidth]{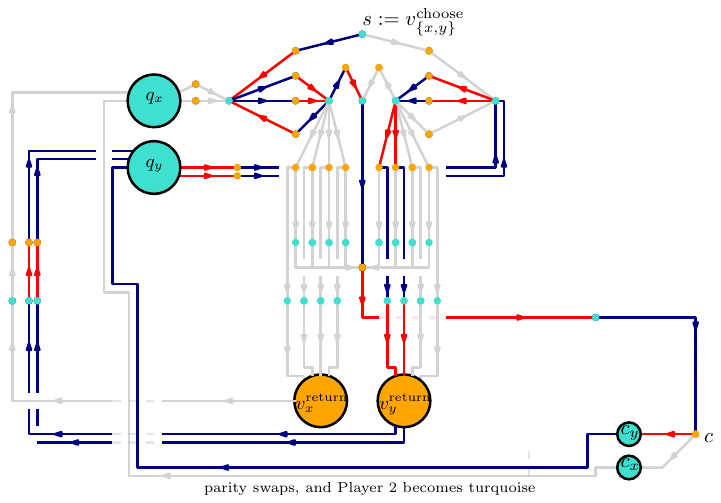}
    \caption{A reduced instance of our input problem for which 
    there is one edge $\{x,y\}$ with weight $2$, and $t(x) = 1$ and 
    $t(y) = 2$. This is a feasible instance and we observe that Player~1 has 
    a winning strategy, with Player~2 to move at 
    $q_y$ with no legal continuation. In this figure we assume not only that the 
    players are playing optimally, but that they want to stay 
    alive ''as long as possible''. The edges played by Player~1 are 
    colored navy and those played by Player~2 are colored red. In particular, 
    observe that in the phase where the edge is oriented, 
    Player~2 only has one legal response every time they move.}
    \label{short:fig:single_edge_example}
\end{figure}

\begin{theorem}
\label{thm:deg-xnlp-hard}
\textsc{Directed Edge Geography} is \textsc{XNLP}-hard when parameterized by
pathwidth.
\end{theorem}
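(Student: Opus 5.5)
We prove Theorem~\ref{thm:deg-xnlp-hard} by a parameterized logspace reduction from \textsc{Chosen Maximum Outdegree} parameterized by pathwidth, which is \textsc{XNLP}-complete~\cite{bodlaender2022hardtw}. Given $(H,w,t)$ with unary weights and bounds, together with a path decomposition of width at most $k$, we output the graph $G$ of Definition~\ref{def:reduction-graph}. This graph consists of one edge gadget $G_e$ per $e\in E(H)$, chained in the order induced by the decomposition, followed by the challenge gadget. Since weights and bounds are unary, $G$ has polynomial size. Three things remain to be shown: (i) $G$ can be produced by a transducer in space $O(\log |H|)$, plus $f(k)\log|H|$ for handling the decomposition; (ii) $\pw(G)\le g(k)$ for some computable $g$; and (iii) Player~1 wins on $(G,s)$ if and only if $(H,w,t)$ admits a feasible orientation.

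For (i), the plan is to take a nice path decomposition and introduce each edge of $H$ at the first bag containing both endpoints. Each gadget is then written out using counters up to $w(e)$ and $t(x)$, which require only logarithmic space.

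For (ii), we build a path decomposition of $G$ by sweeping along the chain of edge gadgets. Each gadget $G_e$ has a traversal part of length $w(e)$ but only constant width, so it can be swept with a constant number of vertices in each bag. The challenge gadget, however, connects each vertex $x$ to the sides of all edges incident to $x$. This is handled by keeping, for every $x$ in the current bag of $H$'s decomposition, a constant number of hub vertices of $x$'s challenge structure (its entry vertex and current return-path endpoint) in every bag while gadgets of edges at $x$ are processed. The $t(x)$ return paths must be attached so that they do not force many simultaneous vertices. We do this by sweeping them linearly, or by letting them hang off a single hub that stays in the bag while $x$ is active. This gives width $O(k)$.

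For (iii), we rely on Lemma~\ref{lem:gadget-behavior-short}. For the forward direction, let $\omega$ be a feasible orientation. Player~1 traverses, for each $e=\{x,y\}$, the side encoding $\omega(e)$. Part~1 of the lemma guarantees that this phase proceeds with forced play and correct parity. In the challenge phase, Player~2 picks some $x$. By part~2, challenging a used side loses immediately, and every non-losing excursion uses up one challenge edge and one return path. There are at most $\sum_{e\in\theta_\omega(x)}w(e)\le t(x)$ successful excursions, so the bookkeeping ensures that Player~2 is the one who eventually gets stuck. For the converse, we first argue that Player~1 must in effect pick exactly one side per gadget; deviations lose by part~1 and the gadget's design. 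The resulting choices define an orientation $\omega$. If some $x$ violates its bound, Player~2 challenges $x$ and performs $t(x)+1$ excursions. By part~2 these exhaust the return paths first, leaving Player~1 stuck.

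We expect the main obstacle to be step (ii) together with the exact parity and counting bookkeeping in (iii). In particular, we must ensure that the return paths of the challenge gadget, which are shared across all edges incident to $x$, do not raise the pathwidth and do not open unintended shortcuts between phases. Our first approach is to arrange each vertex's challenge structure as a path that runs in parallel with the occurrence interval of $x$ in the decomposition of $H$. The interval property then guarantees that at most $k+1$ such paths are active at any time.
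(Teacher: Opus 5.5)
Your proposal follows the paper's proof essentially step for step. It uses the same reduction from \textsc{Chosen Maximum Outdegree} with the same gadgets, and the same path decomposition that keeps $O(1)$ anchor vertices per active vertex of $H$ while the edge gadgets and the $t(x)$ return paths are realized in local blocks. The paper's anchors are $c_x,q_x,v_x^{\mathrm{return}}$, plus the global vertex $c$ kept in every bag; do not forget $c$, since it is adjacent to every $c_x$. The correctness argument counts non-losing challenge excursions against return paths, as you do.

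The one step to phrase more carefully is the converse. Deviations by Player~1 do not ``lose by part~1'' of the gadget lemma, since that part assumes no deviation. Instead, as in the paper: Player~2 answers a move into a $\zeta$-vertex during Phase~1 by going to the escape vertex, and no deviation ever touches an untouched side. Hence Player~2's $t(x)+1$ challenge excursions still succeed.
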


\begin{proof}[Proof sketch]
We reduce from \textsc{Chosen Maximum Outdegree}.

\smallskip
\noindent
\emph{Correctness.}
Suppose first that \((H,w,t)\) is a yes-instance, and fix a feasible orientation
\(\omega\in \Ori(H)\). Player~1 uses the initial traversal phase to encode
\(\omega\) by choosing, for each edge gadget, the side corresponding to the
head of the oriented edge. By Lemma~\ref{lem:gadget-behavior-short}, all responses of
Player~2 inside the gadgets are forced, and after all gadgets have been
processed, play reaches the challenge gadget with Player~2 to move.

Now Player~2 chooses a vertex \(x\). The only non-losing challenges are exactly
those corresponding to edges directed out of \(x\), and there are
\(\sum_{e\in \theta_\omega(x)} w(e)\) such challenge opportunities in total.
Each non-losing challenge consumes exactly one return path, and the number of
return paths available at \(x\) is \(t(x)\). Since \(\omega\) is feasible, we
have
\[
\sum_{e\in \theta_\omega(x)} w(e)\le t(x),
\]
so eventually all non-losing challenges are exhausted and Player~2 loses.

Conversely, suppose \((H,w,t)\) is a no-instance. Whatever Player~1 does in the
initial phase, the choices of sides encode some orientation \(\omega\). Since
\((H,w,t)\) is infeasible, there exists a vertex \(x\) with
\[
\sum_{e\in \theta_\omega(x)} w(e)> t(x).
\]
Player~2 chooses this vertex in the challenge phase. By
Lemma~\ref{lem:gadget-behavior-short}, each non-losing challenge consumes exactly one
return path, and Player~2 has strictly more such challenges available than the
number of return paths at \(x\). Hence Player~2 can force one final successful
challenge after all return paths have been exhausted, leaving Player~1 without a
move. Therefore Player~2 wins.

\begin{figure}[H]
    \centering
    \includegraphics[width=\textwidth]{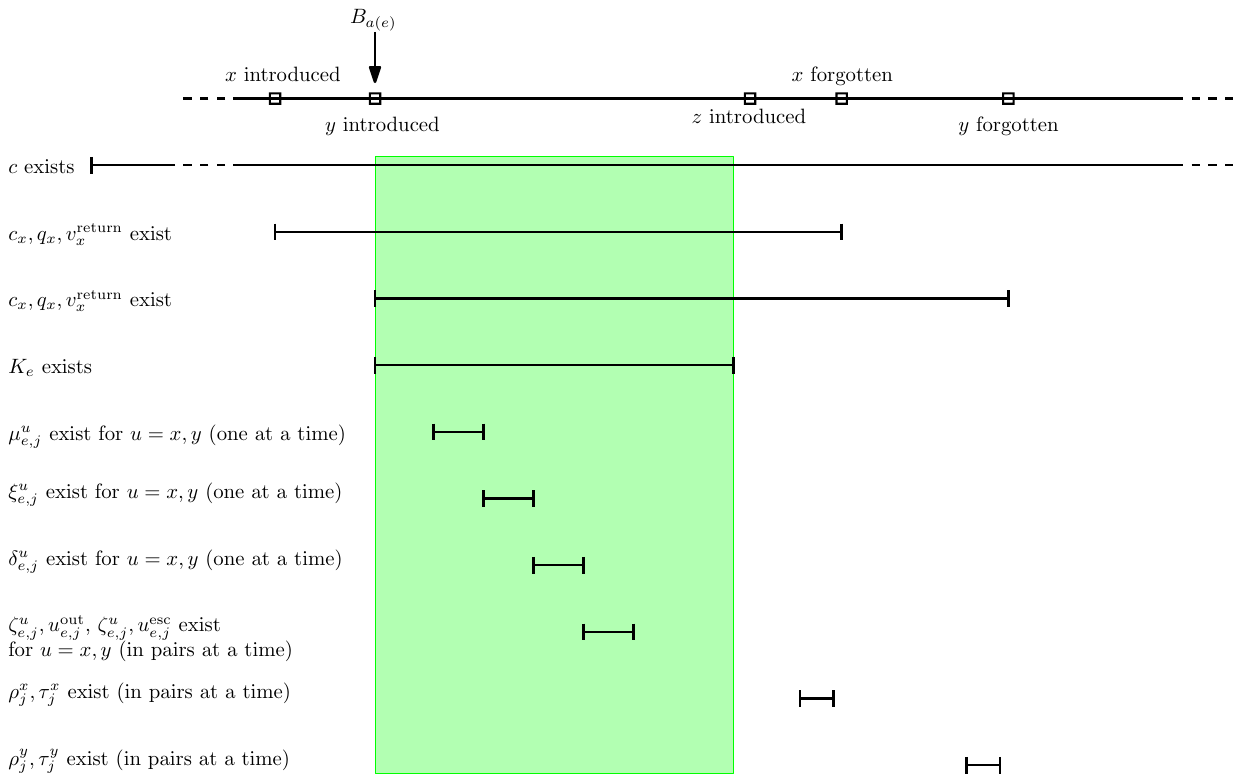}
    \caption{Schematic view of the pathwidth-preserving
    construction for an edge gadget $e=\{x,y\}$. The top timeline indicates
    the intervals during which $x$ and $y$ are present. The rows below indicate 
    the corresponding lifetimes of the anchor vertices, together with the global 
    vertex $c$. The green block indicates the dedicated contiguous block of bags 
    in the constructed decomposition where $e$ is realized: throughout 
    $K_e$, which is meant to represent the constant-size core of the edge gadget, 
    is present together with $S_i$, while the auxiliary vertices
    are introduced 
    only locally, one at a time or in pairs. Finally, before a vertex is forgotten, 
    the path-return vertices are realized in pairs at a time.}
    \label{short:fig:pathwidth-decomposition}
\end{figure}

\smallskip
\noindent
\emph{Pathwidth preservation.}
Let \((B_1,\dots,B_r)\) be a nice path decomposition of \(H\) of width \(k\).
We order the edge gadgets according to the first bag containing both endpoints
of the corresponding input edge. For each bag \(B_i\), we keep a set of
\emph{anchor vertices}: the global challenge vertex together with the challenge
vertices associated with the original vertices currently present in \(B_i\).
Since \(|B_i|\le k+1\), the number of anchor vertices in a bag is \(O(k)\).

Each edge gadget is then realized in a contiguous block of bags in which the
relevant anchor vertices are kept throughout, while the auxiliary gadget
vertices are introduced and forgotten locally. Similarly, when an original
vertex \(x\) is forgotten, the \(t(x)\) return paths associated with \(x\) are
realized in a short contiguous block before the anchor vertices of \(x\) are
removed. This yields a valid path decomposition of the reduction graph of 
width $O(k)$. The full construction and verification are given in
Appendix~\ref{app:sec:hardness}.

The construction is computable by a deterministic parameterized logspace transducer. 
Indeed, the output graph can be streamed by scanning the input path decomposition and 
enumerating edge gadgets in the order of the first bag containing both endpoints. 
Each output vertex and edge is specified by a constant-size type label, an input 
vertex or edge, and a counter bounded by \(w(e)+O(1)\) or \(t(v)+O(1)\). Since all 
integers in the source instance are encoded in unary, the output has polynomial 
size and these counters use only \(O(\log n)\) bits. The path decomposition witnessing the \(O(k)\) 
pathwidth bound is generated analogously. 

Together with the correctness argument and the pathwidth bound above, this proves
the claimed XNLP-hardness.
\end{proof}

\section{Reducing to the undirected variant}

We now transfer hardness from the directed game to the undirected one.

\begin{lemma}
\label{lem:dir-to-undir}
Let $G$ be a directed graph with $\pw(G)=k$.
There exists an undirected graph $G'$ such that:
\begin{enumerate}
    \item the \textsc{Directed Edge Geography} instance on $G$ is equivalent to the
    \textsc{Undirected Edge Geography} instance on $G'$;
    \item each directed edge of $G$ is replaced by a constant-size gadget;
    \item $\pw(G')\le f(k)$ for some function $f(k)=O(k^2)$.
\end{enumerate}
\end{lemma}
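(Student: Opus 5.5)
We simulate each directed edge $(u,v)$ of $G$ by a constant-size undirected gadget that can be traversed profitably only from $u$ to $v$. The classical reference point is the reduction of Fraenkel, Scheinerman and Ullman~\cite{FraenkelScheinermanUllman1993} from directed to undirected edge geography. We will use a gadget in that spirit, adapted to keep the pathwidth bounded.

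\textbf{The gadget.} First subdivide so that every vertex of $G$ has in- and outdegree bounded by a small constant, or treat high degree separately. We would rather not change $G$, so we work directly with the original vertices. For a directed edge $(u,v)$ we replace the edge by a path $u - a_{uv} - b_{uv} - v$. At the internal vertices we attach \emph{trap} pendant structures, namely short paths or triangles of fixed length. Their purpose is the following: if a player enters the gadget from the $v$-end, the opponent has an immediate move into a trap that leaves the mover stuck. The forward traversal $u\to a\to b\to v$ therefore consists of forced replies and preserves parity, since the path has odd length three, exactly like a single move. A backward entry $v\to b$ is immediately losing for whoever makes it. The trap at $b$ must be harmless to the forward traverser. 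To achieve this, the pendant at $b$ should be reachable only at a moment when using it hands the move back to the player who entered backwards. We arrange this by making the pendants at $a$ and $b$ of different parities. We will also check that after a forward traversal the leftover trap edges cannot be exploited later, since the token never returns to $a,b$ except via $u$ or $v$. Any re-entry from $u$ after deletion of $u a$ is impossible.

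\textbf{Equivalence.} Next we prove a simulation claim by induction on the number of remaining edges. Consider positions of the undirected game in which the token sits at an original vertex and every gadget is either untouched or fully traversed. Each such position is won by the player to move iff the corresponding directed position (with the traversed edges deleted) is won. Any deviation, meaning a backward entry or an entry into a trap, loses immediately for the deviator. Hence optimal play stays within the simulated positions, and a player stuck in $G$ is stuck in $G'$ up to such deviations. The main obstacle is designing and verifying the trap gadget. The hard part is showing that no combination of partial traversals yields a profitable deviation, for example entering a gadget backwards when its forward edge was already used, or exploiting an undirected edge at $u$ that belongs to an incoming gadget. I would first try a gadget with a pendant triangle at $b$, and do a case analysis over the constantly many local states of a gadget.

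\textbf{Pathwidth.} Take a path decomposition of $G$ of width $k$. For each bag containing both $u$ and $v$ with $(u,v)\in E$, we insert right after it a sequence of bags. Each of these bags consists of the original bag plus the constant-size vertex set of one gadget, and consecutive inserted bags cover the gadgets one after another. This yields width $k+O(1)$. The $O(k^2)$ bound in the statement comfortably covers even a cruder choice, namely adding all gadgets for the up to $\binom{k+1}{2}$ edges inside a bag simultaneously. The construction is clearly logspace computable.
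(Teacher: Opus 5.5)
\textbf{The gap is the gadget, i.e.\ item 1 of the lemma.} You never fix a gadget, and the one you sketch does not work. So the simulation claim (``every deviation loses immediately for the deviator'') has nothing to rest on.

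Take the path $u - a_{uv} - b_{uv} - v$ with pendants at the inner vertices, and ask who is to move at $b_{uv}$.
\begin{itemize}
\item After a forward entry, $u\to a_{uv}$ by some player $X$ followed by the intended forced reply $a_{uv}\to b_{uv}$, it is $X$ who moves at $b_{uv}$.
\item After a backward entry $v\to b_{uv}$ by the opponent $Y$, it is again $X$ who moves at $b_{uv}$.
\end{itemize}
In both cases the pendant at $b_{uv}$ is untouched. This gives a dilemma:
\begin{itemize}
\item If a pendant at $b_{uv}$ gives the mover an immediate win (a dead end for the opponent), which is what you need to punish $Y$, then $X$ gets the same win after every forward entry.
\item If it does not, then after $v\to b_{uv}$ player $X$ has nothing better than $b_{uv}\to a_{uv}$. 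Then $Y$ plays $a_{uv}\to u$, having traversed the edge backwards with exactly the right parity.
\end{itemize}
A pendant triangle at $b_{uv}$ is worse. It is a forced three-move loop that returns to $b_{uv}$ with the turn flipped. This hands the forward traverser a free pass, letting that player choose who is to move at $v$. Choosing pendants of ``different parities'' does not change who stands at $b_{uv}$, so it cannot repair this. What is missing is a gadget whose behaviour, including in partially used states, has actually been verified.

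\textbf{What the paper does instead.} The paper does not design such a gadget. It imports the pseudoarc of Fraenkel, Scheinerman and Ullman \cite[Theorem~2.1]{FraenkelScheinermanUllman1993} as a black box. That gadget admits traversal from $u$ to $v$ and makes entry from $v$ losing under optimal play. No ``adaptation for pathwidth'' is needed, since any constant-size gadget keeps the width bounded.

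\textbf{Your pathwidth argument.} With that gadget in place, your pathwidth argument is fine and in fact sharper than the paper's. The paper places every gadget vertex in all bags of the interval $I_{uv}$ where $u$ and $v$ coexist, which gives width $O(k^2)$. You instead insert, after one chosen bag $X_i\supseteq\{u,v\}$ per edge, a bag $X_i\cup V(\text{gadget})$, which gives width $k+O(1)$. Make sure you pick a single bag per edge: inserting the gadget after every bag containing both $u$ and $v$, as your wording suggests, breaks contiguity for the gadget vertices.
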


\begin{proof}
For each directed edge $(u,v)\in E(G)$, replace it by the well-known constant-size gadget
from the \emph{Undirected Edge Geography} paper~\cite[Theorem 2.1]{FraenkelScheinermanUllman1993}
that simulates a directed edge:
it admits traversal from $u$ to $v$, while entering from $v$ results in a losing position
under optimal play.

Let $(X_1,\dots,X_r)$ be a path decomposition of $G$ of width $k$.
For each edge $(u,v)$, let $I_{uv}$ be the interval of bags containing both $u$ and $v$.
Place every vertex of the gadget replacing $(u,v)$ in all bags indexed by $I_{uv}$.

All gadget edges are covered inside the corresponding interval, and contiguity is immediate.
In a fixed bag $X_i$, there are at most $\binom{k+1}{2}=O(k^2)$ pairs of vertices from $X_i$,
and each contributes only a constant number of gadget vertices.
Hence each bag increases in size by at most $O(k^2)$.
\end{proof}

\begin{figure}[H]
    \centering
    \includegraphics[width=\textwidth]{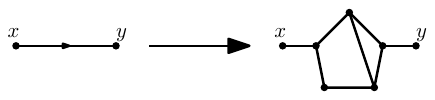}
    \caption{Pseudoarc used in reduction. Left: a directed 
    edge $(x,y)$. Right: a pseudoarc simulating the 
    directed edge on the left. We observe 
    that entry from $x$ yields safe exit at $y$ under 
    optimal play, while entry the other way results in 
    a loss for the entrant.}
    \label{fig:pseudoarc}
\end{figure}

\begin{corollary}
\textsc{Undirected Edge Geography} is XNLP-hard when parameterized by pathwidth.
\end{corollary}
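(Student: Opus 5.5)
The corollary follows by composing Theorem~\ref{thm:deg-xnlp-hard} with Lemma~\ref{lem:dir-to-undir}. XNLP-hardness is defined via parameterized logspace reductions, and these compose. So it suffices to show that the map from a directed instance $(G,s)$ of pathwidth $k$ to an undirected instance $(G',s)$ is a parameterized logspace reduction with the new parameter bounded by a computable function of $k$.

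The plan has three steps. First, take the instance produced by the reduction in the proof of Theorem~\ref{thm:deg-xnlp-hard}, together with its path decomposition of width $O(k)$; that proof streams this decomposition alongside the graph. Second, apply the construction of Lemma~\ref{lem:dir-to-undir}: replace each arc $(u,v)$ by the pseudoarc of Figure~\ref{fig:pseudoarc}. Correctness, meaning that the directed and undirected games have the same winner from $s$, is item~1 of the lemma. The new parameter is $O(k^2)$ by item~3.

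Third, I will check that this step is computable in logspace. The gadget has constant size, so each output vertex can be named by an input arc plus a constant label, and the gadget edges can be emitted by a single scan over the arcs. The witnessing path decomposition can be produced the same way. For each bag $X_i$ and each arc $(u,v)$ with $u,v\in X_i$, we add that arc's gadget vertices to the bag. Testing $u,v\in X_i$ only needs counters of $O(\log n)$ bits. The original vertices and the start vertex $s$ are kept unchanged.

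The main obstacle is the game-theoretic soundness of the pseudoarc inside the larger graph. A player may enter the gadget from $v$, or stop partway through and later re-enter it. One must confirm that any such deviation is losing for the player who makes it, and that a correct forward traversal takes the token from $u$ to $v$ with the right parity, so the correct player is to move at $v$. This is exactly what \cite[Theorem 2.1]{FraenkelScheinermanUllman1993} guarantees, and Lemma~\ref{lem:dir-to-undir} already invokes it. I would therefore rely on that citation and only verify that the gadgets of different arcs share nothing except their endpoints. That holds by construction, so each gadget's local analysis carries over to the whole graph.
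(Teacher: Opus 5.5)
Your proposal is correct and follows the paper's proof. Like the paper, you compose the XNLP-hardness of the directed game (Theorem~\ref{thm:deg-xnlp-hard}) with the winner-preserving gadget replacement of Lemma~\ref{lem:dir-to-undir}, which raises the pathwidth to at most $O(k^2)$; your explicit check that this replacement and its witnessing path decomposition are logspace-computable is correct and fills in a detail the paper leaves implicit.
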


\begin{proof}
By the previous theorem, \textsc{Directed Edge Geography} is XNLP-hard when
parameterized by pathwidth.
By Lemma~\ref{lem:dir-to-undir}, there is a parameter-preserving reduction from the
directed game to the undirected one.
Therefore \textsc{Undirected Edge Geography} is XNLP-hard when parameterized by
pathwidth.
\end{proof}

\begin{corollary}
    \textsc{Directed Edge Geography} can be solved in time
    $f'(k,d)\cdot n$ on directed graphs of pathwidth at most $k$
    and maximum degree at most $d$, provided with a path decomposition
    of width at most $k$.

    Moreover, the same holds on directed graphs of treewidth at most $k$
    and maximum degree at most $d$, provided with a tree decomposition
    of width at most $k$.
\end{corollary}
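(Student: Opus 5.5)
The plan is to reduce the directed instance to an undirected one using Lemma~\ref{lem:dir-to-undir}, and then solve that instance with Bodlaender's algorithm (Theorem~\ref{thm:bodlaender-tw-degree}). The introduction describes this transfer, and the paper treats \textsc{Edge Generalized Geography} (Bodlaender's setting) as covering the undirected game; that is how the remark after Theorem~\ref{thm:bodlaender-tw-degree} applies it. So the work is to check that the transfer keeps every relevant parameter bounded by a function of $k$ and $d$, and that it runs in linear time.

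\textbf{Step 1: linear-time construction.} Given $G$ with a path (or tree) decomposition of width at most $k$ and maximum degree at most $d$, we build $G'$ by replacing each arc $(u,v)$ with the constant-size pseudoarc of~\cite{FraenkelScheinermanUllman1993}. Since $|E(G)|\le dn$, this takes $O(dn)$ time. The game equivalence is item~1 of Lemma~\ref{lem:dir-to-undir}.

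\textbf{Step 2: degree bound.} An original vertex $u$ keeps one attachment edge, or a constant number $c$ of them, for each incident arc. Hence its degree in $G'$ is at most $cd$. Internal gadget vertices have constant degree. So $\Delta(G')\le \max(cd, c')$.

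\textbf{Step 3: width bound.} For pathwidth, item~3 of the lemma already gives a decomposition of width $O(k^2)$, built explicitly from the given one in linear time. For treewidth the same construction works. For each arc $(u,v)$ we pick one node $t$ whose bag contains both $u$ and $v$ (say the node closest to the root, found in linear time). We then attach a new leaf node below $t$ whose bag is $\{u,v\}$ plus all gadget vertices of that arc. The result is a tree decomposition of width $\max(k, O(1))$, with the number of nodes increased by $|E(G)|$. With this leaf trick the width stays $O(k)$, and the same trick works for pathwidth too, since it only enlarges the relevant bag by a constant. Attaching leaves does not keep a path decomposition a path, but for the corollary a tree decomposition suffices, because Theorem~\ref{thm:bodlaender-tw-degree} only needs treewidth.

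\textbf{Step 4: solve.} Apply Theorem~\ref{thm:bodlaender-tw-degree} to $G'$ with parameters $k' = \max(k,O(1))$ and $d' = O(d)$. Since $|V(G')| = O(dn)$, the total running time is $f(k',d')\cdot O(dn) = f'(k,d)\cdot n$. The pathwidth case follows because pathwidth is at least treewidth and a path decomposition is a tree decomposition.

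\textbf{Main obstacle.} The step needing care is that Bodlaender's result must apply to the undirected game on $G'$ with the same start vertex. If Theorem~\ref{thm:bodlaender-tw-degree} is read as covering directed graphs directly, the whole reduction is unnecessary, and the corollary follows immediately because directedness affects neither treewidth nor degree. I would first check which reading the paper's citation supports, and present the direct argument with the transfer as a fallback.
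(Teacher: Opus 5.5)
Your proposal is correct and follows essentially the same route as the paper: you replace each arc by the constant-size pseudoarc of Lemma~\ref{lem:dir-to-undir}, check that size, degree and width stay bounded in $k$ and $d$, and apply Theorem~\ref{thm:bodlaender-tw-degree}, placing each gadget's vertices locally at a bag containing both endpoints in the treewidth case (the paper uses introduce-edge nodes of a nice tree decomposition, while your pendant leaves even give width $\max(k,O(1))$ instead of the $O(k^2)$ the paper takes from the lemma for pathwidth). Your closing remark is also well taken: the paper itself states Theorem~\ref{thm:bodlaender-cactus} for directed graphs, and under that reading Theorem~\ref{thm:bodlaender-tw-degree} applies to $G$ directly, so the transfer is not even needed.
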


\begin{proof}
    Let $G$ be an instance of \textsc{Directed Edge Geography} on a directed
    graph of pathwidth at most $k$ and maximum degree at most $d$, together
    with a path decomposition of width at most $k$.
    
    Apply Lemma~\ref{lem:dir-to-undir} to obtain an equivalent instance
    $G'$ of \textsc{Undirected Edge Geography}. By the lemma, each directed
    edge of $G$ is replaced by a constant-size gadget, so $|V(G')|=O(|V(G)|+|E(G)|)$,
    the maximum degree of $G'$ is bounded by a function of $d$, and
    $\pw(G') \leq g(k)$ for some function $g$.
    
    Now \textsc{Undirected Edge Geography} is a special case of
    \textsc{Edge Generalized Geography}, and pathwidth bounds treewidth from
    above. Hence Theorem~\ref{thm:bodlaender-tw-degree} implies that $G'$ can be solved in time
    $f(g(k),d')\cdot |V(G')|$, where $d'$ is the maximum degree of $G'$.
    Since $d'$ depends only on $d$ and $|V(G')|$ is linear in the size of $G$,
    this running time is of the form $f'(k,d)\cdot n$.
    
    Because the reduction preserves the winner, the same bound holds for
    \textsc{Directed Edge Geography}.

    For the treewidth statement, one argues in the same way starting from a
    tree decomposition. Using a nice tree decomposition with introduce-edge
    bags, one inserts the constant-size gadget when the corresponding edge is
    introduced, places the new gadget vertices in that bag, and forgets them
    immediately afterwards. This yields an equivalent undirected instance whose
    treewidth and maximum degree are bounded by functions of the original
    parameters, so Bodlaender's theorem applies exactly as above.
\end{proof}

\section{An XP Algorithm on Given Tree Partitions}

In this section we show that \textsc{Undirected Edge Geography} is in XP on
simple graphs when the input is given together with a rooted tree partition of
bounded width. 
By the standard reduction from the previous section, the same
will then hold for \textsc{Directed Edge Geography}.

The key idea is to summarize each child subtree by a finite \emph{interface
type} describing how play can enter the subtree from its parent bag, how it may
later return to the parent bag, and what smaller residual child subinstance
remains afterwards. For fixed width $k$, only finitely many such types can
occur. Moreover, children of the same type are interchangeable from the
perspective of the parent bag. This allows us to compress the state of a bag to
the multiplicities of the child types that are still present.

A complete formal development of the type universe, the replacement lemma, and
the local semantics, and a worked example is given in Appendix~\ref{app:sec:algo}. 
Implementation details are given in 
Appendix~\ref{sec:code} and a proof of concept implementation can be found at

\noindent
\url{https://github.com/ThobiasKH/GeographyXPAlgorithm}.

\newpage
\subsection{Interfaces and types}

We use a rooted version of the notion of a tree partition.

\begin{figure}[ht]
    \centering
    \includegraphics[width=\textwidth]{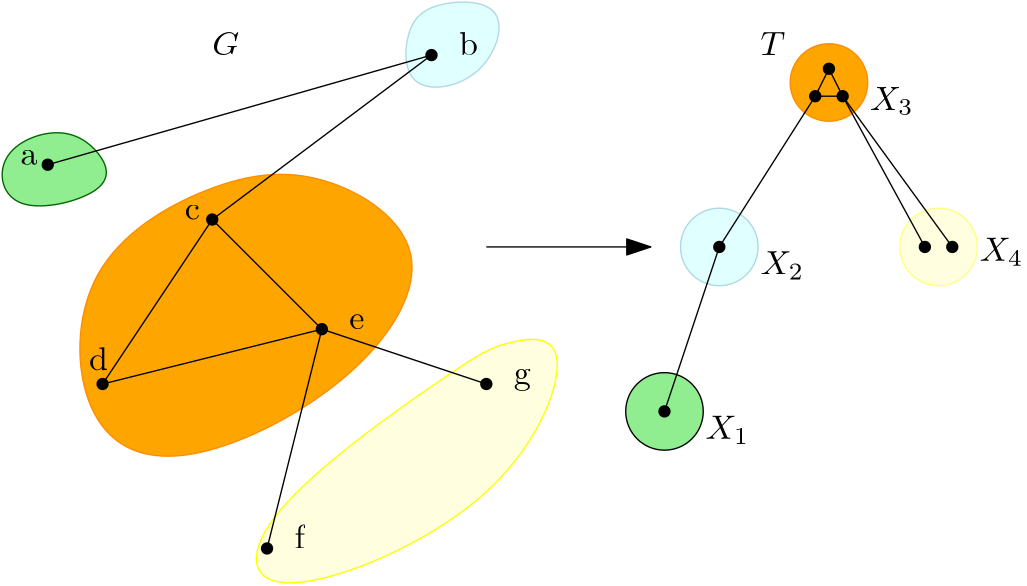}
    \caption{A tree partition of a graph $G$ with width $k = 3$.}
    \label{short:fig:tree-partition}
\end{figure}

\begin{definition}[Rooted tree partition]
A \emph{rooted tree partition} of a graph $G$ is a pair
$(\{X_i \mid i\in I\},T)$,
where $\{X_i \mid i\in I\}$ is a partition of $V(G)$ and $T$ is a rooted tree
with node set $I$, such that every edge of $G$ is either contained in a bag
$X_i$ or has its endpoints in two bags whose nodes are adjacent in $T$.
The \emph{width} of the tree partition is $\max_{i\in I}|X_i|$.
\end{definition}

Throughout this section, let 
\((\{X_i \mid i \in I\}, T)\) be a rooted tree partition 
of width $k$ of the input graph $G$. 
If $i$ is a non-root node, let $p(i)$ denote its parent in $T$. 
The \emph{parent cut} of $X_i$ is the set of edges with one 
endpoint in $X_i$ and the other endpoint in $X_{p(i)}$. We call 
each edge in this cut a \emph{port} of $X_i$. The endpoint in 
$X_{p(i)}$ is the \emph{parent-side endpoint} of the port, and 
the endpoint in $X_i$ is the \emph{child-side endpoint}. 

Since $G$ is simple and both bags have size at most $k$, the parent cut 
has size at most $k^2$. After fixing an injective labeling 
$\iota_i : X_i \hookrightarrow [k]$ for each bag $X_i$, every port 
receives two labels: the label of its parent-side endpoint and the label of its child-side 
endpoint.

\noindent 
\textbf{Interface type, abbreviated.}

\noindent 
Let $(P,\lambda)$ be a labeled interface, where $P$ is a set of ports and
$\lambda:P\to [k]$ records the parent-side labels. An interface type on
$(P,\lambda)$ consists, for each entry port $a\in P$, of:
\begin{enumerate}
    \item a set of possible \emph{exit labels} $(q,\tau,b)$, where
    $q\in P\setminus\{a\}$ is the return port, $\tau$ is the type of the
    smaller residual child subinstance that remains after the excursion, and
    $b\in\{\mathsf{same},\mathsf{opp}\}$ records the parity of the return;
    \item a Boolean function $\Phi_a$ telling whether the player entering
    through $a$ can force a win, given the truth values of the attainable
    returns.
\end{enumerate}
The fully rigorous recursive definition appears in
Appendix~\ref{app:sec:algo}.

Intuitively, an interface type is a finite summary of the boundary behavior of a
child subtree: it records how play may enter the subtree, return to the parent
bag, and leave behind a smaller residual subinstance. The set of attainable
returns alone is not enough, since the value of entering the child also depends
on which of those returns are winning in the surrounding parent-side game. The
Boolean functions $\Phi_a$ encode exactly this dependence. 
A useful mental model is thinking of any subtree of a bag $X_i$ as a 
gadget that may only be entered (and possibly exited) a certain number of times, and 
the parent does not care about the exact inner workings of this gadget.

For fixed $k$, there are only finitely many realizable interface types. We
denote this finite set by $\mathcal T_k$, and note that $|\mathcal T_k|$ is entirely 
dependent on $k$.

\subsection{Compressed residual configurations}

Fix a non-root bag $X_i$. A \emph{compressed residual configuration} at $i$ is a
triple
$\Gamma=(U,F,\mathbf m)$,
where
\begin{itemize}
    \item $U$ is the set of unused parent ports of $i$,
    \item $F\subseteq E(G[X_i])$ is the set of unused internal edges of $X_i$,
    \item $\mathbf m=(m_\sigma)_{\sigma\in\mathcal T_k}$ is a multiplicity
    vector recording how many residual child subinstances of each realizable
    type are still present below $i$.
\end{itemize}

The point of compression is that the identities of the child subinstances do not
matter: only their types do. The Appendix proves a replacement lemma showing
that if one residual child subinstance is replaced by another of the same type,
then the parent-side outcome does not change. Consequently, the type of a bag
configuration is determined entirely by $(U,F,\mathbf m)$.

\begin{figure}[H]
    \centering
    \includegraphics[width=\textwidth]{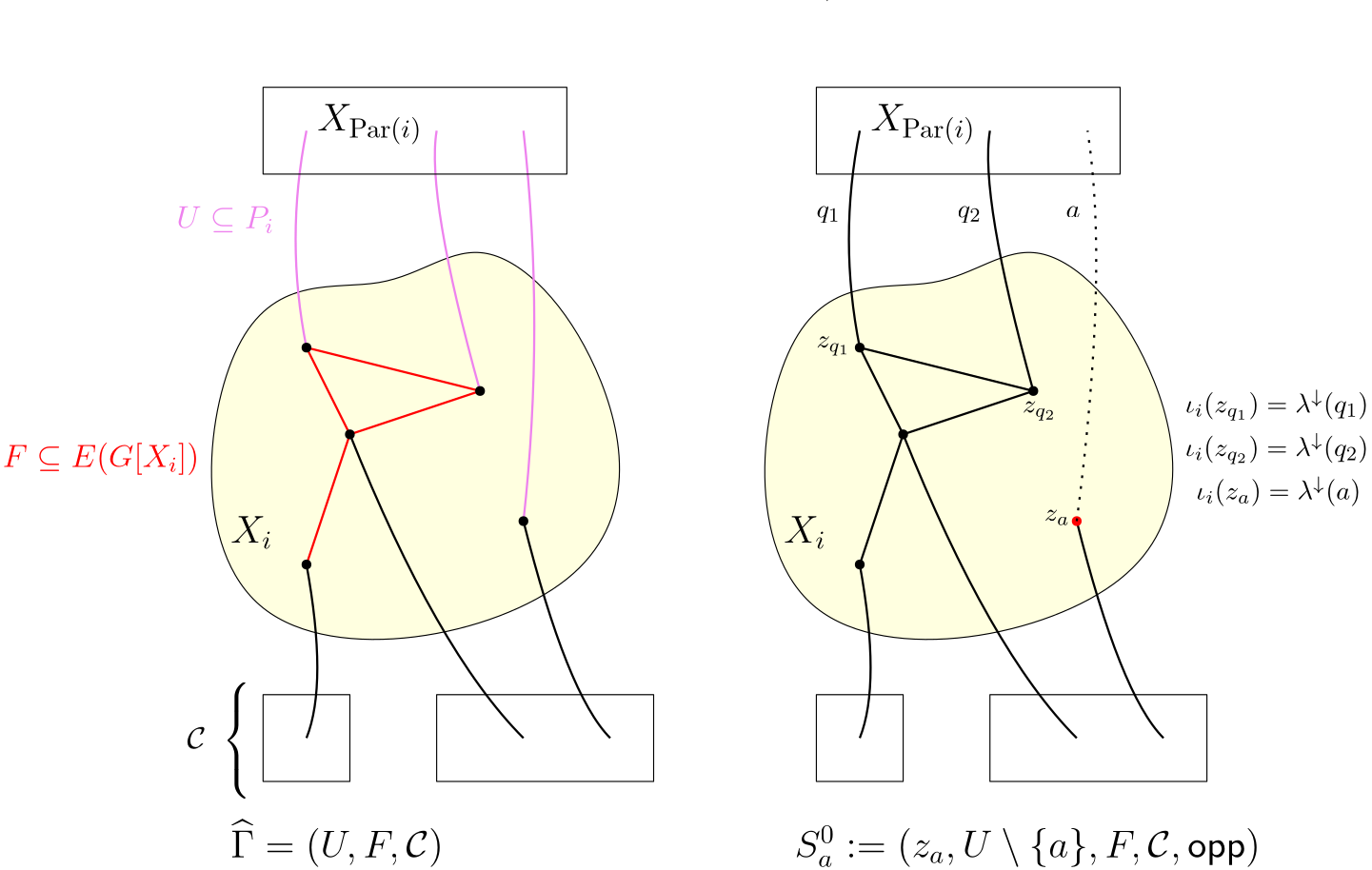}
    \caption{A residual configuration and entry state at a bag $X_i$. 
    Left: an expanded residual configuration $\widehat\Gamma = (U,F,\mathcal C)$ consisting of unused 
    parent ports $U$, unused internal edges $F$, and multiset of residual child subinstances $\mathcal C$ (In the 
    full technical development of this approach we consider multisets of 
    child subinstances before showing that this multiset can be compressed to a multiplicity vector). 
    Right: after entering through $a\in U$, the local game starts in the entry state 
    $S_a^0 = (z_a, U\setminus \{a\},F,\mathcal C, \mathsf{opp})$, where $z_a \in X_i$ is the child-side endpoint of $a$. 
    From this entry state, play may only proceed by a child-side excursion.}
    \label{short:fig:configuration_and_states}
\end{figure}

\subsection{Local games inside a bag}

Let $\Gamma=(U,F,\mathbf m)$ be a compressed residual configuration at a
non-root node $i$, and let $a\in U$ be an entry port. We consider the local game
that starts when play enters the current subtree through $a$.

A \emph{compressed local state} has the form
$S=(z,U',F',\mathbf m',\pi)$,
where
\begin{itemize}
    \item $z\in X_i$ is the current token position,
    \item $U'\subseteq U\setminus\{a\}$ is the set of still unused parent ports,
    \item $F'\subseteq F$ is the set of still unused internal bag edges,
    \item $\mathbf m'\le \mathbf m$ is the current multiplicity vector of
    residual child types,
    \item $\pi\in\{\mathsf{same},\mathsf{opp}\}$ records whether the player to
    move is the same as or opposite to the player who originally entered
    through $a$.
\end{itemize}

From such a state, play may proceed in one of three ways:
\begin{enumerate}
    \item \emph{Internal move:} traverse an unused edge of $G[X_i]$ incident
    with $z$;
    \item \emph{Exit move:} leave the current subtree through a port
    $q\in U'$ whose child-side label matches $z$;
    \item \emph{Child excursion:} enter a residual child subinstance of some
    type $\sigma$ with $m'_\sigma>0$ through a port whose label matches $z$.
    If the child later returns with exit label $(q,\tau,b)$, then the successor
    state replaces one copy of $\sigma$ in the multiplicity vector by one copy
    of $\tau$, moves the token to the port $q$, and updates the parity by $b$.
\end{enumerate}

The full formal statement of these transitions, together with the proof that
they faithfully capture the underlying expanded game, is deferred to
Appendix~\ref{app:sec:algo}.
The crucial point is that this local game is acyclic. Indeed, if we define the
measure
\[
\mu(S):=
|U'|+|F'|+\sum_{\sigma\in\mathcal T_k}\operatorname{rk}(\sigma)m'_\sigma,
\]
then every internal move strictly decreases $|F'|$, and every returning child
excursion replaces a type by a strictly smaller residual type, thereby
decreasing the rank contribution. Hence every transition strictly decreases
$\mu(S)$.

It follows that for every pair $(\Gamma,a)$, the local game can be solved by
reverse dynamic programming on a finite acyclic graph. From these values we can
compute:
\begin{itemize}
    \item the set $A_a^\Gamma$ of attainable exit labels from entry through $a$,
    \item the Boolean function $\Phi_a^\Gamma$ describing whether the entrant
    can force a win as a function of the values of those returns.
\end{itemize}
Together, these determine the interface type $\tau(\Gamma)$ of the compressed
residual configuration.

\subsection{Bottom-up computation and the root bag}

We now compute all relevant types bottom-up over the rooted tree partition.

Fix a node $i$. Since the children of $i$ have already been processed, the set
of realizable child types that may appear below $i$ is known. 
We consider all compressed residual configurations arising at \(i\), ordered by measure.
For each such $\Gamma$ and each
entry port $a\in U$, we solve the associated local game as above and thereby
compute $\tau(\Gamma)$.

For fixed \(k\), let \(t_k:=|\mathcal T_k|\). Since the parent cut of a bag has
size at most \(k^2\), there are at most \(2^{k^2}\) choices for the unused port
set \(U\). Since a bag contains at most \(\binom{k}{2}\) internal edges, there
are at most \(2^{\binom{k}{2}}\) choices for \(F\). Finally, each multiplicity
vector \(\mathbf m=(m_\sigma)_{\sigma\in\mathcal T_k}\) has \(t_k\) coordinates,
each lying in \(\{0,\dots,n\}\), and hence there are at most
\((n+1)^{t_k}\) possibilities for \(\mathbf m\). Thus the number of compressed
residual configurations at a bag is at most
\(2^{k^2+\binom{k}{2}}(n+1)^{t_k}=n^{f_1(k)}\).

For a fixed pair \((\Gamma,a)\), a compressed local state is determined by
\(z\in X_i\), a subset \(U'\subseteq U\setminus\{a\}\), a subset \(F'\subseteq F\),
a multiplicity vector \(\mathbf m'\le \mathbf m\), and a parity bit. Therefore
the number of compressed local states is at most
\(2k\cdot 2^{k^2+\binom{k}{2}}(n+1)^{t_k}=n^{f_2(k)}\).
Moreover, each such state has only \(f_3(k)\) outgoing transitions: the number
of internal moves is bounded by \(\binom{k}{2}\), the number of exits by \(k^2\),
and the number of child excursions by a function of \(k\), since \(\mathcal T_k\)
is finite and each type has rank at most \(k^2\). Since the local game is
acyclic, its values can therefore be computed in time \(n^{f_4(k)}\). Summing
over all bags still yields total running time \(n^{f(k)}\).

At the root bag $X_r$, there is no parent interface. After all child types have
been computed, we solve an analogous compressed root game whose states have the
form $R=(z,F,\mathbf m,\pi)$,
where $z\in X_r$, $F\subseteq E(G[X_r])$, $\mathbf m$ is the multiplicity vector
of the residual child types of the root’s children, and
$\pi\in\{\mathsf{same},\mathsf{opp}\}$ records the player to move relative to
the initial player.

As before, the root game is acyclic, since every internal move consumes an
internal root edge and every returning child excursion strictly decreases the
total rank contribution of the child types. Thus it can again be solved by
reverse dynamic programming. The initial root state is
$R_0=(s,E(G[X_r]),\mathbf m_r,\mathsf{same})$,
and Player~1 has a winning strategy in the original instance if and only if
$R_0$ is winning in this compressed root game.

\begin{theorem}
\label{thm:xp-tpw-short}
\textsc{Undirected Edge Geography} on simple graphs, given together with a
rooted tree partition of width $k$, is solvable in XP time parameterized by
$k$.
\end{theorem}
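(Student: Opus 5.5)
The proof assembles the ingredients of this section into one induction over the rooted tree partition, followed by a running-time count. The induction is bottom-up over $T$. For every node $i$ and every compressed residual configuration $\Gamma=(U,F,\mathbf m)$ arising at $i$, the algorithm correctly computes the interface type $\tau(\Gamma)$, meaning the type of the residual subinstance of the subtree rooted at $i$. For a leaf, $\mathbf m$ is the zero vector. The local game then has only internal and exit moves, so its attainable exit labels and the Boolean functions $\Phi_a$ can be read off directly by reverse dynamic programming.

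For the inductive step, assume that the types of all residual child subinstances are known. Correctness then rests on two facts. The first is the replacement lemma: replacing a residual child subinstance by another of the same type does not change the parent-side outcome. Hence the expanded state $(z,U',F',\mathcal C,\pi)$ and its compressed state $(z,U',F',\mathbf m',\pi)$ have the same value. The second is a simulation argument. Every play of the real game started by entering through $a$ decomposes into maximal segments: internal moves inside $X_i$, exits through parent ports, and excursions into child subtrees. By definition of the child's type, each excursion corresponds to an exit label $(q,\tau,b)$ or to a terminal outcome governed by $\Phi$. Because $G$ is simple and the tree partition forces every edge leaving $X_i$ to go to the parent or to a child bag, no other moves exist. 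Hence the local game faithfully represents the real game restricted to the subtree. Its acyclicity, via the measure $\mu$, justifies evaluating all states by reverse dynamic programming in order of decreasing $\mu$. From these values we read off $A_a^\Gamma$ and $\Phi_a^\Gamma$, which together give $\tau(\Gamma)$.

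The main obstacle is the replacement lemma together with the finiteness of $\mathcal T_k$. The difficulty is that the type is defined recursively through residual types $\tau$. I would prove both by induction on rank, where rank is the number of unused parent ports, bounded by $k^2$. Each exit consumes the entry port and the exit port, so residual types have strictly smaller rank. Consequently, types of rank $r$ are finite objects over the finite set of types of rank less than $r$. This gives $|\mathcal T_k|\le g(k)$. The replacement lemma itself follows by a strategy-copying argument: a player's strategy against one child instance transfers to the other, because both children offer identical attainable returns with identical $\Phi$-dependence.

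Finally, the root is handled by the compressed root game, whose correctness follows from the same simulation argument without exits. Player~1 wins exactly when $R_0$ is winning. For the running time, as counted above, each bag has $n^{f_1(k)}$ configurations. Each configuration has at most $k^2$ entry ports and a local game with $n^{f_2(k)}$ states and $f_3(k)$ transitions per state. Summed over at most $n$ bags, this gives $n^{f(k)}$ time, which proves the theorem.
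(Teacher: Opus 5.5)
Your proposal is correct and follows essentially the same route as the paper. Both arguments compute finitely many interface types bottom-up, compress child multisets to multiplicity vectors via a replacement lemma, evaluate acyclic local games ordered by the measure $\mu$, solve a compressed root game, and finish with the same $n^{f(k)}$ count. The paper differs only in proving replacement by a value-based induction on the configuration measure $\mu(\widehat{\Gamma})$, simultaneously with well-definedness of types, rather than by strategy copying, and in stating explicitly that configurations at a node are processed in increasing measure so that the residual types $\tau(\Gamma_q)$ in exit labels are already available.
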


\begin{proof}[Proof sketch]
For fixed \(k\), only finitely many realizable interface types can occur. At a
bag \(X_i\), a compressed residual configuration is determined by the unused
parent ports, the unused internal bag edges, and a multiplicity vector over the
finite type universe \(\mathcal T_k\). Hence the number of such configurations
is bounded by \(n^{f_1(k)}\). For each configuration and entry port, the
associated compressed local game has at most \(n^{f_2(k)}\) states and only
\(f_3(k)\) outgoing transitions per state, and is acyclic. Its values can
therefore be computed by reverse dynamic programming in time \(n^{f_4(k)}\).
Processing all bags bottom-up and finally solving the analogous root game yields
the winner of the original instance in time \(n^{f(k)}\). Thus the problem
belongs to XP.
\end{proof}

\begin{corollary}
\textsc{Directed Edge Geography} on simple graphs, given together with a rooted
tree partition of width $k$, is solvable in XP time parameterized by $k$.
\end{corollary}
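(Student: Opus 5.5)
The plan is to reduce to Theorem~\ref{thm:xp-tpw-short} via the edge-replacement construction of Lemma~\ref{lem:dir-to-undir}. The one thing that needs checking is that this replacement preserves bounded tree-partition width and simplicity; its correctness is already given by the lemma. So given a simple directed graph $G$ with start vertex $s$ and a rooted tree partition $(\{X_i\mid i\in I\},T)$ of width $k$, we build the undirected graph $G'$ by replacing every arc $(u,v)$ with a fresh copy of the constant-size pseudoarc gadget of Fraenkel, Scheinerman and Ullman. We keep $s$ as the start vertex. By item~1 of Lemma~\ref{lem:dir-to-undir}, Player~1 wins on $(G,s)$ if and only if Player~1 wins on $(G',s)$.

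Next we construct a rooted tree partition of $G'$ on the same tree $T$. Each original vertex stays in its bag. The internal vertices of the gadget for an arc $(u,v)$ are placed as follows.
\begin{itemize}
\item If $u$ and $v$ lie in the same bag $X_i$, all gadget vertices go into $X_i$.
\item If $u$ and $v$ lie in adjacent bags, all gadget vertices go into the child bag of the two.
\end{itemize}
Every gadget edge is then either internal to one bag or joins the child bag to its parent. The second case happens only for the gadget edges incident with the endpoint lying in the parent bag. The partition property and the edge condition of a rooted tree partition therefore hold.

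For the width bound, a bag $X_i$ receives gadgets for arcs inside $X_i$ and for arcs in its parent cut. Since $G$ is simple, there are at most $k(k-1)$ of the former and at most $2k^2$ of the latter, where the factor $2$ allows for antiparallel arcs. Each gadget contributes a constant $c$ of vertices, so the new width is at most $k+c\cdot 3k^2=O(k^2)$.

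It remains to check that $G'$ is simple. Distinct arcs, including antiparallel ones, receive disjoint sets of internal gadget vertices. Hence no parallel edges can arise between original vertices, and none arise inside a gadget, because the pseudoarc itself is a simple graph whose only attachments are to $u$ and $v$. Verifying this last point against the gadget in Figure~\ref{fig:pseudoarc} is the main place where care is needed. If the gadget were to use a direct edge $\{u,v\}$, antiparallel arcs could create a multi-edge, and we would then subdivide that edge, at the cost of only a constant in the width and with no change to game equivalence.

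The construction is polynomial, and $|V(G')|=O(|V(G)|+|E(G)|)$. Applying Theorem~\ref{thm:xp-tpw-short} to $G'$ with a tree partition of width $O(k^2)$ gives running time $|V(G')|^{f(O(k^2))}=n^{f'(k)}$, which is XP in $k$.
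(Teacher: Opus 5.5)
Your proof is correct. It uses the same reduction as the paper: replace each arc by the pseudoarc via Lemma~\ref{lem:dir-to-undir}, then apply Theorem~\ref{thm:xp-tpw-short} to a tree partition of width $O(k^2)$. The only difference is how you build the tree partition of $G'$.

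The paper replaces each tree edge $ij$ by a constant-length path of new bags. These new bags hold the gadgets of the $O(k^2)$ arcs between $X_i$ and $X_j$, and intra-bag gadgets go into $X_i$ or into auxiliary bags adjacent to $X_i$. The paper only sketches how the gadget vertices are spread over the new bags so that every gadget edge lies inside a bag or between consecutive bags.

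You keep $T$ unchanged and put each crossing gadget entirely into the child bag. Since a gadget meets the rest of $G'$ only at $u$ and $v$, the tree-partition condition holds immediately and does not depend on the pseudoarc's internal shape. The cost is that the original bags grow, but the width is $O(k^2)$ either way. You also check explicitly that $G'$ is simple, which Theorem~\ref{thm:xp-tpw-short} requires and the paper leaves implicit.

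One correction to your fallback remark. Subdividing an edge once does change the game: the traversal $u\to w\to v$ takes two moves and leaves the original mover to move at $v$. To preserve equivalence you would instead replace $\{u,v\}$ by a path with two internal vertices. Its inner moves are forced and its traversal takes three moves, matching the parity of a single move.
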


\begin{proof}[Proof sketch]
Reduce the directed instance to an equivalent undirected instance by replacing
each directed edge by the standard constant-size gadget used in the 
reduction earlier. As in Appendix~\ref{app:sec:algo}, the given rooted tree
partition can be transformed into one of width bounded by a function of $k$ for
the resulting undirected graph. The theorem above then applies.
\end{proof}

\begin{remark}
By a result of Bodlaender, Groenland, and Jacob~\cite{BodlaenderGroenlandJacob2022},
given an $n$-vertex graph $G$ and an integer $k$, one can in time $k^{O(1)}n^2$
either compute a tree partition of width $O(k^7)$ or correctly conclude that
the tree-partition-width of $G$ is greater than $k$. Since a tree partition can
be rooted arbitrarily without changing its width, Theorem~\ref{thm:xp-tpw-short}
also yields XP algorithms for \textsc{Undirected Edge Geography} and
\textsc{Directed Edge Geography} parameterized by tree-partition width.
\end{remark}

\section{Conclusion}

We proved that both \textsc{Directed Edge Geography} and
\textsc{Undirected Edge Geography} are \textsc{XNLP}-hard when parameterized by
pathwidth. Bodlaender left open what width bounds alone imply for
\textsc{Edge Generalized Geography}. Our \textsc{XNLP}-hardness result for
pathwidth does not settle the possibility of polynomial-time solvability for
each fixed width bound, but it shows that width alone is unlikely to yield an
\textsc{FPT}-type parameterized algorithm, unless \textsc{FPT}=\textsc{XNLP}.
On the positive side, we showed that both variants are in \textsc{XP} on simple
graphs when parameterized by tree-partition width. We also showed that
\textsc{Directed Edge Geography} is fixed-parameter tractable when
parameterized by treewidth and maximum degree. Together, these results clarify
that width-based tractability for edge geography depends strongly on the
decomposition model: pathwidth already captures enough structure for hardness,
while tree partitions still admit a decomposition-based dynamic program.

\subsection{Discussion}

Our hardness result suggests that the difficulty of the pathwidth case is not
merely technical. In edge-deletion geography, a small separator does not
localize play in the same way as for many vertex-based problems. Separator
vertices may be revisited many times, while the incident edges are consumed
one by one, so the interaction across the separator depends on a history that
is not naturally summarized by a small local state. From this perspective, the
main obstacle to upper bounds for pathwidth is not simply the lack of the
right dynamic program, but the apparent absence of a compact interface that
captures the residual effect of partial play.

At the same time, the XP algorithm on rooted tree partitions shows that
decomposition-based methods do become viable once the interaction between
pieces is sufficiently restricted. In that setting, one can push richer
interface information through the decomposition and still retain enough
structure for a finite-type compression argument. This suggests that the right
structural boundary for geography-type games may lie not between hard and easy
graph classes in the ordinary sense, but between decomposition models that do
or do not support a controlled summary of residual play.

More broadly, edge geography may be viewed as a canonical token-moving model
for alternating play on explicit residual-state graphs. For example, it can
serve as an intermediate model for reductions from games or reconfiguration
processes in which moves irreversibly consume resources, force local responses,
or gradually restrict the future state space. Whenever the evolution of a game
can be represented by an acyclic residual-state graph of manageable size,
winner determination reduces to a geography-type reachability problem on that
graph. Even in settings where the residual-state graph contains cycles, a
similar perspective may still apply after augmenting states with bounded
progress information that restores acyclicity. In this sense, the present work
suggests a broader program of studying which decomposition parameters can
support structural algorithms for geography-type games and related
\textsc{PSPACE}-complete problems.

Methodologically, \textsc{Edge Geography} is also a useful intermediate
problem. Reductions are often more natural to design in the directed setting,
where one can encode asymmetric choices and forced traversals more explicitly.
On the other hand, undirected formulations are often more intuitive from an
algorithmic point of view. In our setting, this asymmetry is especially
helpful because \textsc{Directed Edge Geography} admits a simple constant-size
gadget reduction to \textsc{Undirected Edge Geography} that preserves the
winner and controls the structural parameters typically of interest. Thus one
may often work in the directed variant when designing reductions, and then
transfer the result to the undirected variant essentially for free.

\subsection{Future Work}

The main open question is the parameterized complexity of
\textsc{Directed Edge Geography} and \textsc{Undirected Edge Geography} with
respect to pathwidth alone. In particular, it remains open whether either
problem admits an XP algorithm, or belongs to some other natural class under this
parameterization, or whether these problems could be complete for 
a larger parameterized space class, such as para-PSPACE. 

Furthermore, it would be interesting to test whether the interface-based viewpoint
developed here extends to other PSPACE-complete graph games and
residual-state models. This may help identify which kinds of boundary
information are sufficient for decomposition-based algorithms beyond the
specific case of edge geography.

Lastly, from the reviewers' feedback we posit the following conjectures and open questions.  

\begin{conjecture}[Treewidth XALP lower bound]
    \textsc{Directed Edge Geography} is XALP-hard parameterized by treewidth. 
\end{conjecture}
If the above conjecture holds then, by the transfer argument from earlier, 
XALP-hardness holds for the undirected version also. 

\begin{conjecture}[Tree-partition width XALP lower bound]
    \textsc{Undirected Edge Geography} is XALP-hard parameterized by tree-partition width. 
\end{conjecture}

If the above conjecture holds then it is natural to ask whether
there exists an XALP algorithm for \textsc{Undirected Edge Geography} 
parameterized by tree-partition width, and whether the same lower bound holds for 
the directed variant aswell.

\bibliographystyle{plainurl}
\bibliography{bibliography}

\newpage 
\appendix 
\section{Full Hardness Construction, Omitted Proofs and Example} \label{app:sec:hardness}

We prove XNLP-hardness of \textsc{Directed Edge Geography} by a reduction from
\textsc{Chosen Maximum Outdegree}, which is known to be XNLP-complete when
parameterized by pathwidth~\cite{bodlaender2022hardtw}.
Following~\cite[Section~2.3]{bodlaender2022hardtw}, we assume that all integers 
in the source instance are encoded in unary.

\begin{definition}
Let $G$ be an undirected graph.
An \emph{orientation} of $G$ is a function
\[
\omega : E(G)\to \{(u,v)\in V(G)\times V(G)\mid \{u,v\}\in E(G)\}
\]
such that for every $e=\{u,v\}\in E(G)$ we have
$
\omega(e)\in \{(u,v),(v,u)\}.
$
\end{definition}

\begin{definition}
Let $G$ be an undirected graph, let $\omega$ be an orientation of $G$, and let $x\in V(G)$.
We define
$
\theta_\omega(x)=\{\, e\in E(G)\mid \omega(e)=(x,v)\text{ for some }v\in V(G)\,\}.
$
\end{definition}

\begin{definition}
For an undirected graph $G$, let $\Ori(G)$ denote the set of all orientations of $G$.
\end{definition}

\begin{problem}{Chosen Maximum Outdegree}
\textbf{Input:} An undirected graph $G=(V,E)$, a weight function
$w:E\to \mathbb{Z}_{>0}$, and a vertex bound
$t:V\to \mathbb{Z}_{>0}$.

\noindent 
\textbf{Question:} Is there an orientation $\omega\in \Ori(G)$ such that
\[
\sum_{e\in \theta_\omega(v)} w(e) \le t(v)
\]
for every vertex $v\in V(G)$?
\end{problem}

Note that this is exactly \textsc{Chosen Maximum Outdegree} as defined in \emph{Problems hard for treewidth
but easy for stable gonality}~\cite{bodlaender2022hardtw}, but with some 
notational differences from what is typically found in the litterature. 

When we refer to \textsc{Chosen Maximum Outdegree} parameterized by
pathwidth, we mean the decomposition-given variant in which the input
additionally contains a path decomposition of the graph \(G\) of width at most
\(k\), and the parameter is \(k\). This is the version used in the reduction
below.

\begin{figure}[H]
    \centering
    \includegraphics[width=\textwidth]{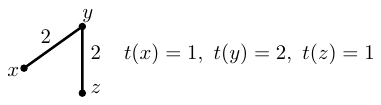}
    \caption{An instance of \textsc{Chosen Maximum Outdegree} which 
    admits no feasible orientations.}
    \label{fig:orientation}
\end{figure}

\noindent 
\textbf{Proof intuition.} 

\noindent 
The reduction encodes an orientation of the input graph by choosing, for each 
edge gadget, one of its two sides during an initial traversal phase. Choosing
the $y$-side of the gadget for $e=\{x,y\}$ corresponds to orienting $e$ as
$(x,y)$, and symmetrically for the other choice.

After all edge gadgets have been traversed, Player~2 selects a vertex $x$ in a
challenge phase. The challenge gadget then allows Player~2 to repeatedly enter
the sides corresponding to edges oriented out of $x$. If the total weight
oriented out of $x$ exceeds $t(x)$, then Player~2 can force one more successful
challenge excursion than there are available return paths. If the orientation is
valid, then every non-losing challenge excursion consumes one challenge edge and
one return path, and Player~2 eventually runs out of non-losing challenge
moves.

\begin{figure}[ht]
    \centering
    \includegraphics[width=\textwidth]{edge.pdf}
    \caption{Edge gadget used in the reduction. 
    The \(x\)- and \(y\)-sides are symmetric; the \(\xi\)- and \(\delta\)-vertices implement the weighted forward/backward traversal, 
    while the \(\zeta\)-vertices lead either to return vertices \(x^{\mathrm{out}}_{e,i},y^{\mathrm{out}}_{e,i}\) or to escape vertices \(x^{\mathrm{esc}}_{e,i},y^{\mathrm{esc}}_{e,i}\).}
    \label{fig:edge-gadget}
\end{figure}

\begin{definition}[Edge gadget]
Let $(H,w,t)$ be an instance of \textsc{Chosen Maximum Outdegree}, and let
$e=\{x,y\}\in E(H)$.
We construct a directed graph $G_e$ as follows.

\medskip

\textbf{Vertices.}
Introduce fresh vertices
\[
v_e^{\mathrm{choose}},\quad
v_e^{\mathrm{choose},x},\quad
v_e^{\mathrm{choose},y},\quad
x_e,\quad y_e,\quad
x'_e,\quad y'_e,\quad
o_e,\quad r_e.
\]

\medskip

\textbf{Choice edges.}
Add directed edges
\[
v_e^{\mathrm{choose}} \to v_e^{\mathrm{choose},x} \to x_e,
\qquad
v_e^{\mathrm{choose}} \to v_e^{\mathrm{choose},y} \to y_e.
\]

\medskip

\textbf{Weight simulation on the $x$-side.}
For each $i\in [w(e)]$, introduce a vertex $\xi^x_{e,i}$.
For each $i\in [w(e)-1]$, introduce a vertex $\delta^x_{e,i}$.
Add the paths
$x_e\to \xi^x_{e,i}\to x'_e$ for every $i\in [w(e)]$ and
$x'_e\to \delta^x_{e,i}\to x_e$ for every
$i\in [w(e)-1]$.

\medskip
\noindent
\textbf{Weight simulation on the $y$-side.}
For each $i\in [w(e)]$, introduce a vertex $\xi^y_{e,i}$.
For each $i\in [w(e)-1]$, introduce a vertex $\delta^y_{e,i}$.
Add the paths
$
y_e\to \xi^y_{e,i}\to y'_e$ for every $i\in [w(e)]$
and
$
y'_e\to \delta^y_{e,i}\to y_e$
for every $i\in [w(e)-1]$.

\medskip
\noindent
\textbf{Merge and return.}
Introduce fresh subdivision vertices $m^x_e$ and $m^y_e$, and add
\[
x'_e\to m^x_e\to o_e,\qquad y'_e\to m^y_e\to o_e,\qquad o_e\to r_e.
\]

\medskip
\noindent
\textbf{Escape structure.}
For each $i\in [w(e)+1]$, introduce vertices
$\zeta^x_{e,i},\ x^{\mathrm{out}}_{e,i},\ x^{\mathrm{esc}}_{e,i}$ and 
$\zeta^y_{e,i},\ y^{\mathrm{out}}_{e,i},\ y^{\mathrm{esc}}_{e,i}$.
Add
$x'_e\to \zeta^x_{e,i}$ and $y'_e\to \zeta^y_{e,i}$,
together with
$\zeta^x_{e,i}\to x^{\mathrm{out}}_{e,i}$ and $\zeta^x_{e,i}\to x^{\mathrm{esc}}_{e,i}$,
and symmetrically on the $y$-side.
Finally add $x^{\mathrm{esc}}_{e,i}\to r_e,\qquad y^{\mathrm{esc}}_{e,i}\to r_e$ for 
every $i\in [w(e)+1]$.
\end{definition}

\begin{definition}[Full reduction graph]
\label{def:reduction-graph}
Let $(H,w,t)$ be an instance of \textsc{Chosen Maximum Outdegree}.
We construct a directed graph $G$ with designated start vertex $s$ as follows.

Fix, for each edge $e\in E(H)$, a bag of the given path decomposition that contains both
endpoints of $e$, and order the edges as $e_1,\dots,e_m$ by nondecreasing chosen bag index.

\medskip
\noindent
\textbf{Edge gadgets.}
For each $e_i$, construct $G_{e_i}$.
Set $s:=v_{e_1}^{\mathrm{choose}}$.
For each $i<m$, add the edge $r_{e_i}\to v_{e_{i+1}}^{\mathrm{choose}}$.

\medskip
\noindent
\textbf{Challenge gadget.}
Introduce fresh vertices $c$ and $m_c$, and add $r_{e_m}\to m_c\to c$.

For each vertex $x\in V(H)$, introduce fresh vertices $q_x$ and $c_x$, and add $c\to c_x\to q_x$.

For each edge $e$ incident with $x$ and each $i\in [w(e)]$, introduce a vertex
$\mu^x_{e,i}$ and add $q_x \to \mu^x_{e,i}\to x_e$.

Introduce a fresh vertex $v_x^{\mathrm{return}}$.
For every vertex $x^{\mathrm{out}}_{e,i}$ arising from an edge $e$ incident with $x$, add
$x^{\mathrm{out}}_{e,i}\to v_x^{\mathrm{return}}$.

Finally, for each $i\in [t(x)]$, introduce fresh vertices $\rho^x_i,\tau^x_i$ and add
$v_x^{\mathrm{return}}\to \rho^x_i \to \tau^x_i \to q_x$.
\end{definition}

\begin{figure}[H]
    \centering
    \includegraphics[width=\textwidth]{challenge-gadget.pdf}
    \caption{Figure showing how the challenge gadget for a vertex $x$ is connected 
    to an edge gadget $G_e$, where the behavior of $G_e$ is abstracted away. 
    Notice how, due to parity swap after the last edge gadget, Player~2 is now the 
    turquoise player. This is the case inside of $G_e$ also.}
    \label{fig:challenge-gadget}
\end{figure}

\noindent
\textbf{Parity convention.}

\noindent 
In the analysis below, it is important to keep track of the player to move at
the boundary vertices of each gadget. In particular, after an edge gadget has
been traversed according to the intended strategy, play reaches $r_e$ with
Player~2 to move.

\begin{lemma}[Phase 1 invariant]
\label{lem:phase1-invariant}
Fix an edge $e=\{x,y\}$. Suppose play reaches $v_e^{\mathrm{choose}}$ with
Player~1 to move, and Player~1 chooses one of the two sides of $G_e$ and
thereafter never moves from $x'_e$ or $y'_e$ to a $\zeta$-vertex inside $G_e$.
Then:
\begin{enumerate}
    \item all moves of Player~2 inside $G_e$ are forced;
    \item on the chosen side, Player~1 can traverse exactly $w(e)$ forward paths
    and exactly $w(e)-1$ backward paths before reaching the corresponding
    primed vertex again with no unused backward path remaining;
    \item from there, Player~1 moves to the corresponding merge vertex and then
    along $o_e\to r_e$;
    \item when play leaves $G_e$, the token is at $r_e$ and it is Player~2's turn.
\end{enumerate}
\end{lemma}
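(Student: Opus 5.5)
The plan is a direct trace of play through $G_e$, tracking parity and out-degrees. I treat the $x$-side; the $y$-side is symmetric.

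\textbf{Entering the chosen side.} Play starts at $v_e^{\mathrm{choose}}$ with Player~1 to move. Player~1 plays $v_e^{\mathrm{choose}}\to v_e^{\mathrm{choose},x}$. Since $v_e^{\mathrm{choose},x}$ has out-degree exactly $1$, Player~2 is forced to play $\to x_e$, and Player~1 is to move at $x_e$.

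\textbf{The loop $x_e\rightleftarrows x'_e$.} The only out-edges of $x_e$ in $G_e$ go to the vertices $\xi^x_{e,i}$. In the full graph, $x_e$ has in-edges from the $\mu$-vertices but no further out-edges, so Player~1 must move to some $\xi^x_{e,i}$. Each $\xi^x_{e,i}$ has out-degree $1$, so Player~2 is forced to play $\to x'_e$, and Player~1 is again to move, now at $x'_e$.

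At $x'_e$ the out-edges go to the $\delta^x_{e,j}$, to $m^x_e$, and to the $\zeta^x_{e,i}$; the hypothesis excludes the last kind. Each $\delta^x_{e,j}$ has out-degree $1$, so a move there forces Player~2 to play $\to x_e$ with Player~1 to move. Thus every round trip costs exactly one $\xi$-path and one $\delta$-path, and Player~1 is always the one to move at both $x_e$ and $x'_e$. This proves item~1 for the loop portion.

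For item~2, I count edges. There are $w(e)$ forward paths and $w(e)-1$ backward paths, and each arrival at $x_e$ except the first consumes a backward path. So if Player~1 keeps returning, the forward paths run out exactly when the backward paths do: after $w(e)$ forward and $w(e)-1$ backward traversals, the token sits at $x'_e$ with no backward path left. This is the ``can traverse'' reading of item~2.

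I should flag a caveat. Player~1 might instead move to $m^x_e$ earlier, which leaves some paths unused. The lemma's phrasing (``can traverse'') permits this, but later lemmas need the consumption to be complete. So I would state, as a remark, that the intended strategy exhausts the loop. Early exit would leave unused $\xi$-paths that Player~2 could exploit in the challenge phase, which is why exhausting is the intended play, but that argument is not part of this lemma.

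\textbf{Leaving the gadget.} At $x'_e$, with only $m^x_e$ remaining among the allowed moves, Player~1 plays $x'_e\to m^x_e$. Since $m^x_e$ has out-degree $1$, Player~2 is forced to play $\to o_e$. From $o_e$, Player~1 plays $o_e\to r_e$ (out-degree $1$), which gives item~3. The move was made by Player~1, so Player~2 is to move at $r_e$, which gives item~4.

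\textbf{Main obstacle.} The step needing the most care is verifying that every vertex Player~2 moves from has out-degree exactly $1$ in the full graph $G$, not only in $G_e$. These vertices are $v_e^{\mathrm{choose},x}$, the $\xi$'s, the $\delta$'s, and $m^x_e$, and I would check Definition~\ref{def:reduction-graph} to confirm that the challenge gadget adds edges only into $x_e$ (via the $\mu$'s) and out of the $x^{\mathrm{out}}$-vertices and $r_{e_m}$. Each of these out-edges is unused on first entry, since $G_e$ is visited for the first time in Phase~1. The parity bookkeeping is then a simple alternation count.
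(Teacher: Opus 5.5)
Your proof is correct and follows the same direct trace as the paper: enter the chosen side, alternate the forced $\xi$/$\delta$ traversals with Player~1 to move at both $x_e$ and $x'_e$, count $w(e)$ forward against $w(e)-1$ backward paths, and exit via $m^x_e\to o_e\to r_e$ with Player~2 to move. Your explicit check that every vertex Player~2 moves from has out-degree $1$ in the full graph $G$, and not only in $G_e$, is a detail the paper leaves implicit, and your check of it is accurate.
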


\begin{proof}
Assume Player~1 chooses the $x$-side; the $y$-side is symmetric.

After the move
$
v_e^{\mathrm{choose}} \to v_e^{\mathrm{choose},x}
$,
Player~2 is forced to move to $x_e$. Whenever the token is at $x_e$ and it is
Player~1's turn, Player~1 chooses an unused forward edge
$
x_e \to \xi^x_{e,i}
$,
and Player~2 is then forced to continue to $x'_e$. Whenever the token is at
$x'_e$ and there is an unused backward path, Player~1 chooses an unused edge
$
x'_e \to \delta^x_{e,j}
$,
and Player~2 is forced back to $x_e$.

There are exactly $w(e)$ forward paths and exactly $w(e)-1$ backward paths on
the $x$-side. Hence Player~1 can traverse all forward paths and all backward
paths, and after the last forward traversal reaches $x'_e$ with no unused
backward path remaining. At that point the only non-escape move on the chosen
side is
$
x'_e \to m^x_e
$,
followed by the forced move
$
m^x_e \to o_e
$,
after which Player~1 plays
$
o_e \to r_e
$.
Thus play leaves the gadget at $r_e$, and since the last move was made by
Player~1, it is Player~2's turn there.
\end{proof}

\begin{lemma}[Challenge excursions]
\label{lem:challenge-excursion}
Fix an edge $e=\{x,y\}$ and a vertex $x\in e$. Suppose Phase~1 has been
completed, so in particular the edge $o_e\to r_e$ has already been used.

\begin{enumerate}
    \item If the $x$-side of $G_e$ was traversed during Phase~1, then every
    challenge move $q_x \to \mu^x_{e,i} \to x_e$
    is immediately losing for Player~2, because all forward edges out of $x_e$
    have already been exhausted.

    \item If the $x$-side of $G_e$ was \emph{not} traversed during Phase~1, then
    after any challenge move $q_x \to \mu^x_{e,i} \to x_e$
    the token reaches $x_e$ with Player~2 to move. From that point on, every
    non-losing continuation has the following form:
    \begin{itemize}
        \item Player~2 repeatedly chooses unused forward edges
        $x_e\to \xi^x_{e,j}$;
        \item after each such move, Player~1 is forced to move to $x'_e$;
        \item at $x'_e$, Player~2 may either use an unused backward edge
        $x'_e\to \delta^x_{e,j'}$ (if one exists), or choose a $\zeta$-edge
        $x'_e\to \zeta^x_{e,\ell}$;
        \item if Player~2 moves to $m^x_e$, then Player~1 is forced to $o_e$ and
        Player~2 immediately loses, because $o_e\to r_e$ has already been used;
        \item if Player~2 moves to some $\zeta^x_{e,\ell}$, then Player~1's only
        non-losing reply is
        $\zeta^x_{e,\ell}\to x^{\mathrm{out}}_{e,\ell}$,
        since choosing $x^{\mathrm{esc}}_{e,\ell}$ leads to an immediate loss;
        \item after that, Player~2 is forced to move to
        $v_x^{\mathrm{return}}$. If a return path remains, then Player~1 chooses
        one and play returns to $q_x$ with Player~2 to move. If no return path
        remains, then Player~1 loses at $v_x^{\mathrm{return}}$.
    \end{itemize}
\end{enumerate}

Consequently, every non-losing challenge excursion on an untouched $x$-side
consumes exactly one challenge edge out of $q_x$ and exactly one return path
through $v_x^{\mathrm{return}}$, while possible uses of backward edges only
consume additional internal edges on the $x$-side.
\end{lemma}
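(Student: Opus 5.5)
The proof is a direct case analysis of the local structure at $x_e$, $x'_e$, the $\zeta$-vertices and $v_x^{\mathrm{return}}$. It tracks at each step which player is to move, using the parity fixed by the challenge gadget. After Phase~1, play reaches $r_{e_m}$ with Player~2 to move (Lemma~\ref{lem:phase1-invariant}, applied to the last gadget). The forced path $r_{e_m}\to m_c\to c$ then puts Player~2 to move at $c$. The moves $c\to c_x\to q_x$ then put Player~2 to move at $q_x$. So the move $q_x\to\mu^x_{e,i}$ is Player~2's, and $\mu^x_{e,i}\to x_e$ is Player~1's forced reply, leaving Player~2 to move at $x_e$. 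With this parity bookkeeping in place I check both parts.

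For part~1, I use that in Phase~1 Player~1 traversed all $w(e)$ forward edges $x_e\to\xi^x_{e,i}$ on the chosen $x$-side (Lemma~\ref{lem:phase1-invariant}, item~2). The only out-edges of $x_e$ are these forward edges, so Player~2 has no move at $x_e$ and loses. One subtlety is that the lemma assumes Player~1 never used a $\zeta$-edge in Phase~1. I will state part~1 under the intended Phase~1 play, and note separately that any deviation by Player~1 into a $\zeta$-edge in Phase~1 only uses up Player~1's own resources and can be ignored, or handled as a weakly dominated choice in the main theorem.

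For part~2, the $x$-side is untouched, so all $w(e)$ forward edges, $w(e)-1$ backward edges, $w(e)+1$ $\zeta$-edges and the merge edge at $x'_e$ are unused. At $x_e$, Player~2's only moves are forward edges, and each forces Player~1 from $\xi^x_{e,j}$ to $x'_e$, with Player~2 to move again. At $x'_e$ Player~2 has three kinds of moves:
\begin{enumerate}
\item a backward edge, which forces Player~1 back to $x_e$;
\item $x'_e\to m^x_e$, which forces Player~1 on to $o_e$, where Player~2 is stuck because $o_e\to r_e$ was used in Phase~1;
\item some $\zeta^x_{e,\ell}$.
\end{enumerate}
At $\zeta^x_{e,\ell}$ Player~1 chooses between $x^{\mathrm{esc}}_{e,\ell}$ and $x^{\mathrm{out}}_{e,\ell}$. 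Choosing $x^{\mathrm{esc}}_{e,\ell}$ forces Player~2 to $r_e$, and $r_e$'s only out-edge ($r_e\to v^{\mathrm{choose}}_{e'}$ or $r_e\to m_c$) was used in Phase~1, so Player~1 is stuck and loses. Choosing $x^{\mathrm{out}}_{e,\ell}$ forces Player~2 to $v_x^{\mathrm{return}}$, where Player~1 either takes an unused $\rho^x_j$, with forced continuation through $\tau^x_j$ to $q_x$ leaving Player~2 to move there, or is stuck.

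The main obstacle is establishing that Player~2 never runs out of $\zeta$-edges before wanting one. The number of arrivals at $x'_e$ is at most $w(e)$, so $w(e)+1$ $\zeta$-edges always suffice. I also have to confirm that arrivals at $r_e$ through $x^{\mathrm{esc}}$ are stuck for Player~1 in every case, including $e=e_m$.

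Finally, I collect the consequence. Each excursion that does not end in a loss begins with one edge $q_x\to\mu^x_{e,i}$ and ends by traversing one return path $\rho^x_j\tau^x_j$. Everything else it uses is forward, backward and $\zeta$ edges internal to the $x$-side. This yields the counting statement.
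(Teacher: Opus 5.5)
Your proposal is correct and follows the paper's proof. Both use the same move-by-move case analysis at $x_e$, $x'_e$, $\zeta^x_{e,\ell}$ and $v_x^{\mathrm{return}}$, with the merge and escape branches losing because $o_e\to r_e$ and the unique out-edge of $r_e$ were consumed in Phase~1; your explicit parity bookkeeping and your count showing that $w(e)+1$ $\zeta$-edges always suffice are extras the paper leaves implicit. One caveat on your side remark: a Phase~1 deviation into a $\zeta$-edge cannot simply be ``ignored'', because it leaves $o_e\to r_e$ unused and so turns the merge move into a win for Player~2, so the weak-domination treatment in the main theorem (as in the paper's no-instance lemma) is the one to use.
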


\begin{proof}
For part~(1), if the $x$-side was traversed during Phase~1, then by
Lemma~\ref{lem:phase1-invariant} all $w(e)$ forward edges out of $x_e$ were
used there. Hence after $q_x \to \mu^x_{e,i} \to x_e$
the token is at $x_e$ with Player~2 to move and no unused outgoing edge, so
Player~2 loses immediately.

For part~(2), suppose instead that the $x$-side was untouched during Phase~1.
After $q_x \to \mu^x_{e,i}$,
Player~1 is forced to move to $x_e$, so Player~2 is indeed to move at $x_e$.
From $x_e$, the only outgoing edges are the forward edges
$x_e\to \xi^x_{e,j}$,
and from each $\xi^x_{e,j}$ the move to $x'_e$ is forced.

At $x'_e$, Player~2 has three kinds of options:
\begin{enumerate}
    \item an unused backward edge to some $\delta^x_{e,j'}$;
    \item the merge edge to $m^x_e$;
    \item a $\zeta$-edge.
\end{enumerate}
If Player~2 chooses the merge edge, then Player~1 is forced to $o_e$, and now
Player~2 has no legal move because $o_e\to r_e$ was already used in Phase~1.
So that move is immediately losing for Player~2.

If Player~2 chooses a backward edge, then Player~1 is forced back to $x_e$.
This consumes one backward path and one additional forward path on the
$x$-side, but it does not consume any challenge edge out of $q_x$ or any return
path.

If Player~2 chooses a $\zeta$-edge, then the token reaches a vertex
$\zeta^x_{e,\ell}$ with Player~1 to move. From there, moving to
$x^{\mathrm{esc}}_{e,\ell}$ is immediately losing for Player~1, because
Player~2 is then forced to $r_e$ and Player~1 has no move there. Hence the only
non-losing reply is $\zeta^x_{e,\ell}\to x^{\mathrm{out}}_{e,\ell}$.
Then Player~2 is forced to move to $v_x^{\mathrm{return}}$. If a return path
remains, Player~1 chooses one of the edges $v_x^{\mathrm{return}}\to \rho^x_j$,
and the moves through $\tau^x_j$ back to $q_x$ are forced. Since Player~1 makes
the final move back to $q_x$, it is again Player~2's turn there.

Thus every non-losing challenge excursion on an untouched $x$-side consumes
exactly one edge out of $q_x$ and exactly one return path, while any use of
backward edges only consumes additional internal edges on that side.
\end{proof}

\begin{figure}[H]
    \centering
    \includegraphics[width=\textwidth]{single_edge_example.pdf}
    \caption{A reduced instance of our input problem for which 
    there is one edge $\{x,y\}$ with weight $2$, and $t(x) = 1$ and 
    $t(y) = 2$. This is a feasible instance and we observe that Player~1 has 
    a winning strategy, with Player~2 to move at 
    $q_y$ with no legal continuation. In this figure we assume not only that the 
    players are playing optimally, but that they want to stay 
    alive ''as long as possible''. The edges played by Player~1 are 
    colored navy and those played by Player~2 are colored red. In particular, 
    observe that in the phase where the edge is oriented, 
    Player~2 only has one legal response every time they move.}
    \label{fig:single_edge_example}
\end{figure}

\begin{lemma}[Yes-instances]
\label{lem:yes}
If $(H,w,t)$ admits a valid orientation $\omega$, then Player~1 has a winning
strategy in the constructed instance of \textsc{Directed Edge Geography}.
\end{lemma}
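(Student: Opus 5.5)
We follow the strategy in the sketch of Theorem~\ref{thm:deg-xnlp-hard}. The approach is to fix a feasible orientation $\omega$, describe Player~1's strategy explicitly, and verify that every line of play compatible with it ends with Player~2 stuck. We handle the two phases separately. The key device is a counting invariant at $q_x$ that compares the number of non-losing challenge edges still available with the number of return paths still available.

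\emph{Phase~1.} For each $e_i=\{x,y\}$, in order, Player~1 chooses the side of $G_{e_i}$ at the head of $\omega(e_i)$. That is, if $\omega(e_i)=(x,y)$, Player~1 traverses the $y$-side and leaves the $x$-side untouched. On that side Player~1 uses all forward and backward paths, never enters a $\zeta$-vertex, and exits via $m_e$, $o_e\to r_e$.

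\begin{itemize}
\item By Lemma~\ref{lem:phase1-invariant}, all of Player~2's moves are forced and play reaches $r_{e_i}$ with Player~2 to move.
\item Player~2's only move there is $r_{e_i}\to v^{\mathrm{choose}}_{e_{i+1}}$, which hands Player~1 the next choice with the correct parity.
\item After $e_m$, Player~2 must play $r_{e_m}\to m_c$, Player~1 is forced to $c$, Player~2 picks some $c_x$, and Player~1 is forced to $q_x$.
\end{itemize}

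So Phase~2 starts at $q_x$ with Player~2 to move. At this point every edge $r_{e}\to\cdot$ and every edge $o_e\to r_e$ has been used.

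\emph{Phase~2.} Player~1's strategy is as follows: at any $\zeta^x_{e,\ell}$, move to $x^{\mathrm{out}}_{e,\ell}$; at $v_x^{\mathrm{return}}$, take any unused return path. Now apply Lemma~\ref{lem:challenge-excursion}.

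\begin{itemize}
\item A challenge edge $q_x\to\mu^x_{e,i}$ into a traversed $x$-side loses immediately for Player~2. These are exactly the edges with $\omega(e)$ pointing into $x$.
\item The remaining non-losing challenge edges lead to untouched $x$-sides, i.e.\ to $e\in\theta_\omega(x)$. There are $N:=\sum_{e\in\theta_\omega(x)}w(e)\le t(x)$ of them.
\end{itemize}

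We prove by induction on the number of completed excursions the following invariant: whenever the token is at $q_x$ with Player~2 to move, the number of unused return paths at $v_x^{\mathrm{return}}$ is at least the number of unused $\mu$-edges out of $q_x$ leading to untouched sides. The inductive step uses the consequence clause of Lemma~\ref{lem:challenge-excursion}: each non-losing excursion uses exactly one such $\mu$-edge and one return path. It also requires checking that Player~2 cannot leave the $x$-side gadget some other way.

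\begin{itemize}
\item The merge route dies at $o_e$, since $o_e\to r_e$ is used.
\item The escape route $x^{\mathrm{esc}}\to r_e$ is never offered, because Player~1 always chooses $x^{\mathrm{out}}$.
\item Nothing inside the $x$-side leads to the $y$-side or to another vertex's return structure.
\item The $\zeta$-edges ($w(e)+1$ of them) are never the bottleneck. Forward edges out of $x_e$ are the only other way to be stuck, and if they run out it is Player~2 who is to move at $x_e$, so that also loses for Player~2.
\end{itemize}

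By the invariant, whenever Player~2 completes an excursion and reaches $v_x^{\mathrm{return}}$, a return path is still available, so Player~1 never gets stuck there.

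Play is finite, so it terminates. By the analysis above, every terminal position is one in which Player~2 is to move. Player~2 is stuck either inside a gadget, or at $q_x$ once all non-losing challenge edges are exhausted, so Player~1 wins.

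The main obstacle is ruling out unexpected deviations by Player~2 in Phase~2. Two things must be checked exhaustively. First, backward-edge cycling inside an untouched side cannot let Player~2 reach $x'_e$ more often than the number of $\mu$-edges allows in a way that consumes extra return paths. Second, no route leaves a gadget except through $x^{\mathrm{out}}\to v_x^{\mathrm{return}}$. The plan for the first point is to note that each visit to $v_x^{\mathrm{return}}$ requires a fresh arrival at $q_x$ followed by a fresh $\mu$-edge, since $\zeta\to x^{\mathrm{out}}$ leads only to $v_x^{\mathrm{return}}$ and then back to $q_x$. This ties return-path consumption one-to-one to $\mu$-edge consumption, independent of what happens internally.
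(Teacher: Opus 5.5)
Your proposal is correct and follows the paper's own proof. You encode $\omega$ through the side choices, use Lemma~\ref{lem:phase1-invariant} for the forced Phase~1 play, and use Lemma~\ref{lem:challenge-excursion} to pair each non-losing excursion with exactly one $\mu$-edge and one return path. You then conclude from $\sum_{e\in\theta_\omega(x)}w(e)\le t(x)$; your explicit invariant and the check that Player~1 is never stuck only make the paper's counting step more careful.
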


\begin{proof}
Fix a valid orientation $\omega\in \Ori(H)$. For each edge $e=\{x,y\}$,
Player~1 uses the initial choice at $v_e^{\mathrm{choose}}$ to encode
$\omega(e)$ as follows: if $\omega(e)=(x,y)$, then Player~1 traverses the
$y$-side of $G_e$; if $\omega(e)=(y,x)$, then Player~1 traverses the $x$-side.

\medskip
\noindent
\textbf{Phase 1.}
In each gadget, Player~1 follows the strategy of
Lemma~\ref{lem:phase1-invariant} and never moves to a $\zeta$-vertex. Thus
after each gadget $G_{e_j}$ the token is at $r_{e_j}$ with Player~2 to move,
and all chain moves to the next gadget are forced.

After the last gadget $G_{e_m}$, the moves $r_{e_m}\to m_c\to c$
are forced, with Player~2 to move at $c$.

\medskip
\noindent
\textbf{Phase 2.}
Player~2 chooses a vertex $x\in V(H)$ by moving from $c$ to $c_x$, and
Player~1 is then forced to move to $q_x$.

Let
\[
S_x := \sum_{e\in \theta_\omega(x)} w(e).
\]
Since $\omega$ is valid, we have $S_x\le t(x)$.

Consider the challenge edges out of $q_x$. For an incident edge $e=\{x,y\}$,
the $x$-side of $G_e$ was untouched in Phase~1 if and only if
$\omega(e)=(x,y)$, that is, if and only if $e\in \theta_\omega(x)$. Hence,
by Lemma~\ref{lem:challenge-excursion}:
\begin{itemize}
    \item exactly the challenge edges corresponding to edges in
    $\theta_\omega(x)$ can participate in a non-losing challenge excursion;
    \item there are exactly $S_x$ such challenge edges in total;
    \item every non-losing challenge excursion consumes exactly one such
    challenge edge and exactly one return path;
    \item any use of backward edges by Player~2 only consumes additional
    internal edges and therefore can only reduce the number of future
    non-losing challenge excursions.
\end{itemize}

Since $S_x\le t(x)$, after at most $S_x$ non-losing challenge excursions all
non-losing challenge edges out of $q_x$ are exhausted. At that point it is again
Player~2's turn at $q_x$, and every remaining legal move is immediately losing:
either it enters an already exhausted side, or it loses later in the same
excursion. Therefore Player~2 has no non-losing move, and Player~1 wins.
\end{proof}

\begin{lemma}[No-instances]
\label{lem:no}
If $(H,w,t)$ does not admit a valid orientation, then Player~2 has a winning
strategy in the constructed instance.
\end{lemma}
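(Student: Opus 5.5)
We prove the lemma by exhibiting an explicit Player~2 strategy that works against every Player~1 strategy, including those that deviate from the intended Phase~1 play of Lemma~\ref{lem:phase1-invariant}. The orientation is read off from Player~1's choices at the choice vertices. For each $e=\{x,y\}$, if Player~1 moves $v_e^{\mathrm{choose}}\to v_e^{\mathrm{choose},x}$, we set $\omega(e)=(y,x)$, and symmetrically for the $y$-side. Player~2's moves on the way to the chosen side, and through each forward and backward path, are forced.

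\textbf{Step 1: deviations in Phase~1.} Inside a gadget, Player~1 decides at $x_e$ which forward edge to take. At $x'_e$ Player~1 has three options: a backward edge, the merge edge $x'_e\to m^x_e$, or a $\zeta$-edge. Player~2's rule is: whenever Player~1 moves $x'_e\to\zeta^x_{e,\ell}$, Player~2 answers $\zeta^x_{e,\ell}\to x^{\mathrm{esc}}_{e,\ell}$. Player~1 is then forced along $x^{\mathrm{esc}}_{e,\ell}\to r_e$, so Player~2 is to move at $r_e$ with the same parity as in the intended play. In this case Phase~1 continues normally, but the edge $o_e\to r_e$ remains unused. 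We also check that Player~1 never gets stuck on purpose to Player~1's own advantage: every dead end inside the gadget is reached with Player~1 to move. We also check that no Phase~1 deviation lets the token reach $q_x$ or $v_x^{\mathrm{return}}$ prematurely, because Player~2's answer above avoids the $x^{\mathrm{out}}$-vertices. Therefore, whatever Player~1 does, either Player~1 loses during Phase~1, or play reaches $c$ with Player~2 to move. At that point every gadget has exactly one chosen side, on which at least one forward edge has been used. The unchosen side of every gadget is untouched.

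\textbf{Step 2: the easy case.} Suppose some gadget $G_e$ with $e=\{x,y\}$ still has $o_e\to r_e$ unused, and say the $x$-side was chosen. Player~2 plays $c\to c_y$, and then $q_y\to\mu^y_{e,1}\to y_e$. Next Player~2 takes a forward edge to $y'_e$ and moves $y'_e\to m^y_e$. Player~1 is forced to $o_e$, Player~2 plays $o_e\to r_e$, and Player~1 is stuck. We verify that $r_e$ has no unused out-edge: its chain edge was used in Phase~1. Hence Player~2 wins in this case.

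\textbf{Step 3: the counting case.} Otherwise every $o_e\to r_e$ is used. Since the instance is infeasible, some $x$ satisfies $S_x:=\sum_{e\in\theta_\omega(x)}w(e)>t(x)$, and Player~2 plays $c\to c_x$. For each $e\in\theta_\omega(x)$, the $x$-side of $G_e$ is untouched, so it provides $w(e)$ challenge edges $q_x\to\mu^x_{e,i}$ and $w(e)$ forward edges. Player~2 therefore has $S_x$ excursions available. In each one, Player~2 goes $q_x\to\mu\to x_e$, takes an unused forward edge, and at $x'_e$ takes an unused $\zeta$-edge; there are $w(e)+1\ge w(e)$ of them on that side, so they never run out. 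Player~2 never uses backward edges. By Lemma~\ref{lem:challenge-excursion}, Player~1's only non-losing reply at $\zeta$ is the $x^{\mathrm{out}}$ vertex. After that, Player~2 is forced to $v_x^{\mathrm{return}}$, and Player~1 must spend one of the $t(x)$ return paths to bring play back to $q_x$ with Player~2 to move. None of these return paths were consumed earlier, by Step~1. After $t(x)$ excursions all return paths are gone, and Player~2 still has at least one excursion left. In that final excursion Player~1 reaches $v_x^{\mathrm{return}}$ with no move and loses.

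\textbf{Main obstacle.} The main difficulty is Step~1: making sure that every Player~1 deviation inside a gadget is either losing for Player~1 or leaves a state at least as good for Player~2. Deviations of this kind include leaving early through $m^x_e$, or detouring through $\zeta$. One way such a deviation can help Player~2 is by leaving $o_e\to r_e$ unused, which is handled in Step~2. Another is by leaving forward edges unused on the chosen side; this is harmless, since Step~3 only uses untouched sides. We would carry out Step~1 by a careful case check of the out-neighbourhoods of $x'_e$, $\zeta$, and $r_e$, tracking the parity at each step.
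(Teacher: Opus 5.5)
Your proof is correct. Its core is the same as the paper's counting argument: read $\omega$ off the choices at $v_e^{\mathrm{choose}}$, let Player~2 challenge a violated vertex $x$, and run excursions $q_x\to\mu\to x_e\to\xi\to x'_e\to\zeta$ through untouched $x$-sides until the $t(x)$ return paths are exhausted.

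Where you differ is in handling Player~1's Phase~1 deviations.
\begin{itemize}
\item The paper removes them by a without-loss-of-generality claim. It asserts that a $\zeta$-detour answered by the escape vertex, or an incomplete traversal of the chosen side, ``can only help Player~2''. It then assumes Player~1 plays as in Lemma~\ref{lem:phase1-invariant}.
\item You instead fix one explicit Player~2 strategy that works against every Player~1 strategy. You split on whether some $o_e\to r_e$ survives Phase~1, which happens exactly when Player~1 took a $\zeta$-detour in $G_e$.
\item In that case you give a separate, immediate win that does not use infeasibility at all. Challenge the opposite endpoint $y$, pass through the untouched $y$-side to $m^y_e$ and $o_e$, and take $o_e\to r_e$. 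This strands Player~1 at $r_e$, whose only out-edge, the chain edge, was used in Phase~1.
\item Otherwise the hypothesis of Lemma~\ref{lem:challenge-excursion} holds literally. Incomplete traversals are irrelevant because neither your strategy nor any of Player~1's choices during excursions ever touches a chosen side.
\end{itemize}

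Your Step~1 check is complete as written. Player~1 cannot even get stuck in Phase~1: the chosen side has one more forward edge than backward edges, and the merge edge at $x'_e$ stays available until used.

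What your version buys is rigor: it replaces the paper's informal monotonicity assertion with a checkable case analysis. The paper's version is shorter.

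Your Step~2 is not strictly necessary. Even when some $o_e\to r_e$ is unused, the counting strategy still wins. Player~2 never enters a merge vertex, so $o_e$ is never reached, and Player~1's escape reply still leaves Player~1 stuck at $r_e$. The case split mainly lets you cite Lemma~\ref{lem:challenge-excursion} verbatim.
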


\begin{proof}
During Phase~1, whenever Player~1 deviates by moving from some primed vertex
$x'_e$ or $y'_e$ to a $\zeta$-vertex, Player~2 responds by moving to the
corresponding escape vertex and thereby leaves the gadget through $r_e$. This
does not change which side of the gadget was initially chosen at
$v_e^{\mathrm{choose}}$, and hence does not change the orientation encoded by
that initial choice. Such a deviation can only help Player~2 in the later
challenge phase, so it suffices to consider the case that Player~1 does not
deviate in Phase~1. Furthermore, if Player~1 does not fully traverse the chosen side of some edge
gadget during Phase~1, then additional forward edges on that side may remain
available. Consequently, if Player~2 later challenges the corresponding vertex,
the set of non-losing challenge excursions available to Player~2 is no smaller
than it would be under the intended traversal. Thus such a deviation can only
help Player~2. Therefore, without loss of generality, we may assume that during
Phase~1 Player~1 completely traverses the chosen side of each gadget, so that
the only relevant choice is the initial side choice, which determines the
encoded orientation $\omega$.

Accordingly, after all edge gadgets have been processed, Player~1 has encoded
some orientation $\omega\in \Ori(H)$. Since the input instance is a no-instance,
there exists a vertex $x$ such that
\[
\sum_{e\in \theta_\omega(x)} w(e) > t(x).
\]
Player~2 chooses this vertex in the challenge phase.

Let
\[
S_x := \sum_{e\in \theta_\omega(x)} w(e).
\]
As in the proof of Lemma~\ref{lem:yes}, exactly the challenge edges
corresponding to edges in $\theta_\omega(x)$ lead to untouched $x$-sides, so
there are exactly $S_x$ challenge edges that can be used in non-losing
challenge excursions.

Player~2 now follows the following strategy: whenever the token is at $q_x$,
choose any unused challenge edge leading to an untouched $x$-side; once the
token reaches the corresponding vertex $x'_e$, choose a $\zeta$-edge
immediately and never use a backward edge. By
Lemma~\ref{lem:challenge-excursion}, each such excursion consumes exactly one
challenge edge out of $q_x$ and exactly one return path, and returns play to
$q_x$ with Player~2 to move, as long as a return path remains.

After $t(x)$ such excursions, all return paths through $v_x^{\mathrm{return}}$
have been exhausted, while at least one unused challenge edge still remains
because $S_x>t(x)$. Player~2 takes such an edge once more. Again
Lemma~\ref{lem:challenge-excursion} applies: after the ensuing move to
$v_x^{\mathrm{return}}$, Player~1 has no available return path and therefore no
legal move. Thus Player~2 wins.
\end{proof}

\begin{theorem}
\textsc{Directed Edge Geography} is XNLP-hard when parameterized by pathwidth.
\end{theorem}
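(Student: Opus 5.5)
The plan is to assemble the theorem from three ingredients: correctness of the reduction, a pathwidth bound on the reduction graph, and computability by a parameterized logspace transducer. The source problem is \textsc{Chosen Maximum Outdegree} parameterized by pathwidth, with unary weights and a given path decomposition; it is XNLP-complete by~\cite{bodlaender2022hardtw}. Given $(H,w,t)$ and a path decomposition of width $k$, we output the graph $G$ of Definition~\ref{def:reduction-graph} with start vertex $s=v^{\mathrm{choose}}_{e_1}$.

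\textbf{Correctness.} This follows immediately from Lemma~\ref{lem:yes} and Lemma~\ref{lem:no}: Player~1 wins on $(G,s)$ if and only if $(H,w,t)$ admits a valid orientation.

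\textbf{Pathwidth bound.} This is the main obstacle. First make the input decomposition nice, $(B_1,\dots,B_r)$. For each bag index $i$, define the anchor set $A_i$ to consist of $c$ together with
\[
q_x,\; c_x,\; v^{\mathrm{return}}_x \quad \text{for all } x\in B_i .
\]
The bag sequence is built as follows.
\begin{itemize}
\item At the position where edge $e=\{x,y\}$ is charged (its chosen bag), insert a block of bags. Each bag in the block contains $A_i$, the previous return vertex $r_{e_{j-1}}$, and the core vertices of $G_e$: the choose vertices, $x_e,y_e,x'_e,y'_e,m^x_e,m^y_e,o_e,r_e$.
\item The unboundedly many vertices $\xi,\delta,\mu,\zeta,x^{\mathrm{out}},x^{\mathrm{esc}}$ are added one group at a time. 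Each group is a small constant-size set, e.g.\ $\{\zeta^x_{e,\ell},x^{\mathrm{out}}_{e,\ell},x^{\mathrm{esc}}_{e,\ell}\}$. Every such vertex has all its neighbours among the core vertices and anchors, so every edge is covered.
\item After the block, drop everything except $r_e$, which stays until the next gadget's block so that the chain edge $r_{e_j}\to v^{\mathrm{choose}}_{e_{j+1}}$ is covered.
\item The vertex $c$ is kept throughout; $m_c$ is placed only at the end.
\item When $x$ is forgotten from the nice decomposition, insert bags containing $A_i$ and one triple $\rho^x_j,\tau^x_j$ at a time. These cover the return paths between $v^{\mathrm{return}}_x$ and $q_x$. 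The edges $c\to c_x$ are covered because $c$ is in every bag.
\end{itemize}
Since edges are ordered by bag index, each anchor's bags form an interval: the anchors of $x$ live exactly while $x$ is in the decomposition, and every edge touching $x$ is charged inside that interval. Each bag then has size $3(k+1)+O(1)$, so the width is $O(k)$. The delicate checks are exactly this contiguity and the chain vertices $r_e$ linking consecutive gadgets. Both are handled by the interval argument and by keeping one carried-over vertex.

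\textbf{Logspace computability.} The output size is polynomial because weights and bounds are unary. Each vertex is named by a constant-size type tag, an input vertex or edge, and a counter of at most $\max(w(e),t(x))+O(1)$. The transducer streams the gadgets by scanning the bags and, for each edge, identifying its chosen bag (the first bag containing both endpoints, computable in logspace). It prints the gadget vertices and edges, the challenge gadget, and the path decomposition above using $O(\log n)$ bits of counters. Hence this is a parameterized logspace reduction with a new parameter of $O(k)$, and the XNLP-hardness of \textsc{Directed Edge Geography} parameterized by pathwidth follows.
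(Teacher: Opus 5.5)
Your proposal is correct and follows the paper's proof essentially step for step. It uses the same reduction from \textsc{Chosen Maximum Outdegree}, correctness via Lemmas~\ref{lem:yes} and~\ref{lem:no}, the same anchor sets $\{c\}\cup\bigcup_{x\in B_i}\{c_x,q_x,v^{\mathrm{return}}_x\}$ with per-edge blocks around the constant-size core, return-path blocks inserted just before $x$ is forgotten, and the same unary-counter logspace argument. The only differences are cosmetic and change just the additive constant in the $3k+O(1)$ width bound: you carry $r_{e_{j-1}}$ through the whole next block instead of using a single transition bag, and you add $\zeta^x_{e,\ell},x^{\mathrm{out}}_{e,\ell},x^{\mathrm{esc}}_{e,\ell}$ as one triple rather than two pairs. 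You should also state explicitly that $r_{e_m}$ is kept until $m_c$ is placed.
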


\begin{proof}
We reduce from \textsc{Chosen Maximum Outdegree}, which is XNLP-complete when
parameterized by pathwidth~\cite{bodlaender2022hardtw}.

Given an instance $(H,w,t)$, we construct zthe graph $G$ above.
By Lemmas~\ref{lem:yes} and~\ref{lem:no}, Player~1 has a winning strategy in $G$ if
and only if $(H,w,t)$ is a yes-instance.

Since weights and targets are encoded in unary, the graph $G$ has size polynomial in
the input size.

It remains to argue that the reduction preserves pathwidth.

\noindent
\textbf{Pathwidth preservation.}

\noindent 
Let $(B_1,\dots,B_r)$ be a nice path decomposition of $H$ of width
$k=\pw(H)$.
We may assume that $B_1=B_r=\emptyset$, and that for each $i\in[r-1]$,
the bags $B_i$ and $B_{i+1}$ differ in exactly one vertex.

For each edge $e=\{x,y\}\in E(H)$, let $a(e)$ be the smallest index $i$
such that $\{x,y\}\subseteq B_i$.
Order the edges as $e_1,\dots,e_m$ so that $a(e_1)\le a(e_2)\le \cdots \le a(e_m)$,
breaking ties arbitrarily.

We now construct a path decomposition of the reduction graph $G$.

\medskip
\noindent
\textbf{Anchor vertices.}
For each $x\in V(H)$, recall that the reduction contains the three challenge
vertices $c_x,\ q_x,\ v_x^{\mathrm{return}}$,
and also the global challenge vertex $c$.
For each bag $B_i$, define the anchor set
\[
S_i := \{c\}\cup
\bigcup_{x\in B_i}\{c_x,q_x,v_x^{\mathrm{return}}\}.
\]
Since $|B_i|\le k+1$, we have
$|S_i|\le 1+3(k+1)=3k+4$.

Intuitively, while we are at stage $i$, all vertices in $S_i$ remain present in
every bag.

\medskip
\noindent
\textbf{Edge-gadget blocks.}
Fix an edge $e=\{x,y\}$ with $a(e)=i$.
During stage $i$, we realize the entire edge gadget $G_e$ by a block of bags
in which the anchor set $S_i$ is present throughout.

Let
\[
K_e :=
\Bigl\{
v_e^{\mathrm{choose}},
v_e^{\mathrm{choose},x},
v_e^{\mathrm{choose},y},
x_e,y_e,x'_e,y'_e,
m_e^x,m_e^y,o_e,r_e
\Bigr\},
\]
where $m_e^x,m_e^y$ are the two subdivision vertices on the arcs from
$x'_e$ and $y'_e$ to $o_e$.
Note that $|K_e|=11$.

We keep all vertices of $K_e$ in every bag of the block for $e$.
Since $x,y\in B_i$, the anchor set $S_i$ contains
$c_x,q_x,v_x^{\mathrm{return}}$, and
$c_y,q_y,v_y^{\mathrm{return}}$.
All remaining vertices of the gadget are now introduced and forgotten one at a time.

More precisely, in the block for $e$ we append bags that, in addition to
$S_i\cup K_e$, contain:
\begin{itemize}
    \item for each $j\in[w(e)]$, one bag with $\mu^x_{e,j}$ and one bag with
    $\mu^y_{e,j}$;
    \item for each $j\in[w(e)]$, one bag with $\xi^x_{e,j}$ and one bag with
    $\xi^y_{e,j}$;
    \item for each $j\in[w(e)-1]$, one bag with $\delta^x_{e,j}$ and one bag with
    $\delta^y_{e,j}$;
    \item for each $j\in[w(e)+1]$, two bags for the $x$-side escape structure,
    namely one containing $\zeta^x_{e,j}$ and $x^{\mathrm{out}}_{e,j}$, and one containing
    $\zeta^x_{e,j}$ and $x^{\mathrm{esc}}_{e,j}$;
    \item symmetrically, for each $j\in[w(e)+1]$, two bags for the $y$-side escape
    structure.
\end{itemize}

Every edge of the subgraph involving $G_e$ is covered in one of these bags:
\begin{itemize}
    \item edges internal to the constant-size core are covered since $K_e$ is present
    throughout the block;
    \item each edge $q_x\to \mu^x_{e,j}$ and $\mu^x_{e,j}\to x_e$ is covered in the bag
    containing $\mu^x_{e,j}$, since $q_x\in S_i$ and $x_e\in K_e$;
    \item each edge $x_e\to \xi^x_{e,j}$ and $\xi^x_{e,j}\to x'_e$ is covered in the bag
    containing $\xi^x_{e,j}$;
    \item each edge $x'_e\to \delta^x_{e,j}$ and $\delta^x_{e,j}\to x_e$ is covered in the bag
    containing $\delta^x_{e,j}$;
    \item each edge $x'_e\to \zeta^x_{e,j}$, $\zeta^x_{e,j}\to x^{\mathrm{out}}_{e,j}$,
    and $x^{\mathrm{out}}_{e,j}\to v_x^{\mathrm{return}}$ is covered in the bag containing
    $\zeta^x_{e,j}$ and $x^{\mathrm{out}}_{e,j}$, since
    $v_x^{\mathrm{return}}\in S_i$;
    \item each edge $x'_e\to \zeta^x_{e,j}$, $\zeta^x_{e,j}\to x^{\mathrm{esc}}_{e,j}$,
    and $x^{\mathrm{esc}}_{e,j}\to r_e$ is covered in the bag containing
    $\zeta^x_{e,j}$ and $x^{\mathrm{esc}}_{e,j}$;
    \item and symmetrically on the $y$-side.
\end{itemize}

Thus the entire gadget for $e$ is realized with bags of size at most
\[
|S_i|+|K_e|+2 \le (3k+4)+11+2 = 3k+17.
\]

\medskip
\noindent
\textbf{Return-path blocks.}
Suppose that $x$ is forgotten when passing from $B_i$ to $B_{i+1}$.
Before removing the three anchor vertices
$c_x,q_x,v_x^{\mathrm{return}}$, we realize all $t(x)$ return paths
\[
v_x^{\mathrm{return}}\to \rho^x_j \to \tau^x_j \to q_x
\qquad (j\in[t(x)])
\]
one at a time, while keeping all of $S_i$ present.

For each $j\in[t(x)]$, we append two bags:
$S_i\cup\{\rho^x_j\}$, and
$S_i\cup\{\rho^x_j,\tau^x_j\}$.

These cover all return-path edges incident with $\rho^x_j$ and $\tau^x_j$.
After all such blocks have been appended, we forget
$c_x,q_x,v_x^{\mathrm{return}}$.

Hence the return paths of $x$ are realized with bags of size at most $|S_i|+2 \le 3k+6$.

\medskip
\noindent
\textbf{Transitions between consecutive edge gadgets.}
The edge gadgets are processed in the order $e_1,\dots,e_m$.
Suppose $e_j$ is followed by $e_{j+1}$.
After finishing the block for $e_j$, we keep the vertex $r_{e_j}$ present until
we have reached the stage $a(e_{j+1})$.
At that point, we append one transition bag containing
$S_{a(e_{j+1})}\cup \{r_{e_j}, v_{e_{j+1}}^{\mathrm{choose}}\}$.
This covers the chain edge
$r_{e_j}\to v_{e_{j+1}}^{\mathrm{choose}}$.
Since only one extra vertex $r_{e_j}$ is carried between stages, the bag size
remains at most $|S_i|+2 \le 3k+6$
during these transitions.

Similarly, after the last edge gadget $e_m$ has been processed, we keep $r_{e_m}$
present until the final challenge edge
$r_{e_m}\to m_c\to c$
is realized using bags
$\{c,r_{e_m},m_c\}$,
and $\{c,m_c\}$.

\medskip
\noindent
\textbf{Contiguity.}
Every vertex of the reduction belongs to a contiguous set of bags:
\begin{itemize}
    \item the global vertex $c$ appears in every bag;
    \item for each $x\in V(H)$, the three anchor vertices
    $c_x,q_x,v_x^{\mathrm{return}}$ appear exactly from the introduction of $x$
    in the nice path decomposition until the moment $x$ is forgotten and its return
    paths have been processed, hence on a contiguous interval;
    \item every core vertex of an edge gadget appears only inside the corresponding
    edge-gadget block;
    \item every auxiliary vertex
    $\mu,\xi,\delta,\zeta,x^{\mathrm{out}},x^{\mathrm{esc}},y^{\mathrm{out}},y^{\mathrm{esc}},
    \rho,\tau,m_c$
    appears only in the one or two bags where it is used.
\end{itemize}

Thus we obtain a valid path decomposition of $G$.

\medskip
\noindent
\textbf{Width bound.}
Every bag has size at most $3k+17$.
Therefore $\pw(G)\le 3k+16$.

Hence the reduction preserves pathwidth up to a function of $k$.

The reduction is computable by a deterministic parameterized logspace transducer. The output graph is generated by enumerating the edge gadgets in the order induced by the input path decomposition. For an input edge \(e={x,y}\), the first bag \(a(e)\) containing both endpoints can be found by scanning the input decomposition using only counters and the identifiers of \(x\) and \(y\). The gadgets can then be output by nested loops over the input edges and over indices \(j\leq w(e)+O(1)\), and the challenge and return structures by analogous loops over vertices and indices \(j\leq t(v)+O(1)\). Since all integer weights and bounds are encoded in unary, these ranges are polynomially bounded in the input size, and all counters use \(O(\log n)\) bits. Each output vertex and edge has a name determined by a constant-size type label together with an input vertex or edge and one such counter, so the construction can be streamed without storing the output graph. The constructed path-decomposition witnessing the parameter bound can be generated in the same stage-by-stage fashion. Hence the reduction satisfies the logspace requirement for XNLP-hardness reductions, and the output pathwidth is bounded by a function of the input pathwidth.

Summing up, we have shown that \textsc{Chosen Maximum Outdegree} is 
parameterized logspace reducible to \textsc{Directed Edge Geography}, preserving 
pathwidth up to a function of $k$. Thus \textsc{Directed Edge Geography} is 
XNLP-hard with respect to pathwidth $k$.
\end{proof}

\begin{figure}[H]
    \centering
    \includegraphics[width=\textwidth]{pathwidth-decomposition.pdf}
    \caption{Schematic view of the pathwidth-preserving
    construction for an edge gadget $e=\{x,y\}$. The top timeline indicates
    the intervals during which $x$ and $y$ are present. The rows below indicate 
    the corresponding lifetimes of the anchor vertices, together with the global 
    vertex $c$. The green block indicates the dedicated contiguous block of bags 
    in the constructed decomposition where $e$ is realized: throughout 
    $K_e$, which is meant to represent the constant-size core of the edge gadget, 
    is present together with $S_i$, while the auxiliary vertices
    are introduced 
    only locally, one at a time or in pairs. Finally, before a vertex is forgotten, 
    the path-return vertices are realized in pairs at a time.}
    \label{fig:pathwidth-decomposition}
\end{figure}

\subsection{Example}

Let $H=(\{x,y,z\}, \{\{x,y\},\{y,z\}\})$. Let the weights be
$w(\{x,y\})=2,\ w(\{y,z\})=2$,
and lastly let 
$t(x)=1,\ t(y)=2,\ t(z)=1$. This is the instance 
shown in Figure~\ref{fig:orientation}.

The corresponding \textsc{Directed Edge Geography} instance is shown in
Figure~\ref{fig:example-instance}.
In this example, Player~2 has a winning strategy.

\begin{figure}[H]
    \centering
    \includegraphics[width=\textwidth]{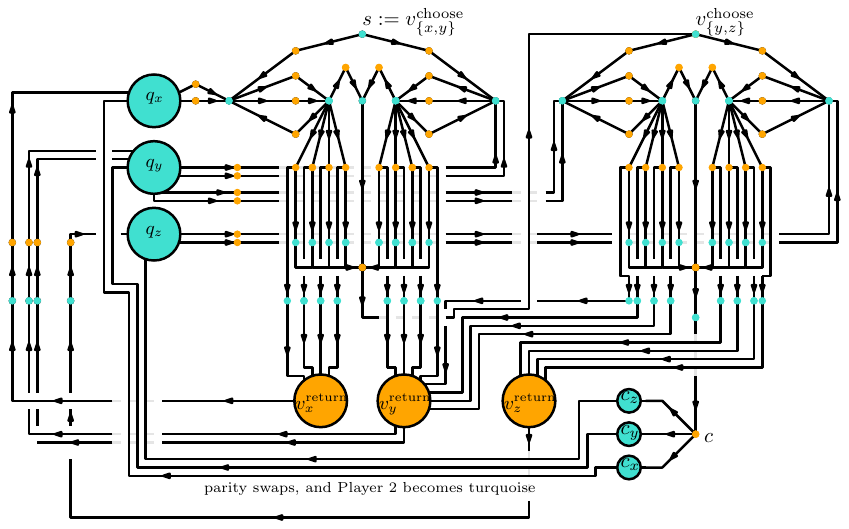}
    \caption{Example instance.}
    \label{fig:example-instance}
\end{figure}

\newpage 
\section{Full Technical Development of the XP Algorithm}
\label{app:sec:algo}

In this section we show that \textsc{Undirected Edge Geography} is in XP on
simple graphs when the input is given together with a rooted tree partition of
bounded width. We also show, via the typical reduction, that 
\textsc{Directed Edge Geography} is in XP.

Our approach is to summarize each child subtree by a finite \emph{interface
type} describing how play can enter the subtree from its parent bag, possibly
return to the parent bag, and leave behind a smaller residual subinstance.
For fixed width $k$, only finitely many such interface types can occur.
Moreover, from the perspective of the parent bag, children of the same type are
interchangeable. This allows us to compress the state of a bag to the
multiplicities of its child types, which yields an XP dynamic program.

The formal development uses three layers of induction.

\begin{enumerate}
    \item The tree partition is processed bottom-up, so when a bag is processed,
    all types arising in the already processed child subtrees are available.
    \item Inside a fixed bag, residual configurations are handled by induction on
    a measure that counts remaining parent ports, remaining internal edges, and
    the total rank of the residual child interfaces.
    \item For a fixed residual configuration and a fixed entry port, the
    parent-side local game is acyclic, and the values of its local states are
    defined by reverse induction on that acyclic game graph.
\end{enumerate}

\noindent 
\textbf{Roadmap.} 

\noindent
The technical work in this section is mainly needed to show that these
interface types are well defined and that one residual child subinstance may be
replaced by another of the same type without changing the parent-side outcome.
Readers may find it helpful to keep the worked example in the final subsection
in mind while reading the definitions and lemmas below.

We now make these notions precise.

\subsection{Tree partitions and interfaces}

We use a rooted version of the standard notion of tree-partitions; see 
\cite{wood2009treepartition} for tree-partition-width. Since we work with simple 
unweighted graphs, we parameterize only 
by the maximum bag size. In this setting, the size of 
each parent cut is automatically at most 
$k^2$.

\begin{figure}[ht]
    \centering
    \includegraphics[width=\textwidth]{tree-partition.png}
    \caption{A tree partition of a graph $G$ with width $k = 3$.}
    \label{fig:tree-partition}
\end{figure}

\begin{definition}[Rooted tree partition]
A \emph{rooted tree partition} of a graph $G$ is a pair $(\{X_i \mid i \in I\},T)$,
where $\{X_i \mid i \in I\}$ is a partition of $V(G)$ and $T$ is a rooted tree
with node set $I$, such that every edge of $G$ is either contained in a bag
$X_i$ or has its endpoints in two bags whose nodes are adjacent in $T$.
The \emph{width} of the tree partition is
$\max_{i\in I}|X_i|$.
\end{definition}

Throughout this section, we assume $G$ is a (simple and finite) graph given with a tree partition of width 
$k$. 

\begin{definition}[Parent cut]
Let $i$ be a non-root node of the rooted tree partition of $G$, and let $\operatorname{par}(i)$
denote its parent. The \emph{parent cut} of $i$ is the set $C_i := E(X_{\operatorname{par}(i)},X_i)$.
Since $G$ is simple and $|X_{\operatorname{par}(i)}|,|X_i|\le k$, we have
$|C_i|\le k^2$.
\end{definition}

\begin{definition}[Bag labelings]
For each node $i$, fix once and for all an injective map $\iota_i : X_i \hookrightarrow [k]$.
\end{definition}

\begin{definition}[Port set and interface labels]
Let $i$ be a non-root node. A \emph{port} of $i$ is an edge of the parent cut
$C_i$. We identify the set of ports of $i$ with a subset $P_i \subseteq \Pi_k := [k^2]$.

For each port $p\in P_i$, let the corresponding edge of $C_i$ join a vertex
$x_p\in X_{\operatorname{par}(i)}$ to a vertex $y_p\in X_i$.

The \emph{parent-side label} of $p$ is defined as $\lambda_i^\uparrow(p):=\iota_{\operatorname{par}(i)}(x_p)$,
and the \emph{child-side label} of $p$ is defined as $\lambda_i^\downarrow(p):=\iota_i(y_p)$.

Thus each non-root node $i$ determines an \emph{interface} $(P_i,\lambda_i^\uparrow,\lambda_i^\downarrow)$.
\end{definition}

\subsection{Ambient interface types}

We first define a finite ambient universe of possible interface behaviors. Later
we restrict attention to those ambient types that are actually realizable by
subtrees of the given instance.

\begin{definition}[Labeled interface]
A \emph{labeled interface} is a pair $(P,\lambda)$ where
$P \subseteq \Pi_k$ and $\lambda:P\to [k]$.
Its \emph{rank} is $|P|$.
\end{definition}

\begin{definition}[Ambient type universe]
For each labeled interface $(P,\lambda)$ we define recursively a finite set
$\widetilde{\mathcal T}(P,\lambda)$
of \emph{ambient types}.

\begin{enumerate}
    \item If $P=\emptyset$, then $\widetilde{\mathcal T}(P,\lambda)=\{\bot\}$.

    \item Suppose $|P|=r>0$, and that all sets
    $\widetilde{\mathcal T}(P',\lambda')$ with $|P'|<r$
    have already been defined.
    Fix an entry port $a\in P$. Define $\Omega(P,\lambda,a)$
    to be the set of all triples $(q,\tau,b)$,
    where
    \begin{itemize}
        \item $q\in P\setminus\{a\}$,
        \item $\tau\in
        \widetilde{\mathcal T}(P\setminus\{a,q\},\lambda|_{P\setminus\{a,q\}})$,
        \item $b\in\{\mathsf{same},\mathsf{opp}\}$.
    \end{itemize}

    An \emph{ambient type} $\sigma\in \widetilde{\mathcal T}(P,\lambda)$ is a
    family $\sigma=\bigl((A_a^\sigma,\Phi_a^\sigma)\bigr)_{a\in P}$,
    where for every $a\in P$:
    \begin{itemize}
        \item $A_a^\sigma \subseteq \Omega(P,\lambda,a)$
        is the set of exit labels declared attainable from entry through $a$,
        \item $\Phi_a^\sigma:\{0,1\}^{A_a^\sigma}\to \{0,1\}$
        is a Boolean function.
    \end{itemize}
\end{enumerate}
\end{definition}

\begin{definition}[Rank of an ambient type]
Let \(\sigma \in \widetilde{\mathcal T}(P,\lambda)\).  
The \emph{rank} of \(\sigma\), denoted \(\operatorname{rk}(\sigma)\), is
$\operatorname{rk}(\sigma):=|P|$.
Equivalently, \(\operatorname{rk}(\sigma)\) is the rank of the underlying labeled
interface of \(\sigma\).
\end{definition}

\begin{lemma}
\label{lem:finitely-many-types}
For every fixed \(k\), the ambient type universe
\[
\widetilde{\mathcal T}_k
:=
\biguplus_{P\subseteq \Pi_k}\ \biguplus_{\lambda:P\to [k]}
\widetilde{\mathcal T}(P,\lambda)
\]
is finite.
\end{lemma}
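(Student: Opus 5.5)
We prove the statement by strong induction on the rank $r=|P|$, showing that for every labeled interface $(P,\lambda)$ with $P\subseteq\Pi_k$ the set $\widetilde{\mathcal T}(P,\lambda)$ is finite. We then take the disjoint union over finitely many index pairs $(P,\lambda)$. There are at most $2^{k^2}$ choices of $P\subseteq\Pi_k=[k^2]$, and for each $P$ at most $k^{|P|}\le k^{k^2}$ maps $\lambda:P\to[k]$. A finite disjoint union of finite sets is finite, so it suffices to prove finiteness of each $\widetilde{\mathcal T}(P,\lambda)$.

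The base case $r=0$ is immediate, since $\widetilde{\mathcal T}(\emptyset,\lambda)=\{\bot\}$. For the inductive step, fix $(P,\lambda)$ with $|P|=r>0$ and assume finiteness for all interfaces of rank less than $r$.

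For each entry port $a\in P$ and return port $q\in P\setminus\{a\}$, the residual interface $P\setminus\{a,q\}$ has rank $r-2<r$. By the induction hypothesis, $\widetilde{\mathcal T}(P\setminus\{a,q\},\lambda|_{P\setminus\{a,q\}})$ is finite. Hence
\[
|\Omega(P,\lambda,a)|\le \sum_{q\in P\setminus\{a\}} 2\cdot\bigl|\widetilde{\mathcal T}(P\setminus\{a,q\},\lambda|_{P\setminus\{a,q\}})\bigr| =: N_a<\infty .
\]
An ambient type is determined by choosing, for each $a\in P$, a subset $A_a\subseteq\Omega(P,\lambda,a)$ together with a Boolean function $\Phi_a:\{0,1\}^{A_a}\to\{0,1\}$. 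For a fixed $a$ there are at most $\sum_{A\subseteq\Omega}2^{2^{|A|}}\le 2^{N_a}\cdot 2^{2^{N_a}}$ such pairs. Taking the product over the at most $k^2$ ports gives a finite bound on $|\widetilde{\mathcal T}(P,\lambda)|$, which completes the induction.

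The only point needing care is that the recursion in the definition is well-founded and the induction really applies. Every exit label refers to an interface of strictly smaller rank, namely rank $r-2$. Since ranks are bounded by $k^2$, the induction terminates, and it yields an explicit (tower-type) bound depending only on $k$. If one wants it, this bound can be recorded as a function $t_k$ for later use in the running-time analysis. No deeper obstacle arises; the statement is purely combinatorial and does not depend on any game semantics.
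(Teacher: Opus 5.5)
Your proof is correct and follows essentially the same route as the paper: finitely many labeled interfaces $(P,\lambda)$, then strong induction on $|P|$ with base case $\{\bot\}$, finiteness of each $\Omega(P,\lambda,a)$ from the induction hypothesis, and hence finitely many families $((A_a,\Phi_a))_{a\in P}$. The explicit counting bounds you add (e.g.\ at most $2^{N_a}\cdot 2^{2^{N_a}}$ choices per entry port) are a harmless refinement of the paper's purely qualitative finiteness argument.
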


\begin{proof}
Since \(\Pi_k=[k^2]\) is finite, there are only finitely many subsets
\(P\subseteq \Pi_k\), and for each such \(P\) there are only finitely many
functions \(\lambda:P\to [k]\). Hence there are only finitely many labeled
interfaces \((P,\lambda)\) of rank at most \(k^2\).

We now prove by induction on \(|P|\) that each
\(\widetilde{\mathcal T}(P,\lambda)\) is finite.

If \(P=\emptyset\), then $\widetilde{\mathcal T}(P,\lambda)=\{\bot\}$.

Assume \(|P|=r>0\) and that all smaller-rank ambient type sets are finite. Fix
\(a\in P\). Since every lower-rank ambient type set is finite, the set
$\Omega(P,\lambda,a)$
is finite. Hence there are only finitely many choices for a subset
$A_a^\sigma \subseteq \Omega(P,\lambda,a)$,
and for each such choice there are only finitely many Boolean functions
$\Phi_a^\sigma : \{0,1\}^{A_a^\sigma}\to \{0,1\}$.
Since \(P\) itself is finite, there are only finitely many possible families
$\bigl((A_a^\sigma,\Phi_a^\sigma)\bigr)_{a\in P}$.

Thus \(\widetilde{\mathcal T}(P,\lambda)\) is finite.

Since there are finitely many labeled interfaces $(P,\lambda)$, and for each of them 
$\widetilde{\mathcal T}(P,\lambda)$ is finite, the disjoint union 
\[
\widetilde{\mathcal T}_k
:=
\biguplus_{P\subseteq \Pi_k}\ \biguplus_{\lambda:P\to [k]}
\widetilde{\mathcal T}(P,\lambda)
\]
is finite.
\end{proof}

For fixed $k$, every ambient type has a canonical finite encoding: the
underlying interface $(P,\lambda)$ is part of the data, each set
$A_a^\sigma$ is a subset of the finite universe $\Omega(P,\lambda,a)$, and
each Boolean function $\Phi_a^\sigma$ can be represented by its truth table.

For every ambient type \(\sigma\), we regard its underlying labeled interface
$(P_\sigma,\lambda_\sigma)$
as part of the data of \(\sigma\).

\noindent 
\textbf{Intuition.}

\noindent 
Fix a bag $X_i$, and think of each child subtree of $i$ as a black-box (or 
gadget) game attached to $X_i$ through the parent cut. From the 
perspective of the parent bag, the internal graph of that subtree is irrelevant 
except for how play can interact with this boundary. If play enters 
the child through some port $a$, then the parent only needs to know three 
things about what may happen next: through which port $q$ play 
can later return, what smaller subtree remains after that excursion, and 
whether the return happens with the same player to move. These 
are exactly the data recorded by an exit label $(q,\tau,b).$ However, 
the set of attainable exit labels by itself is not enough: the value 
of entering the child also depends on which of those possible returns 
lead to a winning parent-side continuation. The role of the boolean function 
$\Phi_a$ is to summarize this dependence. It takes as input the truth 
values of the attainable return options and tells us whether the entrant 
through $a$ can force a win inside the child. In this way, an interface 
type packages a child subtree into a finite boundary behavior, and for 
fixed width there are only finitely many such behaviors to consider. 

\subsection{Expanded residual configurations}

Fix a node $i$ of the rooted tree partition. Since the algorithm proceeds
bottom-up, the types of all residual child subinstances of $i$ are assumed
already defined.

\begin{definition}[Residual child subinstance]
A \emph{residual child subinstance} of $i$ consists of:
\begin{itemize}
    \item a residual game position arising inside the subtree of some child of
    $i$,
    \item its parent-facing labeled interface $(P_H,\lambda_H)$,
    where \(\lambda_H:P_H\to [k]\) is expressed relative to the labeling
    \(\iota_i\) of the parent bag \(X_i\),
    \item and its already defined ambient type
    $\tau(H)\in \widetilde{\mathcal T}(P_H,\lambda_H)$.
\end{itemize}
\end{definition}

\begin{definition}[Expanded residual configuration]
An \emph{expanded residual configuration} at node $i$ is a triple
$\widehat{\Gamma}=(U,F,\mathcal C)$,
where
\begin{itemize}
    \item $U\subseteq P_i$ is the set of unused parent ports,
    \item $F\subseteq E(G[X_i])$ is the set of unused edges inside the bag $X_i$,
    \item $\mathcal C$ is a multiset of residual child subinstances of $i$.
\end{itemize}
Its \emph{parent-facing labeled interface} is
$(U,\lambda_i^\uparrow|_U)$.
\end{definition}

\begin{definition}[Measure of an expanded residual configuration]
Let $\widehat{\Gamma}=(U,F,\mathcal C)$
be an expanded residual configuration. Its \emph{measure} is
\[
\mu(\widehat{\Gamma})
:=
|U|+|F|+\sum_{H\in\mathcal C}\operatorname{rk}(\tau(H)).
\]
\end{definition}

\subsection{Expanded local states}

For $\pi\in\{\mathsf{same},\mathsf{opp}\}$, let $\overline{\pi}$ denote the
other parity, and define
$\pi\circ\mathsf{same}:=\pi$,
and $\pi\circ\mathsf{opp}:=\overline{\pi}$.

\begin{definition}[Entry state]
Let $\widehat{\Gamma}=(U,F,\mathcal C)$
be an expanded residual configuration at node $i$, and let $a\in U$.
Let $z_a\in X_i$ be the unique vertex satisfying
$\iota_i(z_a)=\lambda_i^\downarrow(a)$.

The \emph{entry state} for entry through $a$ is
$S_a^0 := (z_a,\; U\setminus\{a\},\; F,\; \mathcal C,\; \mathsf{opp})$.
\end{definition}

\begin{definition}[Expanded local state]
Fix an expanded residual configuration $\widehat{\Gamma}=(U,F,\mathcal C)$
and an entry port $a\in U$.
An \emph{expanded local state} relative to $a$ is a tuple
$S=(z,U',F',\mathcal C',\pi)$,
where
\begin{itemize}
    \item $z\in X_i$ is the current token position,
    \item $U'\subseteq U\setminus\{a\}$ is the set of currently unused parent
    ports,
    \item $F'\subseteq F$ is the set of currently unused edges inside $X_i$,
    \item $\mathcal C'$ is a multiset of residual child subinstances,
    \item $\pi\in\{\mathsf{same},\mathsf{opp}\}$ records whether the player to
    move is the same as or opposite to the player who originally entered through
    $a$.
\end{itemize}
\end{definition}

\begin{figure}[H]
    \centering
    \includegraphics[width=\textwidth]{configuration_and_states.png}
    \caption{A residual configuration and entry state at a bag $X_i$. 
    Left: an expanded residual configuration $\widehat\Gamma = (U,F,\mathcal C)$ consisting of unused 
    parent ports $U$, unused internal edges $F$, and multiset of residual child subinstances $\mathcal C$. 
    Right: after entering through $a\in U$, the local game starts in the entry state 
    $S_a^0 = (z_a, U\setminus \{a\},F,\mathcal C, \mathsf{opp})$, where $z_a \in X_i$ is the child-side endpoint of $a$. 
    From this entry state, play may only proceed by a child-side excursion.}
    \label{fig:configuration_and_states}
\end{figure}

\begin{definition}[Legal moves from an expanded local state]
Let $S=(z,U',F',\mathcal C',\pi)$
be an expanded local state.

A legal move from $S$ is one of the following.

\begin{enumerate}
    \item \emph{Internal move.}
    If $\{z,z'\}\in F'$, then traversing this edge yields the successor state
    $(z',U',F'\setminus\{\{z,z'\}\},\mathcal C',\overline{\pi})$.

    \item \emph{Exit move.}
    If $q\in U'$ satisfies
    $\lambda_i^\downarrow(q)=\iota_i(z)$,
    then play may leave the current subtree through the port $q$.

    \item \emph{Child excursion.}
    If $H\in\mathcal C'$ has an entry port $b\in P_H$ satisfying
    $\lambda_H(b)=\iota_i(z)$,
    then play may enter the residual child subinstance $H$ through $b$.
\end{enumerate}
\end{definition}

\begin{definition}[Measure of an expanded local state]
The \emph{measure} of an expanded local state
$S=(z,U',F',\mathcal C',\pi)$
is
\[
\mu(S)
:=
|U'|+|F'|+\sum_{H\in\mathcal C'}\operatorname{rk}(\tau(H)).
\]
\end{definition}

\begin{remark*}
Every expanded local state reachable from the entry state $S_a^0$ satisfies
$U'=U\setminus\{a\}$,
because internal moves and child excursions do not change the set of unused
parent ports of the current node. We nevertheless keep $U'$ in the notation,
since it is useful when describing exit moves and residual configurations after
exiting.
\end{remark*}

\subsection{Returning child excursions}

\begin{definition}[Parent-side successor induced by a returning child excursion]
Let $S=(z,U',F',\mathcal C',\pi)$
be an expanded local state, and let $H\in\mathcal C'$ be a residual child
subinstance entered through a port $b\in P_H$ with
$\lambda_H(b)=\iota_i(z)$.

Suppose a legal excursion inside \(H\), entered through \(b\), returns to the
parent bag with exit label $(q,\tau',c)\in A_b^{\tau(H)}$.
Let \(H'\) denote the residual child subinstance produced by that excursion, so
that $\tau(H')=\tau'$.
Let $z_q\in X_i$ be the unique vertex satisfying
$\iota_i(z_q)=\lambda_H(q)$.
The corresponding \emph{parent-side successor state} is
\[
S[q,H',c]
:=
(z_q,\; U',\; F',\; (\mathcal C'\setminus\{H\})\uplus\{H'\},\;
\pi\circ c).
\]
\end{definition}

\subsection{Local semantics and corresponding states}

We now explain how the parent-side local game is evaluated.

Fix a node \(i\), and let
$\widehat{\Gamma}=(U,F,\mathcal C)$
be an expanded residual configuration at \(i\). Let \(a\in U\) be an entry
port. We evaluate the local game relative to the player who entered the current
subtree through \(a\).

\noindent 
\textbf{Viewpoint convention.}

\noindent 
All local state values are interpreted relative to the player who originally
entered the current subtree through \(a\).

If a legal move is a child excursion into a residual child subinstance \(H\)
through an entry port \(b\), then the player who enters \(H\) is the player to
move in the current local state. Hence the Boolean function $\Phi_b^{\tau(H)}$
must be evaluated from the viewpoint of that child entrant. Therefore, when the
current parity is \(\mathsf{same}\), the child entrant is the original entrant
through \(a\), whereas when the current parity is \(\mathsf{opp}\), the child
entrant is the opponent of the original entrant through \(a\). In the latter
case one must complement outcomes when passing between parent-side state values
and the valuation supplied to \(\Phi_b^{\tau(H)}\).

\noindent 
\textbf{Inductive setup.}

\noindent
The definitions below are used inside an induction on the measure
\(\mu(\widehat{\Gamma})\). Accordingly, whenever an exit move produces a smaller
expanded residual configuration, we may refer to its already defined ambient
type.

\begin{lemma}[Acyclicity of the parent-side local game]
\label{lem:local-acyclic}
Fix an expanded residual configuration
$\widehat{\Gamma}=(U,F,\mathcal C)$
and an entry port \(a\in U\). Consider the directed graph whose vertices are the
expanded local states reachable from the entry state \(S_a^0\), and whose arcs
are the successor states arising from internal moves and from returning child
excursions. Then this graph is acyclic.
\end{lemma}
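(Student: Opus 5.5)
The plan is to exhibit a potential function that strictly decreases along every arc of the local game graph. The natural candidate is the measure of an expanded local state,
\[
\mu(S)=|U'|+|F'|+\sum_{H\in\mathcal C'}\operatorname{rk}(\tau(H)),
\]
which takes values in the nonnegative integers. If every arc $S\to S'$ satisfies $\mu(S')<\mu(S)$, then any directed cycle $S_0\to S_1\to\dots\to S_\ell=S_0$ with $\ell\ge1$ would give $\mu(S_0)<\mu(S_0)$, which is impossible. Exit moves need not be considered, since by the statement they are not arcs of this graph; they leave the local game and lead to a terminal evaluation.

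The argument splits according to the two kinds of arcs.

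\emph{Internal moves.} An internal move along $\{z,z'\}\in F'$ produces $(z',U',F'\setminus\{\{z,z'\}\},\mathcal C',\overline{\pi})$. Here $U'$ and $\mathcal C'$ are unchanged, and $|F'|$ drops by exactly one, so $\mu$ decreases by $1$.

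\emph{Returning child excursions.} The successor is $S[q,H',c]$, which has the same $U'$ and $F'$ and replaces $H$ by $H'$ in the multiset. So the task reduces to showing $\operatorname{rk}(\tau(H'))<\operatorname{rk}(\tau(H))$. The exit label $(q,\tau',c)$ lies in $A_b^{\tau(H)}\subseteq\Omega(P_H,\lambda_H,b)$. By the definition of $\Omega$, we have $q\in P_H\setminus\{b\}$ and $\tau'\in\widetilde{\mathcal T}(P_H\setminus\{b,q\},\lambda_H|_{P_H\setminus\{b,q\}})$. Since $b\neq q$ and both lie in $P_H$, the rank of $\tau'$ is $|P_H|-2$. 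Together with $\tau(H')=\tau'$, this shows $\mu$ drops by exactly $2$.

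The main subtlety is this second case. The decrease relies on the definition of ambient types, which build in that the entry port and the return port are distinct and are both removed from the residual interface. It also relies on the convention that $\tau(H')$ is exactly the label's type $\tau'$, as stated in the definition of the parent-side successor. Two points need confirming. First, the multiset replacement $(\mathcal C'\setminus\{H\})\uplus\{H'\}$ removes one copy of $H$ even when $H$ occurs with multiplicity, so the sum changes only by $\operatorname{rk}(\tau')-\operatorname{rk}(\tau(H))$. Second, because the graph is simple, ports correspond to distinct cut edges, so "port $b$ used, port $q$ used" genuinely removes two interface elements. With these points in place, the measure argument closes the proof. As a byproduct, the longest path from $S_a^0$ has length at most $\mu(S_a^0)\le |F|+\sum_H\operatorname{rk}(\tau(H))$.
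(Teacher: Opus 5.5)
Your proof is correct and is essentially the paper's argument: the local-state measure drops by $1$ under an internal move and by $2$ under a returning child excursion, because the successor type $\tau'$ in an exit label $(q,\tau',c)\in A_b^{\tau(H)}$ lives on the interface $P_H\setminus\{b,q\}$. Your appeal to simplicity of $G$ is not needed at that step, since $q\neq b$ already follows from $q\in P_H\setminus\{b\}$ in the definition of $\Omega$.
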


\begin{proof}
Every such arc strictly decreases the local-state measure.

For an internal move, one edge is removed from \(F'\), so the measure decreases
by \(1\).

For a returning child excursion through a residual child subinstance \(H\), the
resulting residual child subinstance \(H'\) has type \(\tau(H')=\tau'\), where
the corresponding exit label has the form $(q,\tau',c)\in A_b^{\tau(H)}$
for some entry port \(b\in P_H\). Since \(\tau(H)\) is an ambient type on the
interface \((P_H,\lambda_H)\), every such exit label satisfies
$\tau' \in
\widetilde{\mathcal T}(P_H\setminus\{b,q\},\lambda_H|_{P_H\setminus\{b,q\}})$.
Hence $\operatorname{rk}(\tau')=\operatorname{rk}(\tau(H))-2$.
Thus replacing \(H\) by \(H'\) decreases the measure by \(2\).

Therefore every arc strictly decreases the measure, and no directed cycle can
exist.
\end{proof}

\begin{definition}[Exit labels]
\label{def:exit-labels}
Let $S=(z,U',F',\mathcal C',\pi)$
be an expanded local state reachable from \(S_a^0\), and suppose that \(q\in U'\)
satisfies $\lambda_i^\downarrow(q)=\iota_i(z)$.
Then exiting through \(q\) produces the smaller expanded residual configuration
$\widehat{\Gamma}_S^q:=(U'\setminus\{q\},F',\mathcal C')$.

The corresponding \emph{exit label} is
$\ell_S(q):=(q,\tau(\widehat{\Gamma}_S^q),\overline{\pi})$.
\end{definition}

\begin{definition}[Attainable exit labels]
\label{def:attainable-exit-labels}
Let $\widehat{\Gamma}=(U,F,\mathcal C)$
be an expanded residual configuration and let \(a\in U\). The set
$A_a^{\widehat{\Gamma}}$
of \emph{attainable exit labels} is the set of all exit labels \(\ell_S(q)\)
that arise from some legal play of the parent-side local game started at the
entry state \(S_a^0\).
\end{definition}

\begin{definition}[Corresponding expanded local states]
\label{def:corresponding-states}
Let
$
\widehat{\Gamma}_1=(U,F,\mathcal C_1),\
\widehat{\Gamma}_2=(U,F,\mathcal C_2)
$
be two expanded residual configurations that differ only in that one residual
child subinstance \(H_1\in\mathcal C_1\) is replaced by another residual child
subinstance \(H_2\in\mathcal C_2\).

Fix an entry port \(a\in U\). Two expanded local states
\[
S_1=(z,U',F',\mathcal C_1',\pi),\qquad
S_2=(z,U',F',\mathcal C_2',\pi)
\]
reachable from the corresponding entry states are called \emph{corresponding} if
\(\mathcal C_1'\) and \(\mathcal C_2'\) differ only in that \(H_1\) is replaced
by \(H_2\), or both \(H_1\) and \(H_2\) are absent.
\end{definition}

\subsection{Simultaneous well-definedness and replacement}

\begin{definition}[Ambient valuation]
    Let \(P,\lambda\) be a labeled interface, and let $a \in P$. 
    An \emph{ambient valuation} for entry through $a$ is a function 
    $\eta : \Omega(P, \lambda, a) \to \{0,1\}$.
    
    If $A \subseteq \Omega(P,\lambda, a)$, we write $\eta\mid_{A}$ for its restriction to $A$.
    
\end{definition}

\noindent 
\textbf{Proof idea.}

\noindent
The only potential circularity in defining the type of a residual configuration
comes from child excursions: to evaluate such a move, we consult the Boolean
function stored in the child type, but that Boolean function is itself indexed
by possible return labels, each of which leads to a parent-side successor state.
The key point is that this dependence is well founded. Internal moves strictly
decrease the number of unused bag edges, and returning child excursions strictly
decrease the total rank contribution of the child interfaces. Hence, for a fixed
residual configuration and a fixed entry port, the reachable parent-side local
game is acyclic and can be evaluated by reverse induction. The lemma below
proves simultaneously that this evaluation is well defined and that replacing a
residual child subinstance by another one of the same type does not change the
parent-side outcome.

\begin{lemma}[Simultaneous well-definedness and replacement]
\label{lem:joint}
Let \(i\) be a node of the rooted tree partition.

\begin{enumerate}
    \item For every expanded residual configuration
    $\widehat{\Gamma}=(U,F,\mathcal C)$
    at node \(i\), there exists a uniquely defined ambient type
    $\tau(\widehat{\Gamma})
    \in
    \widetilde{\mathcal T}(U,\lambda_i^\uparrow|_U)$.

    \item Let
    $\widehat{\Gamma}_1=(U,F,\mathcal C_1),\qquad
    \widehat{\Gamma}_2=(U,F,\mathcal C_2)$
    be expanded residual configurations that differ only in that one residual
    child subinstance \(H_1\in\mathcal C_1\) is replaced by another residual
    child subinstance \(H_2\in\mathcal C_2\) with
    $\tau(H_1)=\tau(H_2)$.
    
    Fix \(a\in U\), and let
    $\eta:\Omega(U,\lambda_i^\uparrow|_U,a)\to\{0,1\}$
    be an ambient valuation. 
    
    Write
    $\eta_t:=\eta|_{A_a^{\widehat{\Gamma}_t}}, \ (t=1,2)$.
    Then for every pair of corresponding reachable expanded local states
    $S_1,S_2$,
    we have
    \[
    \operatorname{Win}_{\widehat{\Gamma}_1,a,\eta_1}(S_1)
    =
    \operatorname{Win}_{\widehat{\Gamma}_2,a,\eta_2}(S_2).
    \]
\end{enumerate}
\end{lemma}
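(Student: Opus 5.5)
We prove both parts simultaneously by induction on the measure $\mu(\widehat{\Gamma})$. Inside a fixed configuration and entry port we use reverse induction along the acyclic local game of Lemma~\ref{lem:local-acyclic}. The outer induction supplies, for every exit move, the already defined type $\tau(\widehat{\Gamma}_S^q)$ of the strictly smaller configuration $\widehat{\Gamma}_S^q=(U'\setminus\{q\},F',\mathcal C')$. Its measure is smaller because $a$ and $q$ are both removed from $U$. Hence the exit labels $\ell_S(q)$ of Definition~\ref{def:exit-labels} are well defined.

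\textbf{Part~1.} Fix $\widehat{\Gamma}$ and $a\in U$, and define $\operatorname{Win}_{\widehat{\Gamma},a,\eta}(S)$ for a valuation $\eta$ of $\Omega(U,\lambda_i^\uparrow|_U,a)$ by reverse induction on $\mu(S)$. The state is won for the player to move if some legal move is good for them. An internal move is good if the successor state is lost for the new mover. An exit through $q$ is evaluated by $\eta(\ell_S(q))$, complemented according to $\pi$. A child excursion into $H$ through $b$ is evaluated by
\[
\Phi_b^{\tau(H)}\bigl((v_\ell)_{\ell\in A_b^{\tau(H)}}\bigr),
\]
where for each $\ell=(q,\tau',c)$ the argument $v_\ell$ is the value of the successor $S[q,H',c]$, translated to the child entrant's viewpoint by the complementation rule of the viewpoint convention. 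Every successor has strictly smaller measure, so this is well founded. The attainable set $A_a^{\widehat{\Gamma}}$ is computed as a reachability set in the same finite acyclic graph. One subtlety is that $S[q,H',c]$ depends on $H'$ only through $\tau'$; this is exactly what Part~2, applied within smaller measure, guarantees. We then set $\Phi_a^{\widehat{\Gamma}}(x):=\operatorname{Win}(S_a^0)$ evaluated with any $\eta$ extending $x$ on $A_a^{\widehat{\Gamma}}$. This is independent of the extension because only attainable labels are ever queried. Collecting the pairs $(A_a,\Phi_a)$ over $a\in U$ gives the unique ambient type $\tau(\widehat{\Gamma})\in\widetilde{\mathcal T}(U,\lambda_i^\uparrow|_U)$.

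\textbf{Part~2.} Again use reverse induction on $\mu(S_1)=\mu(S_2)$, which are equal because $\operatorname{rk}\tau(H_1)=\operatorname{rk}\tau(H_2)$. Corresponding states have the same position $z$, the same $F'$ and $U'$, and the same parity $\pi$. Hence internal and exit moves match bijectively, and the internal successors again correspond. For exits, $\widehat{\Gamma}_{1,S_1}^q$ and $\widehat{\Gamma}_{2,S_2}^q$ differ by a same-type replacement and have smaller measure. The outer induction hypothesis for Part~2, applied to their entry states, shows their types coincide, so the exit labels are equal and $\eta_1$ and $\eta_2$ agree on them. The same argument shows $A_a^{\widehat{\Gamma}_1}=A_a^{\widehat{\Gamma}_2}$.

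Child excursions into children other than $H_1$ and $H_2$ lead to corresponding successor states, to which the induction hypothesis applies. Excursions into $H_1$ versus $H_2$ use the identical function $\Phi_b^{\tau(H_1)}=\Phi_b^{\tau(H_2)}$ on the identical label set. The successor states $S_1[q,H_1',c]$ and $S_2[q,H_2',c]$ then contain residual children $H_1'$ and $H_2'$ of equal type $\tau'$. These are not corresponding in the narrow sense of Definition~\ref{def:corresponding-states}, since the replaced child has changed.

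\textbf{Main obstacle.} The step I expect to be hardest is this last one. I would strengthen the induction statement to allow the replaced pair to be any $H_1'$ and $H_2'$ of equal type, rather than the original $H_1$ and $H_2$. Equivalently, states are called corresponding whenever their multisets of child types agree. Under this strengthening the argument closes, since each step only uses equality of child types. The bookkeeping of parity complementation when passing values into $\Phi$ must be checked carefully but is routine.
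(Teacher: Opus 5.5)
Your proposal is correct and follows essentially the same route as the paper. Both arguments use an outer induction on $\mu(\widehat\Gamma)$ to supply the types behind exit labels, reverse induction over the acyclic local game to define values, and a move-by-move matching of corresponding states for the replacement part. The strengthening you flag as the main obstacle is to let the replaced pair be any two children of equal type, equivalently to take correspondence up to equality of type multisets. The paper tacitly relies on exactly this in Step~4, where it asserts that excursions into $H_1$ and $H_2$ lead to corresponding successor states, and in its well-definedness argument for $\nu_{S,H,b}$; so your version is a more careful statement of the same proof rather than a different one.
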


\begin{proof}
We prove the following stronger statement by induction on the measure
\[
\mu(\widehat{\Gamma})
=
|U|+|F|+\sum_{H\in\mathcal C}\operatorname{rk}(\tau(H)).
\]

\smallskip
\noindent
\emph{Induction claim.}
For every expanded residual configuration
$\widehat{\Gamma}=(U,F,\mathcal C)$
at node \(i\):
\begin{enumerate}
    \item[(a)] the ambient type \(\tau(\widehat{\Gamma})\) is uniquely defined;
    \item[(b)] whenever two expanded residual configurations of smaller or equal
    measure differ only by replacing one residual child subinstance by another
    of the same type, then:
    \begin{itemize}
        \item their ambient types are equal, and
        \item corresponding reachable expanded local states have the same value
        under every common ambient valuation.
    \end{itemize}
\end{enumerate}
The lemma follows immediately from this stronger claim.

\smallskip
\noindent
\textbf{Base case.}
If $\mu(\widehat{\Gamma})=0$,
then \(U=\emptyset\), \(F=\emptyset\), and \(\mathcal C=\emptyset\). Hence there
is no entry port, so the only possible ambient type is the unique null type
\(\bot\). Thus \(\tau(\widehat{\Gamma})\) is uniquely defined. The replacement
statement is vacuous.

\smallskip
\noindent
\textbf{Induction step.}
Assume the claim has been proved for all expanded residual configurations of
smaller measure, and fix
$\widehat{\Gamma}=(U,F,\mathcal C)$.

We first prove part~(a), that \(\tau(\widehat{\Gamma})\) is well defined; uniqueness 
then follows immediately from the explicit definition.

\smallskip
\noindent
\emph{Step 1: exit labels.}
Fix an entry port \(a\in U\). Let
$S=(z,U',F',\mathcal C',\pi)$
be any expanded local state reachable from the entry state \(S_a^0\), and
suppose that \(q\in U'\) satisfies
$\lambda_i^\downarrow(q)=\iota_i(z)$.

Exiting through \(q\) produces the smaller expanded residual configuration
$\widehat{\Gamma}_S^q:=(U'\setminus\{q\},F',\mathcal C')$.

Since \(q\) is removed, we have
$\mu(\widehat{\Gamma}_S^q)<\mu(\widehat{\Gamma})$,
so by the induction hypothesis the type
$\tau(\widehat{\Gamma}_S^q)$
is already defined. We therefore define the corresponding exit label by
$\ell_S(q):=(q,\tau(\widehat{\Gamma}_S^q),\overline{\pi})$.

Because every reachable state satisfies \(U'=U\setminus\{a\}\), the residual
configuration after exiting has interface
$(U\setminus\{a,q\},\lambda_i^\uparrow|_{U\setminus\{a,q\}})$,
and therefore
$\ell_S(q)\in \Omega(U,\lambda_i^\uparrow|_U,a)$.

Let
$A_a^{\widehat{\Gamma}}$
be the set of all exit labels \(\ell_S(q)\) arising from legal plays of the
parent-side local game started at \(S_a^0\).

\smallskip
\noindent
\emph{Step 2: values of local states.}
Consider the directed graph whose vertices are the expanded local states
reachable from \(S_a^0\), and whose arcs are the successor states arising from
internal moves and from returning child excursions. By
Lemma~\ref{lem:local-acyclic}, this graph is acyclic.

Fix now a valuation
$\nu:A_a^{\widehat{\Gamma}}\to\{0,1\}$.
For every reachable expanded local state
$S=(z,U',F',\mathcal C',\pi)$,
we define
$\operatorname{Win}_{\widehat{\Gamma},a,\nu}(S)\in\{0,1\}$
by reverse induction over the reachable local-state graph.

Here and throughout, 
$\operatorname{Win}_{\widehat{\Gamma},a,\nu}(S)=1$
means that the player who originally entered the current subtree through \(a\)
has a winning strategy from the local state \(S\).

If \(S\) is terminal, then
\[
\operatorname{Win}_{\widehat{\Gamma},a,\nu}(S)=1
\quad\Longleftrightarrow\quad
\pi=\mathsf{opp},
\]
since in that case the player to move is the opponent of the entering player.

Suppose now that \(S\) is nonterminal. For a legal move \(M\) from \(S\), define
its value \(V(M)\) as follows.

\begin{itemize}
    \item If \(M\) is an internal move to a successor state \(S'\), let
    $V(M):=\operatorname{Win}_{\widehat{\Gamma},a,\nu}(S')$.

    \item If \(M\) is an exit move through a port \(q\in U'\), let
    $V(M):=\nu(\ell_S(q))$.

    \item If \(M\) is a child excursion into a residual child subinstance
    \(H\in\mathcal C'\) through an entry port \(b\in P_H\), define a valuation
    on the exit-label domain of \(\tau(H)\) as follows.

    For each label
    $(q,\tau',c)\in A_b^{\tau(H)}$,
    choose any legal returning excursion in \(H\) that yields a residual child
    subinstance \(H'\) with
    $\tau(H')=\tau'$,
    and let
    $S[q,H',c]$
    be the corresponding parent-side successor state. Then set
    \[
    \nu_{S,H,b}(q,\tau',c)
    :=
    \begin{cases}
    \operatorname{Win}_{\widehat{\Gamma},a,\nu}(S[q,H',c]),
    & \text{if }\pi=\mathsf{same},\\[1ex]
    1-\operatorname{Win}_{\widehat{\Gamma},a,\nu}(S[q,H',c]),
    & \text{if }\pi=\mathsf{opp}.
    \end{cases}
    \]
    Thus \(\nu_{S,H,b}(q,\tau',c)\) records whether the player who enters \(H\)
    wins after that return label is realized.

    This is well defined: if two choices of returning excursion yield residual
    child subinstances \(H'_1\) and \(H'_2\) of the same type \(\tau'\), then
    the resulting expanded residual configurations have smaller measure than
    \(\widehat{\Gamma}\) and differ only by replacing \(H'_1\) by \(H'_2\). By
    the induction hypothesis, the corresponding successor states have the same
    value. Since the same parity \(\pi\) is used in both cases, the same
    complementation rule applies, so \(\nu_{S,H,b}\) is independent of the
    chosen witnessing excursion.

    Let $W_{S,H,b}:=\Phi_b^{\tau(H)}(\nu_{S,H,b})$.
    Since \(W_{S,H,b}\) is the outcome from the viewpoint of the player who
    enters \(H\), the value of the move \(M\) from the viewpoint of the player
    who entered through \(a\) is
    \[
    V(M):=
    \begin{cases}
    W_{S,H,b}, & \text{if }\pi=\mathsf{same},\\[1ex]
    1-W_{S,H,b}, & \text{if }\pi=\mathsf{opp}.
    \end{cases}
    \]
\end{itemize}

Finally define
\[
\operatorname{Win}_{\widehat{\Gamma},a,\nu}(S)=
\begin{cases}
\max\{V(M)\mid M\text{ legal from }S\}, & \text{if }\pi=\mathsf{same},\\[1ex]
\min\{V(M)\mid M\text{ legal from }S\}, & \text{if }\pi=\mathsf{opp}.
\end{cases}
\]

\smallskip
\noindent
\emph{Step 3: define the type of \(\widehat{\Gamma}\).}
For the fixed entry port \(a\in U\), define the Boolean function
$\Phi_a^{\widehat{\Gamma}}:\{0,1\}^{A_a^{\widehat{\Gamma}}}\to\{0,1\}$
by
$\Phi_a^{\widehat{\Gamma}}(\nu)
:=
\operatorname{Win}_{\widehat{\Gamma},a,\nu}(S_a^0).$

Doing this for every \(a\in U\), we define
$\tau(\widehat{\Gamma})
:=
\bigl((A_a^{\widehat{\Gamma}},\Phi_a^{\widehat{\Gamma}})\bigr)_{a\in U}$.

Since
$A_a^{\widehat{\Gamma}}\subseteq \Omega(U,\lambda_i^\uparrow|_U,a)$
for every \(a\in U\), this indeed defines an element of
$\widetilde{\mathcal T}(U,\lambda_i^\uparrow|_U)$.

For each entry port $a\in U$, the set $A^{\widehat{\Gamma}}_a$ is uniquely determined by the legal plays 
from $S_a^0$. Moreover, by the well-definedness argument above, the values 
$\operatorname{Win}_{\widehat \Gamma, a, \nu}(S)$, and therefore the boolean functions 
$\Phi^{\widehat \Gamma}_a$, are independent of the choices of witnessing excursions used in the definition of 
child-excursion move values. Hence the family
$\left((A^{\widehat \Gamma}_a, \Phi^{\widehat \Gamma}_a)\right)_{a\in U}$
is uniquely determined, and thus $\tau(\widehat \Gamma)$ is uniquely defined.

Thus part~(a) holds.

\medskip
We now prove part~(b), the replacement statement.

Let
$\widehat{\Gamma}_1=(U,F,\mathcal C_1), \
\widehat{\Gamma}_2=(U,F,\mathcal C_2)$
differ only in that one residual child subinstance \(H_1\in\mathcal C_1\) is
replaced by another residual child subinstance \(H_2\in\mathcal C_2\) with
$\tau(H_1)=\tau(H_2)$.
Fix \(a\in U\), let
$\eta:\Omega(U,\lambda_i^\uparrow|_U,a)\to\{0,1\}$
be an ambient valuation, and write
$\eta_t:=\eta|_{A_a^{\widehat{\Gamma}_t}}$ for $t \in \{1,2\}$.

We first show that the two types are equal, and then conclude the equality of
state values.

\smallskip
\noindent
\emph{Step 4: equality of types under replacement.}
The reachable parent-side local games of \(\widehat{\Gamma}_1\) and
\(\widehat{\Gamma}_2\) have the same move structure up to corresponding states.

Internal moves clearly correspond bijectively. Exit moves also correspond
bijectively, since they depend only on the current bag position and the set of
unused parent ports. Moreover, if corresponding states
$S_1=(z,U',F',\mathcal C_1',\pi),\
S_2=(z,U',F',\mathcal C_2',\pi)$
exit through the same port \(q\), then the resulting expanded residual
configurations
$\widehat{\Gamma}_{1,S_1}^q,\qquad \widehat{\Gamma}_{2,S_2}^q$
have smaller measure and differ only by replacing one residual child
subinstance by another of the same type, or else both distinguished child
subinstances are absent. By the induction hypothesis, their types are equal.
Hence the resulting exit labels coincide.

For child excursions into children other than \(H_1\) and \(H_2\), the moves
again correspond trivially. For child excursions into \(H_1\) and \(H_2\), the
equality
$\tau(H_1)=\tau(H_2)$
implies that the same entry ports are available, the same exit-label sets occur,
and the same Boolean functions
$\Phi_b^{\tau(H_1)}=\Phi_b^{\tau(H_2)}$
are used.

It follows that the same exit labels are attainable from the two entry states.
Hence
$A_a^{\widehat{\Gamma}_1}=A_a^{\widehat{\Gamma}_2}$.

Now fix any valuation
$\nu:A_a^{\widehat{\Gamma}_1}=A_a^{\widehat{\Gamma}_2}\to\{0,1\}$.
We show by reverse induction on the local-state measure that corresponding
reachable states have equal value under \(\nu\). The terminal case is immediate.

Suppose
$S_1=(z,U',F',\mathcal C_1',\pi),\
S_2=(z,U',F',\mathcal C_2',\pi)$
are corresponding nonterminal states. Internal moves lead to corresponding
successor states of smaller local-state measure, so their move-values agree by
the reverse-induction hypothesis.

For exit moves through a port \(q\), we have already shown that the resulting
exit labels coincide, so the move-values agree.

For child excursions into children other than \(H_1\) and \(H_2\), the induced
successor states again correspond and have smaller local-state measure, so the
move-values agree by reverse induction.

For child excursions into \(H_1\) and \(H_2\), the equalities
$\tau(H_1)=\tau(H_2)
\qquad\text{and}\qquad
A_b^{\tau(H_1)}=A_b^{\tau(H_2)}$
imply that the same exit-label domain is used, and corresponding returning
labels lead to corresponding successor states of smaller local-state measure.
Hence the induced valuations \(\nu_{S_1,H_1,b}\) and \(\nu_{S_2,H_2,b}\) agree,
so the child-excursion move-values agree as well.

Hence the corresponding successor states have equal values, and since the same
parity \(\pi\) occurs in \(S_1\) and \(S_2\), the same viewpoint conversion is
applied in both states. Therefore the induced valuations
\[
\nu_{S_1,H_1,b}
\qquad\text{and}\qquad
\nu_{S_2,H_2,b}
\]
agree, so the child-excursion move-values agree as well.

Therefore corresponding states have exactly the same set of move-values. Since
the player to move is encoded by the same parity \(\pi\), the min/max rule gives
\[
\operatorname{Win}_{\widehat{\Gamma}_1,a,\nu}(S_1)
=
\operatorname{Win}_{\widehat{\Gamma}_2,a,\nu}(S_2).
\]

Applying this to the entry states \(S_{a,1}^0\) and \(S_{a,2}^0\) shows that
$\Phi_a^{\widehat{\Gamma}_1}=\Phi_a^{\widehat{\Gamma}_2}$.
Since also
$A_a^{\widehat{\Gamma}_1}=A_a^{\widehat{\Gamma}_2}$,
we obtain
$\tau(\widehat{\Gamma}_1)=\tau(\widehat{\Gamma}_2)$.

\smallskip
\noindent
\emph{Step 5: equality of values under ambient valuations.}
Now let
$\eta:\Omega(U,\lambda_i^\uparrow|_U,a)\to\{0,1\}$
be an ambient valuation, and let
$\eta_t:=\eta|_{A_a^{\widehat{\Gamma}_t}},
\ (t=1,2)$.
Since the attainable exit-label sets are equal, these restrictions agree under
the identification
$A_a^{\widehat{\Gamma}_1}=A_a^{\widehat{\Gamma}_2}$.
Hence the same reverse-induction argument as above yields
\[
\operatorname{Win}_{\widehat{\Gamma}_1,a,\eta_1}(S_1)
=
\operatorname{Win}_{\widehat{\Gamma}_2,a,\eta_2}(S_2)
\]
for every pair of corresponding reachable states \(S_1,S_2\).

This proves the induction step, and therefore the lemma.
\end{proof}

\begin{corollary}[Replacement preserves type]
\label{cor:replacement-preserves-type}
Let
$\widehat{\Gamma}_1=(U,F,\mathcal C_1),\
\widehat{\Gamma}_2=(U,F,\mathcal C_2)$
be expanded residual configurations that differ only in that one residual child
subinstance \(H_1\in\mathcal C_1\) is replaced by another residual child
subinstance \(H_2\in\mathcal C_2\), with
$\tau(H_1)=\tau(H_2)$.

Then
$\tau(\widehat{\Gamma}_1)=\tau(\widehat{\Gamma}_2)$.
\end{corollary}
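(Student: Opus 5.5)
This corollary is essentially a restatement of part~(2) of Lemma~\ref{lem:joint}, specialized from state values to the ambient types themselves. The plan is to extract it directly from the strengthened induction claim proved there rather than redo any game analysis.

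First, observe that $\widehat{\Gamma}_1$ and $\widehat{\Gamma}_2$ have the same measure. They share $U$ and $F$, and $\operatorname{rk}(\tau(H_1))=\operatorname{rk}(\tau(H_2))$ because equal types have the same underlying interface. Both configurations therefore lie within the scope of the induction claim (b) at the common measure $\mu(\widehat{\Gamma}_1)=\mu(\widehat{\Gamma}_2)$. By part~(1) of Lemma~\ref{lem:joint}, both ambient types $\tau(\widehat{\Gamma}_1)$ and $\tau(\widehat{\Gamma}_2)$ are uniquely defined elements of the same set $\widetilde{\mathcal T}(U,\lambda_i^\uparrow|_U)$. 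So it suffices to compare their components $(A_a,\Phi_a)$ for each entry port $a\in U$.

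Next, fix $a\in U$ and invoke Step~4 of the proof of Lemma~\ref{lem:joint}. There, the correspondence between reachable local states gives $A_a^{\widehat{\Gamma}_1}=A_a^{\widehat{\Gamma}_2}$: internal and exit moves match, and exit labels coincide by the induction hypothesis on strictly smaller configurations. Then, for any valuation $\nu$ on this common set, evaluate the reverse-induction equality of $\operatorname{Win}$ at the corresponding entry states $S_{a,1}^0$ and $S_{a,2}^0$. This yields $\Phi_a^{\widehat{\Gamma}_1}(\nu)=\Phi_a^{\widehat{\Gamma}_2}(\nu)$. Alternatively, one can apply part~(2) of the lemma with an ambient valuation $\eta$ extending $\nu$ arbitrarily to all of $\Omega(U,\lambda_i^\uparrow|_U,a)$, and then restrict it back.

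Since this holds for every $a$, the families of pairs $(A_a,\Phi_a)$ agree, and hence $\tau(\widehat{\Gamma}_1)=\tau(\widehat{\Gamma}_2)$.

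The only delicate point is one of bookkeeping. Part~(2) as stated speaks of state values, while equality of the sets $A_a$ is established only inside the proof through clause (b) of the stronger claim. I would therefore cite that clause explicitly, namely that "their ambient types are equal", rather than re-deriving it. I would also check that the equal-measure case is indeed covered, since the clause is phrased for configurations "of smaller or equal measure".
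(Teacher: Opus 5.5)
Your proposal is correct and follows the paper's proof, which likewise reads off type equality from Lemma~\ref{lem:joint} via equal attainable exit-label sets and equal Boolean functions for every entry port. Your remark that equality of the sets $A_a$ is not literally part of statement~(2) is accurate: it comes from clause~(b) of the strengthened induction claim (Step~4), so citing that clause makes your version slightly more careful than the paper's one-line appeal to part~(2).
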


\begin{proof}
By Lemma~\ref{lem:joint}(2), replacing one residual child subinstance by another
of the same type preserves the attainable exit-label sets and the corresponding
Boolean functions for every entry port. Hence the resulting ambient types are
equal.
\end{proof}

\subsection{Realizable interface types}

\begin{definition}[Realizable interface type]
Let \((P,\lambda)\) be a labeled interface. A type
$\sigma\in \widetilde{\mathcal T}(P,\lambda)$
is \emph{realizable} if there exists an expanded residual configuration
$\widehat{\Gamma}$
with parent-facing labeled interface \((P,\lambda)\) such that
$\tau(\widehat{\Gamma})=\sigma$.

We write
$\mathcal T(P,\lambda)\subseteq \widetilde{\mathcal T}(P,\lambda)$
for the set of all realizable interface types on \((P,\lambda)\).
\end{definition}

\begin{definition}[Global realizable type universe]
We define
\[
\mathcal T_k
:=
\biguplus_{P\subseteq\Pi_k}\ \biguplus_{\lambda:P\to [k]}
\mathcal T(P,\lambda).
\]
\end{definition}

\begin{lemma}
\label{lem:realizable-types-finite}
For every fixed \(k\), the set \(\mathcal T_k\) is finite.
\end{lemma}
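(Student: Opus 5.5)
The claim follows from Lemma~\ref{lem:finitely-many-types} by a subset argument. By definition, for every labeled interface $(P,\lambda)$ the realizable set $\mathcal T(P,\lambda)$ is a subset of the ambient set $\widetilde{\mathcal T}(P,\lambda)$. So I would first observe that
\[
\mathcal T_k=\biguplus_{P\subseteq\Pi_k}\ \biguplus_{\lambda:P\to[k]}\mathcal T(P,\lambda)
\subseteq
\biguplus_{P\subseteq\Pi_k}\ \biguplus_{\lambda:P\to[k]}\widetilde{\mathcal T}(P,\lambda)
=\widetilde{\mathcal T}_k .
\]
The inclusion holds componentwise, because both disjoint unions are indexed by the same finite family of labeled interfaces.

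I would then invoke Lemma~\ref{lem:finitely-many-types}. It says that $\widetilde{\mathcal T}_k$ is finite for every fixed $k$, so any subset of it is finite, and in particular $\mathcal T_k$ is. This also gives the explicit bound $|\mathcal T_k|\le|\widetilde{\mathcal T}_k|$. That bound depends only on $k$, as the running-time analysis in Section~5 requires.

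The one point needing care is that $\mathcal T(P,\lambda)$ really is a subset of $\widetilde{\mathcal T}(P,\lambda)$, and not merely a set of objects built the same way. This is exactly what Lemma~\ref{lem:joint}(1) provides: for each expanded residual configuration $\widehat\Gamma$ with parent-facing interface $(P,\lambda)$, the type $\tau(\widehat\Gamma)$ is uniquely defined and lies in $\widetilde{\mathcal T}(P,\lambda)$. In particular, each attainable exit-label set $A_a^{\widehat\Gamma}$ lies inside $\Omega(P,\lambda,a)$.

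Note that the set of realizing configurations may be infinite, since it ranges over all instances. This does not matter, because realizability only selects elements of an already finite set. No game-theoretic argument is needed beyond what Lemma~\ref{lem:joint} already established.
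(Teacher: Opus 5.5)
Your proposal is correct and matches the paper's proof: since $\mathcal T(P,\lambda)\subseteq\widetilde{\mathcal T}(P,\lambda)$ for every labeled interface, you get $\mathcal T_k\subseteq\widetilde{\mathcal T}_k$, which is finite by Lemma~\ref{lem:finitely-many-types}. Your appeal to Lemma~\ref{lem:joint}(1) is harmless but not needed, because the paper defines realizable types directly as elements of $\widetilde{\mathcal T}(P,\lambda)$, so the inclusion holds by definition.
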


\begin{proof}
For each labeled interface \((P,\lambda)\), we have
$\mathcal T(P,\lambda)\subseteq \widetilde{\mathcal T}(P,\lambda)$.
Hence
$\mathcal T_k \subseteq \widetilde{\mathcal T}_k$.

The latter set is finite by Lemma~\ref{lem:finitely-many-types}, so
\(\mathcal T_k\) is finite as well.
\end{proof}

In an implementation, it is not necessary to materialize the full ambient
universe \(\widetilde{\mathcal T}_k\) in advance; one may canonically construct
and intern only the realizable types that arise during the bottom-up
computation.

Whenever we speak of a type \(\sigma\in\mathcal T_k\), its underlying labeled
interface
$(P_\sigma,\lambda_\sigma)$
is regarded as part of the data of \(\sigma\). In particular,
$\operatorname{rk}(\sigma)=|P_\sigma|$,
and for every entry port \(a\in P_\sigma\), the set
$A_a^\sigma$
is defined.

\begin{lemma}[Realizable successors]
\label{lem:realizable-successor}
Let \((P,\lambda)\) be a labeled interface, let
$\sigma\in \mathcal T(P,\lambda)$,
let \(a\in P\), and let
$(q,\tau',b)\in A_a^\sigma$.

Then
$\tau'\in \mathcal T(P\setminus\{a,q\},\lambda|_{P\setminus\{a,q\}})$.
\end{lemma}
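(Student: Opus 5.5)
We unfold the definitions. Since $\sigma$ is realizable, there is an expanded residual configuration $\widehat{\Gamma}=(U,F,\mathcal C)$ at some node $i$ whose parent-facing labeled interface is $(U,\lambda_i^\uparrow|_U)=(P,\lambda)$ and which satisfies $\tau(\widehat{\Gamma})=\sigma$. By the construction of $\tau(\widehat{\Gamma})$ in Step~3 of the proof of Lemma~\ref{lem:joint}, we have $A_a^\sigma=A_a^{\widehat{\Gamma}}$. By Definition~\ref{def:attainable-exit-labels}, the label $(q,\tau',b)$ therefore equals $\ell_S(q)$ for some expanded local state $S=(z,U',F',\mathcal C',\pi)$ that is reachable from $S_a^0$ by a legal play and admits an exit through $q$.

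By Definition~\ref{def:exit-labels}, $\tau'=\tau(\widehat{\Gamma}_S^q)$, where $\widehat{\Gamma}_S^q=(U'\setminus\{q\},F',\mathcal C')$. By the Remark following the definition of expanded local states, $U'=U\setminus\{a\}$. Hence the parent-facing labeled interface of $\widehat{\Gamma}_S^q$ is
\[
\bigl(U\setminus\{a,q\},\lambda_i^\uparrow|_{U\setminus\{a,q\}}\bigr)=\bigl(P\setminus\{a,q\},\lambda|_{P\setminus\{a,q\}}\bigr).
\]

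The main point is to check that $\widehat{\Gamma}_S^q$ is a genuine expanded residual configuration at node $i$. This requires $U'\setminus\{q\}\subseteq P_i$ and $F'\subseteq E(G[X_i])$, which hold because $U'\subseteq U\subseteq P_i$ and $F'\subseteq F$. It also requires every member of $\mathcal C'$ to be a residual child subinstance, meaning a residual game position arising inside a child subtree of $i$, together with its already defined type.

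We verify the last condition by induction along the reachable play. Internal moves leave $\mathcal C'$ unchanged. A returning child excursion replaces $H$ by the $H'$ produced by a legal excursion inside $H$, and $H'$ is again a residual position in the same child subtree. Its type $\tau(H')$ is defined because, by bottom-up processing, types of configurations at the children are already available.

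Hence $\widehat{\Gamma}_S^q$ witnesses that $\tau'$ is realizable on $(P\setminus\{a,q\},\lambda|_{P\setminus\{a,q\}})$, so $\tau'\in\mathcal T(P\setminus\{a,q\},\lambda|_{P\setminus\{a,q\}})$. Well-definedness of $\tau(\widehat{\Gamma}_S^q)$ is exactly Lemma~\ref{lem:joint}(1), applied to this configuration of smaller measure.
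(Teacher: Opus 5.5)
Your proof is correct and follows the paper's argument. You take a configuration $\widehat{\Gamma}$ witnessing the realizability of $\sigma$, realize the exit label $(q,\tau',b)$ by a legal play that exits through $q$, and observe that the resulting configuration $\widehat{\Gamma}_S^q$ has interface $(P\setminus\{a,q\},\lambda|_{P\setminus\{a,q\}})$ and type $\tau'$. Your additional check that $\widehat{\Gamma}_S^q$ is a genuine expanded residual configuration (in particular, that $\mathcal C'$ still consists of residual child subinstances) makes explicit a point the paper leaves implicit.
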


\begin{proof}
Since \(\sigma\) is realizable, there exists an expanded residual configuration
$\widehat{\Gamma}$
with parent-facing interface \((P,\lambda)\) such that
$\tau(\widehat{\Gamma})=\sigma$.

Because $(q,\tau',b)\in A_a^\sigma=A_a^{\widehat{\Gamma}}$,
there exists a legal play of the parent-side local game for
\(\widehat{\Gamma}\), entered through \(a\), which exits through \(q\) and
produces a residual configuration of type \(\tau'\). This residual
configuration has parent-facing labeled interface
$(P\setminus\{a,q\},\lambda|_{P\setminus\{a,q\}})$,
so \(\tau'\) is realizable on that interface.
\end{proof}

\subsection{Compression}

\begin{definition}[Multiplicity vector]
Let
$\widehat{\Gamma}=(U,F,\mathcal C)$
be an expanded residual configuration. Its \emph{multiplicity vector} is
$\mathbf m(\widehat{\Gamma})=(m_\sigma)_{\sigma\in\mathcal T_k}$,
where \(m_\sigma\) is the number of residual child subinstances
\(H\in\mathcal C\) such that
$\tau(H)=\sigma$.
\end{definition}

\begin{definition}[Coordinatewise order]
Let \(\mathbf m=(m_\sigma)_{\sigma\in\mathcal T_k}\) and
\(\mathbf m'=(m'_\sigma)_{\sigma\in\mathcal T_k}\) be multiplicity vectors. 

We write
$\mathbf m'\le \mathbf m$
if $m'_\sigma\le m_\sigma$ for every $\sigma\in\mathcal T_k$.

\end{definition}

\begin{definition}[Compressed residual configuration]
Let $\widehat{\Gamma}=(U,F,\mathcal C)$
be an expanded residual configuration. The corresponding \emph{compressed
residual configuration} is
$\Gamma=(U,F,\mathbf m(\widehat{\Gamma}))$.
\end{definition}

\begin{corollary}[Compression determines type]
\label{cor:compression-determines-type}
Let
$\widehat{\Gamma}_1=(U,F,\mathcal C_1),\
\widehat{\Gamma}_2=(U,F,\mathcal C_2)$
be expanded residual configurations such that
$\mathbf m(\widehat{\Gamma}_1)=\mathbf m(\widehat{\Gamma}_2)$.

Then
$\tau(\widehat{\Gamma}_1)=\tau(\widehat{\Gamma}_2)$.
\end{corollary}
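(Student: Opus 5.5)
The plan is to turn equality of multiplicity vectors into a finite chain of single replacements and then apply Corollary~\ref{cor:replacement-preserves-type} once per link. The two configurations share $U$ and $F$. Since $\mathbf m(\widehat{\Gamma}_1)=\mathbf m(\widehat{\Gamma}_2)$, for every $\sigma\in\mathcal T_k$ the multisets $\mathcal C_1$ and $\mathcal C_2$ contain the same number $m_\sigma$ of residual child subinstances of type $\sigma$. In particular $|\mathcal C_1|=|\mathcal C_2|=:N$, and we can fix a type-preserving bijection $\beta:\mathcal C_1\to\mathcal C_2$ with $\tau(\beta(H))=\tau(H)$ for all $H$. To get it, match the members of type $\sigma$ in $\mathcal C_1$ with those in $\mathcal C_2$ arbitrarily, one type at a time. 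We treat multisets as indexed families, so repeated copies cause no ambiguity.

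Next, list $\mathcal C_1=\{H_1,\dots,H_N\}$. For $0\le j\le N$, define intermediate configurations
\[
\widehat{\Gamma}^{(j)}:=\bigl(U,F,\{\beta(H_1),\dots,\beta(H_j)\}\uplus\{H_{j+1},\dots,H_N\}\bigr).
\]
Then $\widehat{\Gamma}^{(0)}=\widehat{\Gamma}_1$ and $\widehat{\Gamma}^{(N)}=\widehat{\Gamma}_2$. Consecutive configurations $\widehat{\Gamma}^{(j-1)}$ and $\widehat{\Gamma}^{(j)}$ differ exactly by replacing $H_j$ with $\beta(H_j)$, and these two have the same type. Each $\widehat{\Gamma}^{(j)}$ is a legitimate expanded residual configuration at node $i$, because its children are residual child subinstances of $i$ with already-defined types. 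Corollary~\ref{cor:replacement-preserves-type} therefore gives $\tau(\widehat{\Gamma}^{(j-1)})=\tau(\widehat{\Gamma}^{(j)})$ for every $j$. Chaining these $N$ equalities yields $\tau(\widehat{\Gamma}_1)=\tau(\widehat{\Gamma}_2)$.

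The main point needing care is whether the intermediate mixed configurations are admissible inputs to Lemma~\ref{lem:joint}. That lemma was stated for arbitrary expanded residual configurations, whose child multiset is any multiset of residual child subinstances. Nothing requires the multiset to arise from a single actual play history, so mixing children from $\mathcal C_1$ and $\mathcal C_2$ is allowed. I also check that the parent-facing interface $(U,\lambda_i^\uparrow|_U)$ is identical along the whole chain, since it depends only on $U$. Hence all the types compared lie in the same set $\widetilde{\mathcal T}(U,\lambda_i^\uparrow|_U)$, and the equalities are meaningful. If one insisted that configurations be reachable, I would instead note that the replacement lemma's proof never uses reachability of $\widehat{\Gamma}$ itself, only its measure, and that all intermediate configurations share the same measure because ranks agree for equal types.
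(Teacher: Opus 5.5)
Your proof is correct and follows the same route as the paper. Equal multiplicity vectors let you pass from $\mathcal C_1$ to $\mathcal C_2$ by single same-type replacements, each of which preserves the type by Corollary~\ref{cor:replacement-preserves-type}. Your explicit type-preserving bijection and your check that the mixed intermediate configurations are admissible make precise what the paper's short argument leaves implicit.
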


\begin{proof}
Since the multiplicity vectors are equal, the multisets \(\mathcal C_1\) and
\(\mathcal C_2\) contain the same number of residual child subinstances of each
realizable type \(\sigma\in\mathcal T_k\). Hence \(\mathcal C_1\) can be
transformed into \(\mathcal C_2\) by a sequence of steps, each of which replaces
one residual child subinstance by another residual child subinstance of the same
type.

Each such replacement preserves the ambient type of the expanded residual
configuration by Corollary~\ref{cor:replacement-preserves-type}. Therefore
$\tau(\widehat{\Gamma}_1)=\tau(\widehat{\Gamma}_2)$.
\end{proof}

Accordingly, for a compressed residual configuration
$\Gamma=(U,F,\mathbf m)$,
we may write
$\tau(\Gamma)$
for the uniquely determined type of any expanded residual configuration
realizing \(\Gamma\).

For every entry port \(a\in U\), we then write
$A_a^\Gamma:=A_a^{\tau(\Gamma)}$
and
$\Phi_a^\Gamma:=\Phi_a^{\tau(\Gamma)}$.

\begin{remark*}
The point of compression is that from the perspective of the parent-side local
game, the individual identities of the residual child subinstances do not
matter. Only their realizable types, and hence only their multiplicities, are
relevant.
\end{remark*}

\subsection{Compressed local states and transitions}

We now replace expanded local states by compressed ones, in which the multiset
of residual child subinstances is represented only by the multiplicities of
their realizable types.

\begin{definition}[Realization of a compressed residual configuration]
Let
$\Gamma=(U,F,\mathbf m)$
be a compressed residual configuration. An expanded residual configuration
$\widehat{\Gamma}=(U,F,\mathcal C)$
is said to \emph{realize} \(\Gamma\) if
$\mathbf m(\widehat{\Gamma})=\mathbf m$.
\end{definition}

\begin{definition}[Realization of a compressed local state]
Let
$S=(z,U',F',\mathbf m',\pi)$
be a compressed local state. An expanded local state
$\widehat{S}=(z,U',F',\mathcal C',\pi)$
is said to \emph{realize} \(S\) if
$\mathbf m(\mathcal C')=\mathbf m'$.
\end{definition}

\begin{definition}[Compressed local state]
Let $\Gamma=(U,F,\mathbf m)$
be a compressed residual configuration at node \(i\), and let \(a\in U\).
A \emph{compressed local state} relative to \(a\) is a tuple $S=(z,U',F',\mathbf m',\pi)$,
where
\begin{itemize}
    \item \(z\in X_i\) is the current token position,
    \item \(U'\subseteq U\setminus\{a\}\) is the set of currently unused parent
    ports,
    \item \(F'\subseteq F\) is the set of currently unused internal edges of
    \(X_i\),
    \item \(\mathbf m'\le \mathbf m\) is a multiplicity vector on
    \(\mathcal T_k\),
    \item \(\pi\in\{\mathsf{same},\mathsf{opp}\}\) records the player to move
    relative to the player who originally entered through \(a\).
\end{itemize}
\end{definition}

\begin{figure}[H]
    \centering
    \includegraphics[width=\textwidth]{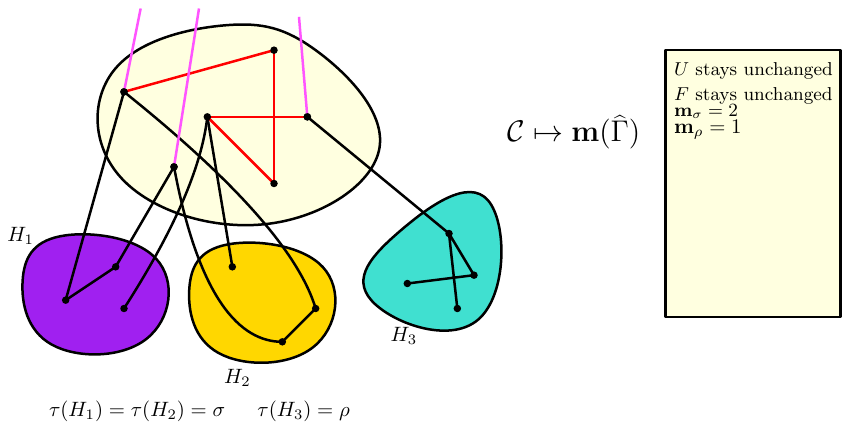}
    \caption{Compression of an expanded residual configuration. 
    The expanded residual configuration $\widehat \Gamma = (U,F, 
    \mathcal C)$ records unused parent ports $U$, unused 
    internal edges $F$ and multiset of residual child subinstances 
    $\mathcal C$. Compression leaves $U$ and $F$ unchanged 
    and sends $\mathcal C$ to a multiplicity vector 
    $\textbf{m}(\widehat \Gamma)$, where $\textbf{m}_\tau$ 
    counts the number of children of realizable type 
    $\tau$. In this example the two children of 
    type $\sigma$ are internally identical, but this need not be the case; 
    only their parent-facing type matters.
    Children are counted together exactly when they are indistinguishable to the parent-side game.}
    \label{fig:compression}
\end{figure}

\begin{definition}[Compressed entry state]
Let
$\Gamma=(U,F,\mathbf m)$
be a compressed residual configuration and let \(a\in U\). Let \(z_a\in X_i\)
be the unique vertex satisfying
$\iota_i(z_a)=\lambda_i^\downarrow(a)$.

The corresponding \emph{compressed entry state} is
$S_a^0(\Gamma):=(z_a,\; U\setminus\{a\},\; F,\; \mathbf m,\; \mathsf{opp})$.
\end{definition}

For \(\sigma\in\mathcal T_k\), let \(\mathbf e_\sigma\) denote the unit vector
with a \(1\) in the coordinate indexed by \(\sigma\) and \(0\) elsewhere.

\begin{definition}[Compressed transitions]
Let
$S=(z,U',F',\mathbf m',\pi)$
be a compressed local state.

\begin{enumerate}
    \item \emph{Compressed internal move.}
    If \(\{z,z'\}\in F'\), then traversing this edge yields the successor state
    $(z',U',F'\setminus\{\{z,z'\}\},\mathbf m',\overline{\pi})$.

    \item \emph{Compressed exit move.}
    If \(q\in U'\) satisfies
    $\lambda_i^\downarrow(q)=\iota_i(z)$,
    then exiting through \(q\) produces the compressed residual configuration
    $\Gamma_q:=(U'\setminus\{q\},F',\mathbf m')$
    and hence the exit label
    $(q,\tau(\Gamma_q),\overline{\pi})$.

    \item \emph{Compressed child excursion.}
    Let \(\sigma\in\mathcal T_k\) satisfy \(m'_\sigma>0\), and let \(b\in P_\sigma\)
    satisfy $\lambda_\sigma(b)=\iota_i(z)$.
    
    For each exit label
    $(q,\tau',c)\in A_b^\sigma$,
    let \(z_q\in X_i\) be the unique vertex satisfying
    $\iota_i(z_q)=\lambda_\sigma(q)$.
    
    The corresponding compressed successor state is
    \[
    (z_q,\; U',\; F',\; \mathbf m'-\mathbf e_\sigma+\mathbf e_{\tau'},\; \pi\circ c).
    \]
\end{enumerate}
\end{definition}

\begin{definition}[Measure of a compressed local state]
The \emph{measure} of a compressed local state
$S=(z,U',F',\mathbf m',\pi)$
is
\[
\mu(S)
:=
|U'|+|F'|+\sum_{\sigma\in\mathcal T_k}\operatorname{rk}(\sigma)\,m'_\sigma.
\]
\end{definition}

\begin{lemma}[Soundness and completeness of compressed transitions]
\label{lem:compressed-transitions}
Let
$\Gamma=(U,F,\mathbf m)$
be a compressed residual configuration, let \(a\in U\), and let
$S=(z,U',F',\mathbf m',\pi)$
be a compressed local state relative to \(a\). Let
$\widehat{S}=(z,U',F',\mathcal C',\pi)$
be any expanded realization of \(S\).

Then the following hold.

\begin{enumerate}
    \item \textbf{Internal moves.}
    A compressed internal move from \(S\) along an edge \(\{z,z'\}\in F'\)
    exists if and only if the corresponding expanded internal move exists from
    \(\widehat{S}\). The expanded successor state realizes the compressed
    successor state.

    \item \textbf{Exit moves.}
    A compressed exit move through a port \(q\in U'\) is legal if and only if the
    corresponding expanded exit move is legal from \(\widehat{S}\). In that case
    both produce the same exit label, namely
    \[
    (q,\tau(\Gamma_q),\overline{\pi}),
    \qquad
    \Gamma_q:=(U'\setminus\{q\},F',\mathbf m').
    \]

    \item \textbf{Child excursions.}
    Let \(\sigma\in\mathcal T_k\) satisfy \(m'_\sigma>0\), and let
    \(b\in P_\sigma\) satisfy
    $\lambda_\sigma(b)=\iota_i(z)$.
    Then:
    \begin{itemize}
        \item for every exit label
        $(q,\tau',c)\in A_b^\sigma$,
        there exists a legal expanded child excursion from \(\widehat{S}\) that
        enters some residual child subinstance \(H\in\mathcal C'\) with
        \(\tau(H)=\sigma\), returns to the parent bag with that label, and
        yields an expanded successor state realizing the compressed successor
        state
        \[
        (z_q,\; U',\; F',\; \mathbf m'-\mathbf e_\sigma+\mathbf e_{\tau'},\; \pi\circ c);
        \]

        \item conversely, every legal expanded child excursion from
        \(\widehat{S}\) that returns to the parent bag induces one of these
        compressed successor states.
    \end{itemize}
\end{enumerate}
\end{lemma}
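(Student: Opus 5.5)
We verify the three kinds of move one at a time. Throughout, $\widehat S=(z,U',F',\mathcal C',\pi)$ is a fixed expanded realization of $S$, so $\mathbf m(\mathcal C')=\mathbf m'$. The tools are the definitions of expanded and compressed transitions, Corollary~\ref{cor:compression-determines-type}, and the fact that an ambient type $\tau(H)$ records exactly the exit labels attainable in $H$.

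\emph{Internal moves.} Internal moves in both models depend only on $z$ and $F'$, which $S$ and $\widehat S$ share. So the edge $\{z,z'\}\in F'$ is available in one model exactly when it is available in the other. The two successors agree in $z',U',F'\setminus\{\{z,z'\}\},\overline\pi$. Their child parts are $\mathcal C'$ and $\mathbf m'$, and these are unchanged by the move, so the expanded successor realizes the compressed one.

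\emph{Exit moves.} Legality of an exit depends only on $q\in U'$ and the condition $\lambda_i^\downarrow(q)=\iota_i(z)$, so it is identical in both models. By Definition~\ref{def:exit-labels}, the expanded label is $(q,\tau(\widehat\Gamma^q_{\widehat S}),\overline\pi)$ with $\widehat\Gamma^q_{\widehat S}=(U'\setminus\{q\},F',\mathcal C')$. This configuration realizes $\Gamma_q=(U'\setminus\{q\},F',\mathbf m')$. By Corollary~\ref{cor:compression-determines-type}, $\tau(\Gamma_q)$ is well defined and equals $\tau(\widehat\Gamma^q_{\widehat S})$, so the two labels coincide.

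\emph{Child excursions.} This is the main step. Take $\sigma$ with $m'_\sigma>0$. Then $\mathcal C'$ contains some $H$ with $\tau(H)=\sigma$; any such $H$ can be used, since the choice does not matter. The interface of $H$ is part of its type, so the entry port $b$ with $\lambda_\sigma(b)=\iota_i(z)$ is a legal entry into $H$.

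Now fix a label $(q,\tau',c)\in A_b^\sigma=A_b^{\tau(H)}$. Write $\widehat\Gamma_H$ for the configuration of the child node that $H$ represents. By the definition of attainable exit labels (Definition~\ref{def:attainable-exit-labels}), applied to $\widehat\Gamma_H$, some legal play inside $H$ entered through $b$ leaves through $q$. That play produces a residual $H'$ with $\tau(H')=\tau'$, and the return has parity $c$ relative to the entrant. Composing the parity with $\pi$ gives $\pi\circ c$. The token lands at $z_q$, the parent-side endpoint of port $q$. The resulting expanded successor $S[q,H',c]$ has child multiset $(\mathcal C'\setminus\{H\})\uplus\{H'\}$. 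Its multiplicity vector is $\mathbf m'-\mathbf e_\sigma+\mathbf e_{\tau'}$, so it realizes the stated compressed successor.

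For the converse, take any legal expanded excursion into some $H\in\mathcal C'$ through $b$ that returns through $q$ and leaves $H'$. Its label $(q,\tau(H'),c)$ is attainable by definition, so it lies in $A_b^{\tau(H)}$. With $\sigma:=\tau(H)$ we have $m'_\sigma>0$, and the successor realizes the corresponding compressed state. Lemma~\ref{lem:realizable-successor} ensures $\tau'\in\mathcal T_k$, so $\mathbf e_{\tau'}$ is a valid coordinate.

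\emph{Expected obstacle.} The one delicate point is that $A_b^{\tau(H)}$ must consist exactly of labels produced by actual legal plays in $H$, with matching parity and residual types. This is where I would be most careful: it requires unfolding that the type of $H$ was computed, one level down, as $\tau(\widehat\Gamma_H)$ by Lemma~\ref{lem:joint}. It also requires that the parity convention $\overline\pi$ used for exit labels composes correctly with $\pi\circ c$ in the parent.
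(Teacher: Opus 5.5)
Your proposal is correct and follows essentially the same route as the paper. Part~(1) comes directly from the definitions. Part~(2) observes that $(U'\setminus\{q\},F',\mathcal C')$ realizes $\Gamma_q$ and invokes Corollary~\ref{cor:compression-determines-type}. Part~(3) picks some $H\in\mathcal C'$ with $\tau(H)=\sigma$ and uses $A_b^\sigma=A_b^{\tau(H)}$ in both directions. Your extra appeal to Lemma~\ref{lem:realizable-successor}, to ensure $\mathbf e_{\tau'}$ is a legitimate coordinate, is a small detail the paper leaves implicit.
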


\begin{proof}
Part~(1) is immediate from the definitions.

For part~(2), the legality of exiting through \(q\) depends only on the current
token position \(z\) and the set \(U'\), so it is the same in the expanded and
compressed settings.

Let
$\widehat{\Gamma}_q:=(U'\setminus\{q\},F',\mathcal C')$
be the expanded residual configuration obtained after exiting from
\(\widehat{S}\). Since
$\mathbf m(\mathcal C')=\mathbf m'$,
the configuration \(\widehat{\Gamma}_q\) realizes
$\Gamma_q=(U'\setminus\{q\},F',\mathbf m')$.
Hence, by Corollary~\ref{cor:compression-determines-type},
$\tau(\widehat{\Gamma}_q)=\tau(\Gamma_q)$.

Therefore the expanded and compressed exit labels coincide.

For part~(3), since \(m'_\sigma>0\), the multiset \(\mathcal C'\) contains at
least one residual child subinstance \(H\) with
$\tau(H)=\sigma$.

Because
$(q,\tau',c)\in A_b^\sigma=A_b^{\tau(H)}$,
there exists a legal play inside \(H\), entered through \(b\), that returns to
the parent bag with that label and yields a resulting residual child
subinstance \(H'\) with
$\tau(H')=\tau'$.

Replacing \(H\) by \(H'\) changes the multiplicity vector from \(\mathbf m'\) to
$\mathbf m'-\mathbf e_\sigma+\mathbf e_{\tau'}$.

The resulting expanded successor state therefore realizes the compressed
successor state
\[
(z_q,\; U',\; F',\; \mathbf m'-\mathbf e_\sigma+\mathbf e_{\tau'},\; \pi\circ c),
\]
where \(z_q\) is determined by \(\lambda_\sigma(q)\).

Conversely, every legal returning expanded child excursion from \(\widehat{S}\)
enters some residual child subinstance \(H\in\mathcal C'\) through some entry
port \(b\in P_H\). Writing
$\sigma:=\tau(H)$,
we have \(m'_\sigma>0\), and the resulting return label has the form
$(q,\tau',c)\in A_b^\sigma$.

The resulting expanded successor state is obtained by replacing one child of
type \(\sigma\) by one child of type \(\tau'\), so it realizes exactly the
compressed successor state prescribed above.
\end{proof}

\begin{corollary}[Expanded values depend only on the compressed state]
\label{cor:compressed-values}
Let
$\Gamma=(U,F,\mathbf m)$
be a compressed residual configuration, let \(a\in U\), and let
$S=(z,U',F',\mathbf m',\pi)$
be a compressed local state relative to \(a\).

Suppose
$\widehat{S}_1=(z,U',F',\mathcal C_1',\pi),\
\widehat{S}_2=(z,U',F',\mathcal C_2',\pi)$
are two expanded realizations of \(S\), arising from expanded realizations
$\widehat{\Gamma}_1,\widehat{\Gamma}_2$
of \(\Gamma\). Let
$\eta:\Omega(U,\lambda_i^\uparrow|_U,a)\to\{0,1\}$
be an ambient valuation, and write
$\eta_t:=\eta|_{A_a^{\widehat{\Gamma}_t}},
\ (t=1,2)$.

Then
\[
\operatorname{Win}_{\widehat{\Gamma}_1,a,\eta_1}(\widehat{S}_1)
=
\operatorname{Win}_{\widehat{\Gamma}_2,a,\eta_2}(\widehat{S}_2).
\]
\end{corollary}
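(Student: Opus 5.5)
The plan is to reduce the claim to the replacement statement of Lemma~\ref{lem:joint}(2), applied repeatedly. First, since $\widehat S_1$ and $\widehat S_2$ both realize $S$, we have $\mathbf m(\mathcal C_1')=\mathbf m(\mathcal C_2')=\mathbf m'$. The same holds for the configurations: $\widehat\Gamma_1$ and $\widehat\Gamma_2$ realize $\Gamma$, so $\mathbf m(\widehat\Gamma_1)=\mathbf m(\widehat\Gamma_2)=\mathbf m$. By Corollary~\ref{cor:compression-determines-type}, $\tau(\widehat\Gamma_1)=\tau(\widehat\Gamma_2)$. In particular $A_a^{\widehat\Gamma_1}=A_a^{\widehat\Gamma_2}$, so $\eta_1=\eta_2$ as functions on a common domain. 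It therefore suffices to compare values under this single valuation $\nu:=\eta_1=\eta_2$.

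Next, I would build a chain of expanded configurations. Starting from $\widehat\Gamma_1$, replace one residual child subinstance at a time by one of the same type, ending at $\widehat\Gamma_2$; this is possible because the multiplicity vectors agree. The subtlety is that the states $\widehat S_t$ contain \emph{current} residual children $\mathcal C_t'$, not the original ones in $\mathcal C_t$. So rather than track correspondence through an entire play, I would argue directly on the local game. Define $\operatorname{Win}$ on expanded local states by the reverse-induction recursion of Step~2 of Lemma~\ref{lem:joint}. Then prove, by induction on the local-state measure $\mu$, that any two expanded local states with equal $(z,U',F',\pi)$ and equal multiplicity vector of their child multisets have equal value under $\nu$. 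These states may be reachable in configurations differing by same-type replacements, which is harmless since $\nu$ and the exit-label sets agree.

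The inductive step would use Lemma~\ref{lem:compressed-transitions}, which says that from two such states the available moves and their outcomes match. Internal moves lead to successors with equal compressed data. Exit moves yield the same label $(q,\tau(\Gamma_q),\overline\pi)$ by part~(2), hence the same $\nu$-value. For child excursions, part~(3) shows that the set of attainable successor compressed states depends only on $(\sigma,b)$. The induced valuations $\nu_{S,H,b}$ are then evaluated on successors of smaller measure with equal compressed data, so they agree by induction. Applying the same $\Phi_b^\sigma$ with the same parity conversion gives equal move values. Taking max or min according to $\pi$ then gives equal state values.

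The main obstacle is the bookkeeping mismatch just described. $\operatorname{Win}_{\widehat\Gamma,a,\nu}$ is formally indexed by the configuration, and Lemma~\ref{lem:joint}(2) only treats a single replacement between \emph{corresponding} states. Here $\mathcal C_1'$ and $\mathcal C_2'$ may differ in many children that have already been partially played. I would first try to resolve this as above: observe that the recursion defining $\operatorname{Win}$ depends on the configuration only through $\nu$ and the exit-label set, so the configuration index can be dropped. Then the measure induction proceeds purely on compressed data. If that observation is not fully rigorous, the fallback is to chain single replacements through Lemma~\ref{lem:joint}(2). This requires extending "corresponding" to states whose distinguished children have been replaced by same-type residuals, which the proof of Lemma~\ref{lem:joint} implicitly supports.
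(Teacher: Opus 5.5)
Your proof is correct, but your primary route differs from the paper's.

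\textbf{The paper's route.} The paper's proof is essentially your fallback. It obtains $\tau(\widehat\Gamma_1)=\tau(\widehat\Gamma_2)$ from Corollary~\ref{cor:compression-determines-type}. It then notes that $\mathcal C_1'$ and $\mathcal C_2'$ have equal multiplicities, so one can be turned into the other by same-type swaps, and it invokes Lemma~\ref{lem:joint}(2) once per swap.

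\textbf{Your route.} You instead redo the replacement argument in one pass. You run a single induction on the local-state measure, showing that reachable expanded states with the same compressed image have equal value. Moves are matched via Lemma~\ref{lem:compressed-transitions}. The child-excursion valuations $\nu_{S,H,b}$ are shown to depend only on $(\sigma,b)$ and on the compressed successor data, which have smaller measure.

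\textbf{Comparison.} The paper's version is shorter because it reuses the lemma. Yours is more rigorous. Lemma~\ref{lem:joint}(2) is stated only for two configurations differing in a single child, and for states that ``correspond'' in that narrow sense. The intermediate multisets in the paper's chain consist of residuals of partially played children. They need not arise as corresponding reachable states of any such pair of configurations. This is exactly the bookkeeping issue you flag, and the paper's proof does not address it explicitly.

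\textbf{A small simplification.} Your key observation is that the Step~2 recursion depends on the configuration only through the valuation applied at exit labels. That is what makes the direct induction legitimate, and it yields a slightly stronger, configuration-independent statement. It also means your preliminary step is more than you need. Since $\eta_1$ and $\eta_2$ are restrictions of one ambient $\eta$, and Lemma~\ref{lem:compressed-transitions}(2) shows both sides query $\eta$ at the same label $(q,\tau(\Gamma_q),\overline{\pi})$, the equality $A_a^{\widehat\Gamma_1}=A_a^{\widehat\Gamma_2}$ is not actually required.
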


\begin{proof}
Since \(\widehat{\Gamma}_1\) and \(\widehat{\Gamma}_2\) realize the same
compressed residual configuration \(\Gamma\), Corollary~\ref{cor:compression-determines-type}
gives
$\tau(\widehat{\Gamma}_1)=\tau(\widehat{\Gamma}_2)$.
Likewise, since \(\widehat{S}_1\) and \(\widehat{S}_2\) realize the same
compressed local state \(S\), the multisets \(\mathcal C_1'\) and
\(\mathcal C_2'\) contain the same number of residual child subinstances of
each type. Hence one can transform \(\mathcal C_1'\) into \(\mathcal C_2'\) by a
sequence of replacements of one residual child subinstance by another of the
same type.

Repeated application of Lemma~\ref{lem:joint}(2) along this sequence shows that
the value of the expanded local state is preserved under each replacement.
Therefore the two realizations \(\widehat{S}_1\) and \(\widehat{S}_2\) have the
same value under the corresponding restrictions of the ambient valuation.
\end{proof}

\subsection{The root bag}

Let \(r\) be the root of the given rooted tree partition. Since the start vertex
\(s\) belongs to some bag, we may choose the root so that
$s\in X_r$.

The root bag has no parent cut and hence no parent interface. Accordingly, the
state of the game at the root is determined by:
\begin{itemize}
    \item the current token position inside \(X_r\),
    \item the set of unused edges inside \(X_r\),
    \item the multiplicities of the realizable types of the current residual
    child subinstances of the children of \(r\),
    \item the parity of the player to move relative to the initial player.
\end{itemize}

For each child \(j\) of \(r\), the subtree rooted at \(j\) determines an
expanded residual configuration and hence a realizable type
$\sigma_j\in\mathcal T_k$
with respect to the labeling \(\iota_r\) of the root bag.

Let
$\mathbf m_r=(m_\sigma)_{\sigma\in\mathcal T_k}$
be the multiplicity vector recording how many children of the root currently
have each realizable type \(\sigma\). Let
$F_r:=E(G[X_r])$.

\begin{definition}[Compressed root state]
A \emph{compressed root state} is a tuple
$R=(z,F,\mathbf m,\pi)$,
where
\begin{itemize}
    \item \(z\in X_r\) is the current token position,
    \item \(F\subseteq F_r\) is the set of currently unused edges inside \(X_r\),
    \item \(\mathbf m\le \mathbf m_r\) is a multiplicity vector on
    \(\mathcal T_k\),
    \item \(\pi\in\{\mathsf{same},\mathsf{opp}\}\) records whether the player to
    move is the same as or opposite to the initial player.
\end{itemize}
\end{definition}

The initial compressed root state is
$R_0:=(s,F_r,\mathbf m_r,\mathsf{same})$.

For \(\sigma\in\mathcal T_k\), let \(\mathbf e_\sigma\) denote the unit vector
with a \(1\) in the coordinate indexed by \(\sigma\) and \(0\) elsewhere.

A legal move from a compressed root state
$R=(z,F,\mathbf m,\pi)$
is one of the following.

\begin{enumerate}
    \item \emph{Internal move.}
    If \(\{z,z'\}\in F\), then traversing it yields
    $(z',F\setminus\{\{z,z'\}\},\mathbf m,\overline{\pi})$.

    \item \emph{Child excursion.}
    Let \(\sigma\in\mathcal T_k\) satisfy \(m_\sigma>0\), and let
    \(a\in P_\sigma\) satisfy
    $\lambda_\sigma(a)=\iota_r(z)$.

    The move of entering such a child is evaluated by the Boolean function
    $\Phi_a^\sigma$.
    More precisely, for each exit label
    $(q,\tau',b)\in A_a^\sigma$,
    let \(z_q\in X_r\) be the unique vertex satisfying
    $\iota_r(z_q)=\lambda_\sigma(q)$.
    
    Let
    $R'=(z_q,\; F,\; \mathbf m-\mathbf e_\sigma+\mathbf e_{\tau'},\; \pi\circ b)$
    be the corresponding returning successor root state.

    Since
    $\Phi_a^\sigma$
    is defined from the viewpoint of the player who enters the child, the valuation
    supplied to \(\Phi_a^\sigma\) assigns to \((q,\tau',b)\) the value \(1\) exactly
    when that child entrant wins from the successor root state \(R'\). Equivalently,
    this value is the root-state value of \(R'\) when \(\pi=\mathsf{same}\), and its
    complement when \(\pi=\mathsf{opp}\).

    The value of the child excursion at the root is then obtained by applying
    $\Phi_a^\sigma$
    to this valuation. Non-returning behavior is already accounted for inside
    \(\Phi_a^\sigma\).
\end{enumerate}
    
As in the parent-side local game, the directed graph of compressed root states
with arcs given by internal moves and returning child excursions is acyclic,
because every such move strictly decreases the measure
\[
|F|+\sum_{\sigma\in\mathcal T_k}\operatorname{rk}(\sigma)m_\sigma.
\]
Hence the winner from the initial root state \(R_0\) is determined by reverse
dynamic programming on this acyclic graph.

\begin{definition}[Measure of a compressed residual configuration]
Let
$\Gamma=(U,F,\mathbf m)$
be a compressed residual configuration. Its \emph{measure} is
\[
\mu(\Gamma)
:=
|U|+|F|+\sum_{\sigma\in\mathcal T_k}\operatorname{rk}(\sigma)\,m_\sigma.
\]
\end{definition}

\begin{remark*}
For counting purposes, we may view every multiplicity vector as indexed by the
fixed finite ambient universe \(\widetilde{\mathcal T}_k\), setting
\(m_\sigma=0\) for every non-realizable ambient type
\(\sigma\in \widetilde{\mathcal T}_k\setminus \mathcal T_k\). Thus the number
of coordinates of a multiplicity vector is bounded by
\(|\widetilde{\mathcal T}_k|\), which depends only on \(k\).
\end{remark*}

\begin{theorem}
\label{thm:xp-tpw}
\textsc{Undirected Edge Geography} on simple graphs, given together with a
rooted tree partition of width \(k\), is solvable in XP time parameterized by
\(k\).
\end{theorem}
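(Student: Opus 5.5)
The plan is to assemble the pieces developed above into an explicit bottom-up algorithm, establish its correctness via the replacement machinery, and then bound its running time. First, root the given tree partition at a node $r$ with $s\in X_r$ and process nodes in post-order. When node $i$ is processed, every child $j$ has already been assigned a realizable type $\sigma_j=\tau(\Gamma_j)$, where $\Gamma_j=(P_j,E(G[X_j]),\mathbf m_j)$ is the full configuration of $j$. We also keep a table, for all compressed configurations $\Gamma$ at $j$, of the types $\tau(\Gamma)$. Residual children can only ever take types appearing in these tables, because by Lemma~\ref{lem:realizable-successor} successors of realizable types are realizable and were produced at the child's own computation. At node $i$ we therefore enumerate all compressed residual configurations $\Gamma=(U,F,\mathbf m)$ with $U\subseteq P_i$, $F\subseteq E(G[X_i])$ and $\mathbf m\le\mathbf m_i$, the latter being reachable-bounded since multiplicities only shift mass between types. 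We process them in increasing measure $\mu(\Gamma)$. For each $\Gamma$ and $a\in U$ we run the compressed local game of the previous subsection, by reverse DP on compressed local states. To do this, we compute the set $A_a^\Gamma$ by forward reachability, and we compute $\Phi_a^\Gamma$ by evaluating the DP once per valuation $\nu\in\{0,1\}^{A_a^\Gamma}$. Exit labels need $\tau(\Gamma_q)$, which has smaller measure and is already tabulated; child excursions need $A_b^\sigma,\Phi_b^\sigma$ from the child tables.

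Correctness has two parts. First, Lemma~\ref{lem:compressed-transitions} shows that compressed moves are exactly the images of expanded moves. Second, Corollary~\ref{cor:compressed-values} shows that expanded values depend only on the compressed state. Together, by reverse induction over the acyclic graph (Lemma~\ref{lem:local-acyclic}), the compressed DP value equals $\operatorname{Win}_{\widehat\Gamma,a,\nu}(S_a^0)$ for any realization. Hence the computed pair $(A_a^\Gamma,\Phi_a^\Gamma)$ equals the type given by Lemma~\ref{lem:joint}. The same argument applies at the root game starting from $R_0$. There we also need that the root game's value equals the true game value. I would prove this by induction on the number of remaining edges of $G$, observing that play in $G$ decomposes into internal root moves and maximal excursions into child subtrees. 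A maximal excursion either returns through a port, in which case the winner is determined by the parent continuation, or ends inside the subtree, which is precisely what $\Phi$ captures, since $\Phi$ encodes the entrant's win given return valuations. This semantic link, that $\Phi_a^{\tau(H)}$ applied to the true continuation values gives the true value of entering $H$, is the step I expect to be the main obstacle. It requires unfolding the definition of $\operatorname{Win}$ recursively through all levels of the tree, with care about the viewpoint complementation when $\pi=\mathsf{opp}$. I would prove it as a separate lemma by induction on subtree height, stated for arbitrary residual positions.

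For the running time, fix $k$ and let $t_k=|\widetilde{\mathcal T}_k|$, which is finite by Lemma~\ref{lem:finitely-many-types}. Each node has at most $2^{k^2+\binom k2}(n+1)^{t_k}$ compressed configurations. Each configuration has at most $k^2$ entry ports and at most $2^{|\Omega|}$ valuations, where $|\Omega|$ is bounded by a function of $k$. Each local game has at most $2k\,2^{k^2+\binom k2}(n+1)^{t_k}$ states and $f_3(k)$ transitions per state. The transition count uses that there are at most $t_k$ types, at most $k^2$ entry ports $b$, and at most $|A_b^\sigma|\le g(k)$ labels; each transition costs a table lookup of polynomial cost. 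Multiplying over the $O(n)$ nodes, plus the root game of the same size, gives total time $n^{f(k)}$. Finally, the types can be interned canonically as truth tables, so equality tests are cheap. This establishes membership in XP.
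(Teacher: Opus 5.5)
Your proposal is essentially the paper's proof: the same bottom-up pass over compressed configurations in increasing measure, the same per-valuation reverse dynamic program on the acyclic compressed local game justified by Lemma~\ref{lem:compressed-transitions} and Corollary~\ref{cor:compressed-values}, and the same $n^{f(k)}$ count. The semantic link you single out is not proved in the paper, which simply asserts it at the root step, so your separate lemma adds rigor rather than changing the route; that link says $\Phi$ applied to the real continuation values gives the real value of entering a child, and it must be proved jointly with the fact that those continuation values depend on the residual child only through its type.

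One small fix is needed. The set of multiplicity vectors you enumerate must be closed under $\mathbf m'\mapsto\mathbf m'-\mathbf e_\sigma+\mathbf e_{\tau'}$, since a coordinatewise bound $\mathbf m\le\mathbf m_i$ would leave some $\tau(\Gamma_q)$ needed for exit labels untabulated. The paper enumerates all vectors with entries in $\{0,\dots,n\}$, which is also what your $(n+1)^{t_k}$ count presumes.
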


\begin{proof}
Fix \(k\). By Lemma~\ref{lem:finitely-many-types}, the ambient universe
$\widetilde{\mathcal T}_k$
is finite. By the remark above, every multiplicity vector may be viewed as a
vector indexed by \(\widetilde{\mathcal T}_k\), with zeroes on non-realizable
coordinates. Hence the number of coordinates of every multiplicity vector is
bounded by the constant \(|\widetilde{\mathcal T}_k|\), which depends only on
\(k\).

We process the rooted tree partition bottom-up. Fix a non-root node \(i\). We
show how to compute the type \(\tau(\Gamma)\) of every compressed residual
configuration $\Gamma=(U,F,\mathbf m)$
at node \(i\). These configurations are processed in increasing order of the
measure
\[
\mu(\Gamma)
=
|U|+|F|+\sum_{\sigma\in\mathcal T_k}\operatorname{rk}(\sigma)m_\sigma.
\]
This outer ordering is well founded because every term is a nonnegative integer.

\smallskip
\noindent
\textbf{Step 1: Number of compressed residual configurations at a fixed node.}
Since \(i\) is a non-root node, its parent cut has size at most \(k^2\). Hence
the number of possible subsets
$U\subseteq P_i$ is at most $2^{k^2}$.

Because \(G\) is simple and \(|X_i|\le k\), the bag \(X_i\) contains at most
$\binom{k}{2}$
internal edges. Therefore the number of possible subsets
$F\subseteq E(G[X_i])$
is at most
$2^{\binom{k}{2}}=2^{O(k^2)}$.

Finally, the multiplicity vector
$\mathbf m=(m_\sigma)_{\sigma\in\widetilde{\mathcal T}_k}$
has one coordinate for each element of the fixed finite set
\(\widetilde{\mathcal T}_k\), and every coordinate lies between \(0\) and \(n\).
Hence the number of possible multiplicity vectors is at most
$(n+1)^{|\widetilde{\mathcal T}_k|}=n^{f_1(k)}$
for some computable function \(f_1\).

Therefore the total number of compressed residual configurations at node \(i\)
is at most
\[
2^{k^2}\cdot 2^{O(k^2)}\cdot n^{f_1(k)} = n^{f_2(k)}
\]
for some computable function \(f_2\).

\smallskip
\noindent
\textbf{Step 2: Outer computation order.}
We compute \(\tau(\Gamma)\) for all compressed residual configurations
\(\Gamma=(U,F,\mathbf m)\) at node \(i\) in increasing order of \(\mu(\Gamma)\).

This ordering is exactly what makes the construction non-circular: whenever a
compressed local state exits the current subtree through a port \(q\), the
resulting compressed residual configuration
$\Gamma_q:=(U'\setminus\{q\},F',\mathbf m')$
satisfies
$\mu(\Gamma_q)<\mu(\Gamma)$,
and therefore its type \(\tau(\Gamma_q)\) has already been computed earlier in
the same node. Likewise, all child types used at node \(i\) are already known
because the algorithm proceeds bottom-up over the rooted tree partition.

\smallskip
\noindent
\textbf{Step 3: Local evaluation for a fixed pair \((\Gamma,a)\).}
Fix one compressed residual configuration
$\Gamma=(U,F,\mathbf m)$
and one entry port \(a\in U\). Consider compressed local states relative to
\(a\), that is, states of the form
$S=(z,U',F',\mathbf m',\pi)$,
where
\begin{itemize}
    \item \(z\in X_i\),
    \item \(U'\subseteq U\setminus\{a\}\),
    \item \(F'\subseteq F\),
    \item \(\mathbf m'\le \mathbf m\),
    \item \(\pi\in\{\mathsf{same},\mathsf{opp}\}\).
\end{itemize}

The number of such states is at most \(n^{f_3(k)}\) for some computable
function \(f_3\), since:
\begin{itemize}
    \item \(z\) has at most \(k\) possibilities,
    \item \(U'\) has at most \(2^{k^2}\) possibilities,
    \item \(F'\) has at most \(2^{O(k^2)}\) possibilities,
    \item \(\mathbf m'\le \mathbf m\) has at most
    $(n+1)^{|\widetilde{\mathcal T}_k|}=n^{f_4(k)}$
    possibilities,
    \item \(\pi\) has two possibilities.
\end{itemize}

Now consider the directed graph whose vertices are the reachable compressed
local states for \((\Gamma,a)\), and whose arcs are the successor states arising
from compressed internal moves and compressed child excursions.

This graph is acyclic. Indeed:
\begin{itemize}
    \item every compressed internal move removes one edge from \(F'\), so it
    decreases the local-state measure by \(1\);
    \item every compressed child excursion replaces one child type
    \(\sigma\) by a successor type \(\tau'\) appearing in some label
    \((q,\tau',c)\in A_b^\sigma\), and therefore
    $\operatorname{rk}(\tau')=\operatorname{rk}(\sigma)-2$,
    so the local-state measure decreases by \(2\).
\end{itemize}
Hence no directed cycle can occur.

Let
$\widehat{A}_a^\Gamma$
denote the set of exit labels arising from compressed exit moves in this
reachable acyclic graph. Every such label lies in the finite universe
$\Omega(U,\lambda_i^\uparrow|_U,a)$,
whose size depends only on \(k\). Therefore the number of possible valuations
$\nu:\widehat{A}_a^\Gamma\to\{0,1\}$
is bounded by a computable function of \(k\) alone.

For each such valuation \(\nu\), we evaluate all reachable compressed local
states in reverse topological order. The value of a compressed internal move is
the value of its successor state. The value of a compressed exit move through a
port \(q\) is \(\nu(q,\tau(\Gamma_q),\overline{\pi})\); this is well defined
because \(\mu(\Gamma_q)<\mu(\Gamma)\), so \(\tau(\Gamma_q)\) is already known by
the outer computation order. The value of a compressed child excursion through a
child of type \(\sigma\), entered through a port \(b\), is obtained by applying
the already known Boolean function
$\Phi_b^\sigma$
to the values of the successor states indexed by the labels in \(A_b^\sigma\).

This yields a Boolean function
$\widehat{\Phi}_a^\Gamma:\{0,1\}^{\widehat{A}_a^\Gamma}\to\{0,1\}$.

By Lemma~\ref{lem:compressed-transitions}, the compressed local graph captures
exactly the expanded transitions, and by
Corollary~\ref{cor:compressed-values}, the value of an expanded local state
depends only on its compressed image. Therefore
$\widehat{A}_a^\Gamma=A_a^\Gamma$ and
$\widehat{\Phi}_a^\Gamma=\Phi_a^\Gamma$.

Hence the type
$\tau(\Gamma)$
is correctly computed.

Since we process compressed residual configurations in increasing order of
\(\mu(\Gamma)\), this computation is well founded for every node \(i\).

\smallskip
\noindent
\textbf{Step 4: Bottom-up computation over all non-root nodes.}
We perform the computation above for every non-root node of the rooted tree
partition, processing nodes bottom-up. At each node there are at most
\(n^{f_2(k)}\) compressed residual configurations, and for each such
configuration the corresponding compressed local games have at most
\(n^{f_3(k)}\) states and only a \(k\)-dependent number of valuations.
Therefore the total running time over all non-root nodes is bounded by
$n^{f(k)}$
for some computable function \(f\).

\smallskip
\noindent
\textbf{Step 5: Solving the root bag.}
After all child types have been computed, we solve the compressed root game. A
compressed root state has the form
$R=(z,F,\mathbf m,\pi)$,
where \(z\in X_r\), \(F\subseteq E(G[X_r])\), \(\mathbf m\le \mathbf m_r\), and
\(\pi\in\{\mathsf{same},\mathsf{opp}\}\). The number of such states is again at
most \(n^{g(k)}\) for some computable function \(g\), because \(X_r\) has at
most \(k\) vertices, at most \(\binom{k}{2}\) internal edges, and the
multiplicity vector is indexed by the fixed finite universe
\(\widetilde{\mathcal T}_k\).

As before, the directed graph of compressed root states with arcs given by
internal moves and returning child excursions is acyclic, because each such move
strictly decreases
\[
|F|+\sum_{\sigma\in\mathcal T_k}\operatorname{rk}(\sigma)m_\sigma.
\]
Hence it can be solved by reverse dynamic programming.

The initial root state is
$R_0=(s,E(G[X_r]),\mathbf m_r,\mathsf{same})$,
and Player~1 has a winning strategy in the original instance if and only if
\(R_0\) is winning in this compressed root game.

Therefore \textsc{Undirected Edge Geography} is solvable in time
$n^{f(k)}$
for some computable function \(f\). Hence the problem belongs to XP.
\end{proof}

\begin{remark*}[Space usage]
The same counting argument also yields an XP space bound. For fixed \(k\), every
type has a finite encoding of size depending only on \(k\). At a fixed node, the
table of compressed residual configurations has size at most \(n^{f_1(k)}\), and
for each fixed pair \((\Gamma,a)\), the table of compressed local-state values
has size at most \(n^{f_2(k)}\). Since the number of relevant valuations is
bounded solely in terms of \(k\), the total working memory needed at any stage is
\(n^{g(k)}\) for some computable function \(g\). Hence the algorithm can be
implemented in XP space as well as XP time.
\end{remark*}

\begin{corollary}
\textsc{Directed Edge Geography} on simple graphs, given together with a rooted
tree partition of width $k$, is solvable in XP time parameterized by $k$.
\end{corollary}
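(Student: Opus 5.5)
The plan is to reduce to Theorem~\ref{thm:xp-tpw}. Given a directed simple graph $G$, a start vertex $s$, and a rooted tree partition $(\{X_i\mid i\in I\},T)$ of width $k$, we build the undirected instance $G'$ of Lemma~\ref{lem:dir-to-undir}. Each arc $(u,v)$ is replaced by the constant-size pseudoarc gadget of Fraenkel, Scheinerman and Ullman, whose inner vertices are fresh. By Lemma~\ref{lem:dir-to-undir}(1), Player~1 wins on $(G,s)$ if and only if Player~1 wins on $(G',s)$. So it suffices to turn the given partition into a rooted tree partition of $G'$ whose width is bounded by a function of $k$, and we must also check that $G'$ is simple.

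For the partition, we leave every original vertex in its bag $X_i$ and assign the gadget vertices of each arc to bags as follows. For an arc inside a bag $X_i$, we put its gadget vertices into $X_i$ itself. For an arc crossing between $X_i$ and its parent $X_{p(i)}$, we put its gadget vertices into $X_i$ as well. Every gadget edge then joins two vertices of $X_i$, or a vertex of $X_i$ and a vertex of $X_{p(i)}$, so it lies inside a bag or between adjacent nodes of $T$. The tree $T$ and its root are unchanged, so the root bag still contains $s$.

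For the width bound, note that $X_i$ receives gadgets from at most $\binom{k}{2}$ internal pairs and at most $k^2$ parent-cut pairs. Since $G$ is simple, each unordered pair carries at most two arcs. Each gadget has $c=O(1)$ inner vertices, so the new width is at most $k+c\cdot 2(\binom{k}{2}+k^2)=O(k^2)$. An alternative is to create a fresh child node under $i$ for each crossing gadget. I would avoid this, because then only the endpoint in $X_i$ is adjacent to it, while the endpoint in $X_{p(i)}$ is not. That is why crossing gadgets are absorbed into the child bag.

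The step I expect to need the most care is simplicity of $G'$, because Theorem~\ref{thm:xp-tpw} is stated for simple graphs. I would check that the pseudoarc has no parallel edges; if it had any, I would subdivide them. Subdividing changes parity and path lengths. I would therefore either subdivide twice, or verify directly that the modified gadget still allows safe traversal $u\to v$ and makes entry from $v$ losing. Once $G'$ is simple, we apply Theorem~\ref{thm:xp-tpw} to $G'$ with width $O(k^2)$. The running time is $|V(G')|^{f(O(k^2))}$ with $|V(G')|=O(n+|E(G)|)=O(n^2)$, which is XP in $k$.
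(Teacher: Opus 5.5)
Your proposal is correct and follows the paper's route. You replace each arc by the constant-size gadget of Lemma~\ref{lem:dir-to-undir}, build a rooted tree partition of the resulting undirected graph of width $O(k^2)$, and apply Theorem~\ref{thm:xp-tpw}. The only real difference is how that partition is built. The paper subdivides each tree edge $ij$ by a constant-length path of new bags that hold the gadgets of arcs between $X_i$ and $X_j$. You instead keep $T$ unchanged and absorb each crossing gadget into the child bag, which is equally valid and slightly simpler. Your checks that $G'$ is simple (double subdivision preserves the game) and that $|V(G')|$ is polynomial fill in points the paper leaves implicit.
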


\begin{proof}
Reduce the directed instance to an equivalent instance of
\textsc{Undirected Edge Geography} by replacing each directed edge by the
standard constant-size gadget simulating a directed edge, as discussed in
Lemma~\ref{lem:dir-to-undir}.

Let $(\{X_i\mid i\in I\},T)$ be the given rooted tree partition of width $k$ of
the underlying undirected graph. We construct a rooted tree partition of the
undirected gadget graph as follows.

For each tree edge $ij\in E(T)$, let $A_{ij}$ be the set of directed edges with
one endpoint in $X_i$ and the other in $X_j$. Since $|X_i|,|X_j|\le k$, we have
$|A_{ij}|=O(k^2)$. Replace the tree edge $ij$ by a constant-length path of new
bags, and place the gadget vertices corresponding to edges in $A_{ij}$ into
these new bags so that every gadget edge is either contained in a bag or joins
vertices in adjacent bags. Because each directed edge contributes only
constantly many gadget vertices, every new bag has size $O(k^2)$.

Directed edges whose endpoints both lie in a bag $X_i$ are handled inside $X_i$
or in a constant number of auxiliary bags adjacent to $X_i$; since there are at
most $O(k^2)$ such edges, this also preserves width $O(k^2)$.

Thus the resulting undirected graph admits a rooted tree partition of width
$O(k^2)$, constructible in polynomial time. Applying
Theorem~\ref{thm:xp-tpw} yields an XP algorithm for the directed game.
\end{proof}

\subsection{Small example with nontrivial residual types}

Consider the graph with vertices \(a,b,c,x\) and edges
\(\{a,x\},\{b,x\},\{c,x\},\{b,c\}\) and start vertex 
$s = a$. Assume we are given the rooted tree partition with
\begin{itemize}
    \item root bag $X_r = \{a,b,c\}$,
    \item one child $X_1 = \{x\}$.
\end{itemize}

Let the root labeling be 
$\iota_r(a) = 1, \ \iota_r(b) = 2, \ \iota_r(c) = 3$.

The child $X_1 = \{x\}$ has three ports 
\begin{itemize}
    \item $p_a$ for edge $\{a,x\}$,
    \item $p_b$ for edge $\{b,x\}$,
    \item $p_c$ for edge $\{c,x\}$.
\end{itemize}

So $P_1 = \{p_a,p_b,p_c\}$.

The parent-side labels are: 
\[
    \lambda_1^{\uparrow}(p_a) = 1, \ 
    \lambda_1^{\uparrow}(p_b) = 2, \ 
    \lambda_1^{\uparrow}(p_c) = 3.
\]
and the child-side labels are: 
\[
    \lambda_1^{\downarrow}(p_a) =
    \lambda_1^{\downarrow}(p_b) =
    \lambda_1^{\downarrow}(p_c) = 1,
\]
since all three ports attach to $x$.

Suppose only one port remains, say $p_c$. 

Then the residual configuration is 
$\widehat{\Gamma}^{(c)} = (\{p_c\}, \emptyset, \emptyset)$.

If we enter through $p_c$, the initial local state is 
$S_{p_c}^0 = (x, \emptyset, \emptyset, \emptyset, \mathsf{opp})$.

There are no legal moves from $x$, so the state is terminal, and since 
$\pi = \mathsf{opp}$,
the entering player wins. Thus the type of this one-port residual configuration 
is the $1$-port leaf type at parent label $3$. Let us denote it by 
$\sigma_c$.

Concretely: 
\begin{itemize}
    \item its interface is $(\{p_c\},\lambda_1^\uparrow|_{\{p_c\}})$, 
    \item its only component is $A^{\sigma_c}_{p_c} = \emptyset$, 
    \item and $\Phi_{p_c}^{\sigma_c}$
    is the constant function $1$.
\end{itemize}
Similarly, define 
\begin{itemize}
    \item $\sigma_b$: the one-port leaf type at parent label $2$, 
    \item $\sigma_a$: the one-port leaf type at parent label $1$.
\end{itemize}

\begin{figure}[H]
    \centering
    \includegraphics[width=\textwidth]{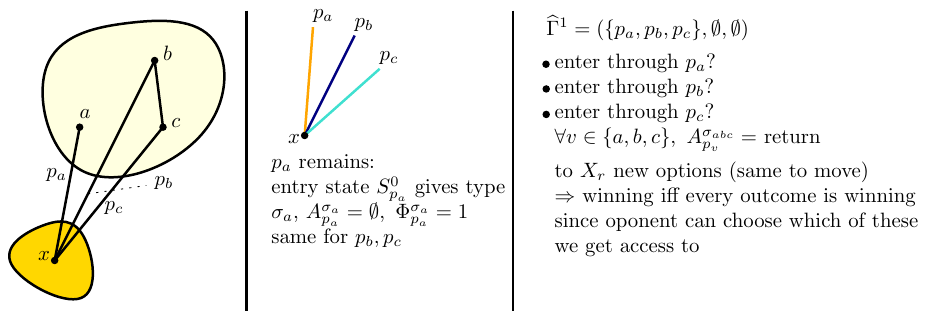}
    \caption{This figure shows, roughly, how we summarize what may happen in 
    $X_1$ then summarizing what may happen on entering from 
    the parent, observing that the parent only cares about 
    the fact that play is forced back up with the same player to move.}
    \label{fig:algoexample}
\end{figure}

Now consider the full child residual configuration 
$\widehat{\Gamma}^1 = (\{p_a,p_b,p_c\}, \emptyset, \emptyset)$.

We want to compute 
$\tau(\widehat{\Gamma}^1) =: \sigma_{abc}$.

If we enter through $p_a$ the initial local state is 
$S^0_{p_a} = (x,\{p_b,p_c\}, \emptyset,\emptyset,\mathsf{opp})$.

At $x$, the player to move is the opponent of the entering player. The only legal moves 
are to exit either through $p_b$ or $p_c$.

If the opponent exits through $p_b$, then the only unused port is $p_c$, so the 
residual type is $\sigma_c$. The resulting label is 
$\ell_{a\to b} := (p_b, \sigma_c, \mathsf{same})$.

Similarly, exiting through $p_c$, leaves 
$\sigma_b$,
so the exit label is 
$\ell_{a \to c} = (p_c, \sigma_b,\mathsf{same})$.

Hence 
$A_{p_a}^{\sigma_{abc}} = \{\ell_{a\to b}, \ell_{a\to c}\}$.

Since the player to move in $S_{p_a}^0$ is the opponent of the entrant, the 
opponent gets to choose between the two exits. Therefore the entrant wins if and 
only if both exits are winning for the entrant.
So 
$\Phi_{p_a}^{\sigma_{abc}}(\nu) = \nu(\ell_{a\to b}) \land \nu(\ell_{a\to c})$.

If you enter through $p_b$ then, by symmetry, the possible residual types are 
$\ell_{b\to a} := (p_a,\sigma_c, \mathsf{same})$,
and 
$\ell_{b\to c} := (p_c, \sigma_a,\mathsf{same})$.

Hence 
$A^{\sigma_{abc}}_{p_b} = \{\ell_{b \to a}, \ell_{b\to c}\}$,
and 
$\Phi_{p_b}^{\sigma_{abc}}(\nu) = \nu(\ell_{b\to a}) \land \nu(\ell_{b\to c})$. 

Similarly, through $p_c$ we get 
$\Phi_{p_c}^{\sigma_{abc}}(\nu) = \nu(\ell_{c\to a}) \land \nu(\ell_{c\to b})$. 

At the root bag 
$X_r = \{a,b,c\}$,
the only internal edge is $\{b,c\}$.
So initially, 
$F = \{\{b,c\}\}$.
    
The root multiplicity vector contains one child of type $\sigma_{abc}$: 
$\textbf{m}_r = \textbf{e}_{\sigma_{abc}}$.

The initial root state is 
$R_0 = (a,\{\{b,c\}\}, \textbf{m}_r,\mathsf{same})$.

At $a$, there is no internal edge, so the only legal move is through 
$p_a$. The type $\sigma_{abc}$ says that the opponent then chooses one of 
two exit labels 
$\ell_{a\to b} = (p_b, \sigma_c,\mathsf{same}), \ 
\ell_{a\to c} = (p_c, \sigma_b, \mathsf{same})$.

So we must evaluate the two resulting states. 

The first root state is 
$R_b = (b,\{\{b,c\}\}, e_{\sigma_c}, \mathsf{same})$.
    
We interpret this as 
\begin{itemize}
    \item token at $b$,
    \item root edge $\{b,c\}$ unused, 
    \item the child $X_1$ has not disappeared; it has become the one-port residual
    child $\sigma_c$, i.e. it can still be entered later from $c$.
\end{itemize}

From $R_b$ Player~1 can only traverse the internal root edge $\{b,c\}$, giving 
$R_b' = (c,\emptyset, e_{\sigma_c}, \mathsf{opp})$.
    
Now it is the opponent's turn at $c$, and the residual child $\sigma_c$ is 
available. We know entering is immediately winning so 
\[
    R_b \ \text{is losing for Player~1}.
\]

Symmetrically we also get 
\[
    R_c = (c, \{\{b,c\}\}, e_{\sigma_b}, \mathsf{same}),
\]
and the only move is along $\{b,c\}$ to 
\[
    R_c' = (b, \emptyset, e_{\sigma_b}, \mathsf{opp}).
\]
Now the opponent has $\sigma_b$ at $b$ and is thus winning. 
So 
\[
    R_c \ \text{is also losing for Player~1}.
\]
So, at the root, the two exit labels of $\sigma_{abc}$ have values 
\[
    \nu(\ell_{a\to b}) = \nu(\ell_{a\to c}) = 0.
\]
Then 
\[
    \Phi_{p_a}^{\sigma_{abc}}(\nu) = 0 \land 0 = 0.
\]
Hence the initial state $R_0$ is losing for Player~1, yielding the correct 
answer.

\newpage 
\section{Prototype implementation and reproducibility}
\label{sec:code}
We provide a prototype Python implementation of the dynamic program from
Section~5.

\noindent
\textbf{Practical limitations.}

\noindent
The implementation is intended as a correctness and reproducibility artifact for
the dynamic program of Section~5, not as a practical solver. Although the
algorithm runs in XP time for fixed width, the hidden dependence on the width
parameter is extremely large, and already moderate values of \(n\) and \(k\)
can lead to prohibitively large running times in practice.

A full experimental artifact, including the random-instance generator, verification routines, benchmarking code, and the XP algorithm itself, is available in the project repository: \href{https://github.com/ThobiasKH/GeographyXPAlgorithm.git}{https://github.com/ThobiasKH/GeographyXPAlgorithm.git}.

% (lstinputlisting) ueg_tree_partition_xp_only.py
\begin{lstlisting}[
    language=Python,
    caption={\texttt{ueg\_tree\_partition\_xp\_only.py}}]
#!/usr/bin/env python3
"""
Minimal implementation of the XP algorithm for Undirected Edge Geography (UEG)
on a graph given together with a rooted tree partition.

This module intentionally omits:
- brute-force minimax,
- random instance generation,
- benchmarking,
- plotting,
- verification harnesses.

It contains only the core data structures and the dynamic programming solver.

Notes
-----
- The input partition must be a *valid rooted tree partition* of the graph.
- For arbitrary graphs, computing a good tree partition is difficult.
- A trivial one-bag partition helper is provided for debugging/small instances.
"""

from __future__ import annotations

import functools
import itertools
from collections import defaultdict
from dataclasses import dataclass
from typing import Dict, FrozenSet, Iterable, List, Optional, Sequence, Set, Tuple


# ---------------------------------------------------------------------------
# Small utilities
# ---------------------------------------------------------------------------


def parity_flip(p: int) -> int:
    """Flip parity: same=0 <-> opp=1."""
    return p ^ 1



def parity_compose(p: int, c: int) -> int:
    """Compose parities. With same=0 and opp=1, composition is xor."""
    return p ^ c



def bit_count(x: int) -> int:
    """Return the number of set bits in x."""
    return x.bit_count()



def edge_key(u: int, v: int) -> Tuple[int, int]:
    """Return the canonical ordered representation of an undirected edge."""
    return (u, v) if u < v else (v, u)



def powerset_masks(size: int) -> Iterable[int]:
    """Yield all bitmasks of length ``size``."""
    for mask in range(1 << size):
        yield mask



def tuple_counter_add(vec: Tuple[int, ...], idx: int, delta: int) -> Tuple[int, ...]:
    """Return ``vec`` with ``delta`` added at coordinate ``idx``."""
    lst = list(vec)
    lst[idx] += delta
    if lst[idx] < 0:
        raise ValueError("negative multiplicity")
    return tuple(lst)


# ---------------------------------------------------------------------------
# Basic graph / partition data
# ---------------------------------------------------------------------------


@dataclass(frozen=True)
class Port:
    """A parent-child cut edge, together with local labels on both sides."""

    edge_index: int
    parent_vertex: int
    child_vertex: int
    parent_label: int
    child_label: int


@dataclass
class NodeData:
    """All information attached to one bag of the rooted tree partition."""

    node_id: int
    vertices: Tuple[int, ...]
    parent: Optional[int]
    children: Tuple[int, ...]
    label_of_vertex: Dict[int, int]
    vertex_of_label: Dict[int, int]
    internal_edge_indices: Tuple[int, ...]
    ports: Tuple[Port, ...]


@dataclass
class GraphWithPartition:
    """A graph together with a rooted tree partition and a start vertex."""

    n: int
    edges: Tuple[Tuple[int, int], ...]
    adjacency: Tuple[Tuple[int, ...], ...]
    edge_to_index: Dict[Tuple[int, int], int]
    bag_of_vertex: Tuple[int, ...]
    nodes: Dict[int, NodeData]
    root: int
    start: int


# ---------------------------------------------------------------------------
# Type system for the DP solver
# ---------------------------------------------------------------------------


# ExitLabel = (exit_port_abs_index, successor_type_id, parity_bit)
# parity_bit: 0 = same, 1 = opposite
ExitLabel = Tuple[int, int, int]


@dataclass(frozen=True)
class EntrySemantics:
    """The semantics of a type when entered through one abstract port."""

    exits: Tuple[ExitLabel, ...]
    truth_table: int


@dataclass(frozen=True)
class TypeSignature:
    """Canonical encoding of one realizable interface type."""

    labels: Tuple[int, ...]
    entries: Tuple[EntrySemantics, ...]


class TypeRegistry:
    """Intern canonical type signatures and assign integer ids."""

    def __init__(self) -> None:
        self._sig_to_id: Dict[TypeSignature, int] = {}
        self._id_to_sig: Dict[int, TypeSignature] = {}
        self._rank: Dict[int, int] = {}
        self.bot_id = self.intern(TypeSignature(labels=(), entries=()))

    def intern(self, sig: TypeSignature) -> int:
        if sig in self._sig_to_id:
            return self._sig_to_id[sig]
        tid = len(self._sig_to_id)
        self._sig_to_id[sig] = tid
        self._id_to_sig[tid] = sig
        self._rank[tid] = len(sig.labels)
        return tid

    def signature(self, tid: int) -> TypeSignature:
        return self._id_to_sig[tid]

    def rank(self, tid: int) -> int:
        return self._rank[tid]

    def eval_entry(self, tid: int, entry_idx: int, assignment: Dict[ExitLabel, bool]) -> bool:
        """
        Evaluate the Boolean function stored at entry ``entry_idx`` of type ``tid``.

        ``assignment`` maps exit labels to truth values from the viewpoint of the
        player who entered the child.
        """
        sem = self._id_to_sig[tid].entries[entry_idx]
        idx = 0
        for i, ex in enumerate(sem.exits):
            if assignment.get(ex, False):
                idx |= 1 << i
        return ((sem.truth_table >> idx) & 1) == 1


# ---------------------------------------------------------------------------
# Instance construction helpers
# ---------------------------------------------------------------------------


def build_graph_with_partition(
    n: int,
    edges: Sequence[Tuple[int, int]],
    bags: Sequence[Sequence[int]],
    parent: Sequence[Optional[int]],
    start: int,
) -> GraphWithPartition:
    """
    Build a ``GraphWithPartition`` from an explicit graph and rooted tree partition.

    Parameters
    ----------
    n:
        Number of vertices. Vertices are assumed to be ``0, ..., n-1``.
    edges:
        Undirected edges of the graph.
    bags:
        ``bags[i]`` is the list of vertices in bag ``i``.
    parent:
        ``parent[i]`` is the parent of bag ``i`` in the rooted tree partition,
        or ``None`` for the root.
    start:
        Start vertex of the UEG instance.

    Returns
    -------
    GraphWithPartition

    Raises
    ------
    ValueError
        If the supplied data do not define a valid rooted tree partition.
    """
    if n <= 0:
        raise ValueError("n must be positive")
    if not (0 <= start < n):
        raise ValueError("start vertex out of range")
    if len(bags) != len(parent):
        raise ValueError("bags and parent must have the same length")
    if not bags:
        raise ValueError("at least one bag is required")

    # Normalize / validate edges.
    norm_edges = tuple(sorted({edge_key(u, v) for (u, v) in edges}))
    for u, v in norm_edges:
        if not (0 <= u < n and 0 <= v < n):
            raise ValueError("edge endpoint out of range")
        if u == v:
            raise ValueError("loops are not supported")

    edge_to_index = {e: idx for idx, e in enumerate(norm_edges)}
    adjacency_lists: List[List[int]] = [[] for _ in range(n)]
    for idx, (u, v) in enumerate(norm_edges):
        adjacency_lists[u].append(idx)
        adjacency_lists[v].append(idx)

    num_bags = len(bags)

    # Validate the rooted bag tree.
    roots = [i for i, p in enumerate(parent) if p is None]
    if len(roots) != 1:
        raise ValueError("parent must specify exactly one root")
    root = roots[0]
    children: Dict[int, List[int]] = defaultdict(list)
    for i, p in enumerate(parent):
        if p is None:
            continue
        if not (0 <= p < num_bags):
            raise ValueError("parent index out of range")
        if p == i:
            raise ValueError("a bag cannot be its own parent")
        children[p].append(i)

    # Check connectivity / acyclicity of the bag tree by DFS from the root.
    seen: Set[int] = set()
    stack = [root]
    while stack:
        cur = stack.pop()
        if cur in seen:
            raise ValueError("parent relation does not define a tree")
        seen.add(cur)
        stack.extend(children[cur])
    if len(seen) != num_bags:
        raise ValueError("parent relation does not span all bags")

    # Validate bag partition of vertices.
    bag_of_vertex = [-1] * n
    for bag_id, bag in enumerate(bags):
        for v in bag:
            if not (0 <= v < n):
                raise ValueError("bag vertex out of range")
            if bag_of_vertex[v] != -1:
                raise ValueError("bags must form a partition of the vertex set")
            bag_of_vertex[v] = bag_id
    if any(x == -1 for x in bag_of_vertex):
        raise ValueError("bags must cover all vertices")

    # Check the tree-partition condition.
    for u, v in norm_edges:
        bu = bag_of_vertex[u]
        bv = bag_of_vertex[v]
        if bu == bv:
            continue
        if parent[bu] != bv and parent[bv] != bu:
            raise ValueError(
                "every edge must be internal to a bag or cross a parent-child cut"
            )

    # Local labelings.
    label_of_vertex_by_bag: Dict[int, Dict[int, int]] = {}
    vertex_of_label_by_bag: Dict[int, Dict[int, int]] = {}
    for i, bag in enumerate(bags):
        ordered_bag = tuple(bag)
        label_of_vertex_by_bag[i] = {v: idx for idx, v in enumerate(ordered_bag)}
        vertex_of_label_by_bag[i] = {idx: v for idx, v in enumerate(ordered_bag)}

    # Build node metadata.
    nodes: Dict[int, NodeData] = {}
    for i, bag in enumerate(bags):
        verts = tuple(bag)

        internal_edge_indices: List[int] = []
        for u, v in itertools.combinations(verts, 2):
            ek = edge_key(u, v)
            if ek in edge_to_index:
                internal_edge_indices.append(edge_to_index[ek])

        ports: List[Port] = []
        p = parent[i]
        if p is not None:
            for u in bags[p]:
                for v in verts:
                    ek = edge_key(u, v)
                    if ek in edge_to_index:
                        ports.append(
                            Port(
                                edge_index=edge_to_index[ek],
                                parent_vertex=u,
                                child_vertex=v,
                                parent_label=label_of_vertex_by_bag[p][u],
                                child_label=label_of_vertex_by_bag[i][v],
                            )
                        )

        nodes[i] = NodeData(
            node_id=i,
            vertices=verts,
            parent=parent[i],
            children=tuple(children[i]),
            label_of_vertex=dict(label_of_vertex_by_bag[i]),
            vertex_of_label=dict(vertex_of_label_by_bag[i]),
            internal_edge_indices=tuple(sorted(internal_edge_indices)),
            ports=tuple(ports),
        )

    # Root should contain the start vertex, as assumed by the root solver.
    if bag_of_vertex[start] != root:
        raise ValueError(
            "the root bag must contain the start vertex; reroot the partition accordingly"
        )

    return GraphWithPartition(
        n=n,
        edges=norm_edges,
        adjacency=tuple(tuple(lst) for lst in adjacency_lists),
        edge_to_index=edge_to_index,
        bag_of_vertex=tuple(bag_of_vertex),
        nodes=nodes,
        root=root,
        start=start,
    )



def trivial_single_bag_partition(
    n: int,
    edges: Sequence[Tuple[int, int]],
    start: int = 0,
) -> GraphWithPartition:
    """
    Fallback partition for debugging: all vertices are placed in one root bag.

    This is always a valid rooted tree partition, but of width ``n``.
    """
    return build_graph_with_partition(
        n=n,
        edges=edges,
        bags=[list(range(n))],
        parent=[None],
        start=start,
    )



def solve_ueg_with_partition(
    n: int,
    edges: Sequence[Tuple[int, int]],
    bags: Sequence[Sequence[int]],
    parent: Sequence[Optional[int]],
    start: int,
) -> bool:
    """Convenience wrapper: build the instance and solve it."""
    instance = build_graph_with_partition(n=n, edges=edges, bags=bags, parent=parent, start=start)
    return TreePartitionUEGSolver(instance).solve()


# ---------------------------------------------------------------------------
# Tree-partition-based solver
# ---------------------------------------------------------------------------


@dataclass
class NodeSolveResult:
    """Summary information computed for one node in the tree partition."""

    realizable_types: Set[int]
    initial_type: int


class TreePartitionUEGSolver:
    """
    Prototype solver implementing the type-compression dynamic program.

    Procedure:
    * process bags bottom-up,
    * summarize each child subtree by a realizable type,
    * compress multisets of child subinstances to multiplicity vectors,
    * evaluate local games by reverse induction on an acyclic state graph.
    """

    def __init__(self, instance: GraphWithPartition):
        self.instance = instance
        self.registry = TypeRegistry()
        self.node_results: Dict[int, NodeSolveResult] = {}

    def solve(self) -> bool:
        """Return True iff Player 1 wins from the start vertex."""
        self._process_node(self.instance.root)
        return self._solve_root()

    def _process_node(self, node_id: int) -> NodeSolveResult:
        if node_id in self.node_results:
            return self.node_results[node_id]

        node = self.instance.nodes[node_id]

        # Process children first.
        child_results: Dict[int, NodeSolveResult] = {}
        for child in node.children:
            child_results[child] = self._process_node(child)

        # The root bag is solved separately at the end.
        if node.parent is None:
            result = NodeSolveResult(realizable_types=set(), initial_type=self.registry.bot_id)
            self.node_results[node_id] = result
            return result

        # Local type universe contributed by the already processed children.
        child_type_options: List[Set[int]] = [child_results[c].realizable_types for c in node.children]
        child_initial_types: List[int] = [child_results[c].initial_type for c in node.children]
        local_type_ids = sorted(set().union(*child_type_options) if child_type_options else set())
        local_type_index = {tid: idx for idx, tid in enumerate(local_type_ids)}

        possible_m_vectors = self._possible_multiplicity_vectors(child_type_options, local_type_index)

        # Full initial multiplicity vector for the original residual configuration at this node.
        initial_counter = [0] * len(local_type_ids)
        for tid in child_initial_types:
            if tid in local_type_index:
                initial_counter[local_type_index[tid]] += 1
        initial_m = tuple(initial_counter)

        port_count = len(node.ports)
        int_edge_count = len(node.internal_edge_indices)

        configs: List[Tuple[int, int, Tuple[int, ...]]] = []
        for U_mask in powerset_masks(port_count):
            for F_mask in powerset_masks(int_edge_count):
                for m_vec in possible_m_vectors:
                    configs.append((U_mask, F_mask, m_vec))

        def cfg_measure(cfg: Tuple[int, int, Tuple[int, ...]]) -> int:
            U_mask, F_mask, m_vec = cfg
            total = bit_count(U_mask) + bit_count(F_mask)
            for idx, cnt in enumerate(m_vec):
                if cnt:
                    total += cnt * self.registry.rank(local_type_ids[idx])
            return total

        configs.sort(key=cfg_measure)

        cfg_to_type: Dict[Tuple[int, int, Tuple[int, ...]], int] = {}
        realizable: Set[int] = set()

        for cfg in configs:
            U_mask, _, _ = cfg
            if bit_count(U_mask) == 0:
                tid = self.registry.bot_id
            else:
                tid = self._compute_config_type(node_id, cfg, local_type_ids, local_type_index, cfg_to_type)
            cfg_to_type[cfg] = tid
            realizable.add(tid)

        full_U = (1 << port_count) - 1
        full_F = (1 << int_edge_count) - 1
        initial_cfg = (full_U, full_F, initial_m)
        result = NodeSolveResult(realizable_types=realizable, initial_type=cfg_to_type[initial_cfg])
        self.node_results[node_id] = result
        return result

    def _possible_multiplicity_vectors(
        self,
        child_type_options: List[Set[int]],
        local_type_index: Dict[int, int],
    ) -> Set[Tuple[int, ...]]:
        """
        Enumerate multiplicity vectors realizable by choosing one realizable type per child.
        """
        zero = tuple(0 for _ in range(len(local_type_index)))
        vectors: Set[Tuple[int, ...]] = {zero}
        for options in child_type_options:
            new_vectors: Set[Tuple[int, ...]] = set()
            for vec in vectors:
                for tid in options:
                    if tid not in local_type_index:
                        continue
                    idx = local_type_index[tid]
                    new_vectors.add(tuple_counter_add(vec, idx, 1))
            vectors = new_vectors
        if not child_type_options:
            return {zero}
        return vectors

    def _compute_config_type(
        self,
        node_id: int,
        cfg: Tuple[int, int, Tuple[int, ...]],
        local_type_ids: List[int],
        local_type_index: Dict[int, int],
        cfg_to_type: Dict[Tuple[int, int, Tuple[int, ...]], int],
    ) -> int:
        """
        Compute the type of one compressed residual configuration at a non-root node.
        """
        node = self.instance.nodes[node_id]
        U_mask, F_mask_full, m_full = cfg

        # Canonically relabel the active ports by 0, ..., |U|-1.
        actual_ports = [pid for pid in range(len(node.ports)) if (U_mask >> pid) & 1]
        abs_of_actual = {pid: idx for idx, pid in enumerate(actual_ports)}
        labels = tuple(node.ports[pid].parent_label for pid in actual_ports)

        entries: List[EntrySemantics] = []

        # Precompute per-bag data for active ports.
        port_child_vertex = {pid: node.ports[pid].child_vertex for pid in actual_ports}
        port_child_label = {pid: node.ports[pid].child_label for pid in actual_ports}

        for entry_actual in actual_ports:
            U_after_entry_mask = U_mask & ~(1 << entry_actual)
            start_vertex = port_child_vertex[entry_actual]
            start_state = (start_vertex, F_mask_full, m_full, 1)  # parity=opp

            @functools.lru_cache(maxsize=None)
            def reachable_exits(state: Tuple[int, int, Tuple[int, ...], int]) -> FrozenSet[ExitLabel]:
                z, F_mask, m_vec, parity = state
                exits: Set[ExitLabel] = set()

                # Direct exit moves.
                for q in actual_ports:
                    if q == entry_actual:
                        continue
                    if not ((U_after_entry_mask >> q) & 1):
                        continue
                    if port_child_label[q] != node.label_of_vertex[z]:
                        continue
                    succ_cfg = (U_after_entry_mask & ~(1 << q), F_mask, m_vec)
                    succ_type = cfg_to_type[succ_cfg]
                    ex = (abs_of_actual[q], succ_type, parity_flip(parity))
                    exits.add(ex)

                # Internal moves.
                for local_e_idx, global_e_idx in enumerate(node.internal_edge_indices):
                    if not ((F_mask >> local_e_idx) & 1):
                        continue
                    u, v = self.instance.edges[global_e_idx]
                    if z not in (u, v):
                        continue
                    z2 = v if z == u else u
                    succ_state = (z2, F_mask & ~(1 << local_e_idx), m_vec, parity_flip(parity))
                    exits.update(reachable_exits(succ_state))

                # Child excursions.
                z_label = node.label_of_vertex[z]
                for sigma_idx, cnt in enumerate(m_vec):
                    if cnt == 0:
                        continue
                    sigma_tid = local_type_ids[sigma_idx]
                    sig = self.registry.signature(sigma_tid)
                    for b_idx, sem in enumerate(sig.entries):
                        if sig.labels[b_idx] != z_label:
                            continue
                        for ex in sem.exits:
                            q_abs_child, tau_tid, c = ex
                            return_label = sig.labels[q_abs_child]
                            z2 = node.vertex_of_label[return_label]
                            new_m = list(m_vec)
                            new_m[sigma_idx] -= 1
                            if tau_tid in local_type_index:
                                new_m[local_type_index[tau_tid]] += 1
                            else:
                                # The only absent successor type allowed locally is the null type.
                                if self.registry.rank(tau_tid) != 0:
                                    raise RuntimeError("Non-null successor type missing from local universe")
                            succ_state = (z2, F_mask, tuple(new_m), parity_compose(parity, c))
                            exits.update(reachable_exits(succ_state))

                return frozenset(exits)

            attainable = tuple(sorted(reachable_exits(start_state)))
            exit_index = {ex: i for i, ex in enumerate(attainable)}

            @functools.lru_cache(maxsize=None)
            def win(state: Tuple[int, int, Tuple[int, ...], int], valuation_bits: int) -> bool:
                z, F_mask, m_vec, parity = state
                move_values: List[bool] = []

                # Exit moves.
                for q in actual_ports:
                    if q == entry_actual:
                        continue
                    if not ((U_after_entry_mask >> q) & 1):
                        continue
                    if port_child_label[q] != node.label_of_vertex[z]:
                        continue
                    succ_cfg = (U_after_entry_mask & ~(1 << q), F_mask, m_vec)
                    succ_type = cfg_to_type[succ_cfg]
                    ex = (abs_of_actual[q], succ_type, parity_flip(parity))
                    idx = exit_index[ex]
                    move_values.append(((valuation_bits >> idx) & 1) == 1)

                # Internal moves.
                for local_e_idx, global_e_idx in enumerate(node.internal_edge_indices):
                    if not ((F_mask >> local_e_idx) & 1):
                        continue
                    u, v = self.instance.edges[global_e_idx]
                    if z not in (u, v):
                        continue
                    z2 = v if z == u else u
                    succ_state = (z2, F_mask & ~(1 << local_e_idx), m_vec, parity_flip(parity))
                    move_values.append(win(succ_state, valuation_bits))

                # Child excursions.
                z_label = node.label_of_vertex[z]
                for sigma_idx, cnt in enumerate(m_vec):
                    if cnt == 0:
                        continue
                    sigma_tid = local_type_ids[sigma_idx]
                    sig = self.registry.signature(sigma_tid)
                    for b_idx, sem in enumerate(sig.entries):
                        if sig.labels[b_idx] != z_label:
                            continue
                        assignment: Dict[ExitLabel, bool] = {}
                        for ex in sem.exits:
                            q_abs_child, tau_tid, c = ex
                            return_label = sig.labels[q_abs_child]
                            z2 = node.vertex_of_label[return_label]
                            new_m = list(m_vec)
                            new_m[sigma_idx] -= 1
                            if tau_tid in local_type_index:
                                new_m[local_type_index[tau_tid]] += 1
                            else:
                                if self.registry.rank(tau_tid) != 0:
                                    raise RuntimeError("Non-null successor type missing from local universe")
                            succ_state = (z2, F_mask, tuple(new_m), parity_compose(parity, c))
                            outer_win = win(succ_state, valuation_bits)
                            assignment[ex] = outer_win if parity == 0 else (not outer_win)
                        child_entrant_wins = self.registry.eval_entry(sigma_tid, b_idx, assignment)
                        move_values.append(child_entrant_wins if parity == 0 else (not child_entrant_wins))

                if not move_values:
                    # Terminal position: current player loses. The entrant wins iff current
                    # player is the opponent of the entrant.
                    return parity == 1
                if parity == 0:
                    return any(move_values)
                return all(move_values)

            table_bits = 0
            for valuation_bits in range(1 << len(attainable)):
                if win(start_state, valuation_bits):
                    table_bits |= 1 << valuation_bits

            entries.append(EntrySemantics(exits=attainable, truth_table=table_bits))

        sig = TypeSignature(labels=labels, entries=tuple(entries))
        return self.registry.intern(sig)

    def _solve_root(self) -> bool:
        """Solve the compressed root game once all child summaries are available."""
        root = self.instance.nodes[self.instance.root]
        if root.parent is not None:
            raise ValueError("Root node expected")

        child_initial_types: List[int] = [self.node_results[c].initial_type for c in root.children]
        child_type_options: List[Set[int]] = [self.node_results[c].realizable_types for c in root.children]

        local_type_ids = sorted(set().union(*child_type_options) if child_type_options else set())
        local_type_index = {tid: idx for idx, tid in enumerate(local_type_ids)}

        m0 = [0] * len(local_type_ids)
        for tid in child_initial_types:
            if tid in local_type_index:
                m0[local_type_index[tid]] += 1
        m0 = tuple(m0)

        full_F = (1 << len(root.internal_edge_indices)) - 1
        start_state = (self.instance.start, full_F, m0, 0)  # same = Player 1 to move

        @functools.lru_cache(maxsize=None)
        def win(state: Tuple[int, int, Tuple[int, ...], int]) -> bool:
            z, F_mask, m_vec, parity = state
            move_values: List[bool] = []

            # Internal moves.
            for local_e_idx, global_e_idx in enumerate(root.internal_edge_indices):
                if not ((F_mask >> local_e_idx) & 1):
                    continue
                u, v = self.instance.edges[global_e_idx]
                if z not in (u, v):
                    continue
                z2 = v if z == u else u
                move_values.append(win((z2, F_mask & ~(1 << local_e_idx), m_vec, parity_flip(parity))))

            # Child excursions.
            z_label = root.label_of_vertex[z]
            for sigma_idx, cnt in enumerate(m_vec):
                if cnt == 0:
                    continue
                sigma_tid = local_type_ids[sigma_idx]
                sig = self.registry.signature(sigma_tid)
                for a_idx, sem in enumerate(sig.entries):
                    if sig.labels[a_idx] != z_label:
                        continue
                    assignment: Dict[ExitLabel, bool] = {}
                    for ex in sem.exits:
                        q_abs_child, tau_tid, c = ex
                        return_label = sig.labels[q_abs_child]
                        z2 = root.vertex_of_label[return_label]
                        new_m = list(m_vec)
                        new_m[sigma_idx] -= 1
                        if tau_tid in local_type_index:
                            new_m[local_type_index[tau_tid]] += 1
                        else:
                            if self.registry.rank(tau_tid) != 0:
                                raise RuntimeError("Non-null successor type missing from root universe")
                        outer_win = win((z2, F_mask, tuple(new_m), parity_compose(parity, c)))
                        assignment[ex] = outer_win if parity == 0 else (not outer_win)
                    child_entrant_wins = self.registry.eval_entry(sigma_tid, a_idx, assignment)
                    move_values.append(child_entrant_wins if parity == 0 else (not child_entrant_wins))

            if not move_values:
                return parity == 1
            if parity == 0:
                return any(move_values)
            return all(move_values)

        return win(start_state)
\end{lstlisting}

\end{document}